\title{Approximating $(k,\ell)$-Median Clustering for Polygonal Curves}
\author{
    Maike Buchin\thanks{
    Faculty of Mathematics,
    Ruhr-University Bochum, Germany,
    \texttt{maike.buchin@rub.de}}
    \and
    Anne Driemel\thanks{
    Hausdorff Center for Mathematics,
    University of Bonn, Germany,
    \texttt{driemel@cs.uni-bonn.de}}
    \and
    Dennis Rohde\thanks{
    Faculty of Mathematics,
    Ruhr-University Bochum, Germany,
    \texttt{dennis.rohde-t1b@rub.de}}
}
\date{\today}
\newtheorem{definition}{Definition}[section]
\crefname{definition}{Definition}{Definitions}
\newtheorem{lemma}[definition]{Lemma}
\crefname{lemma}{Lemma}{Lemmata}
\crefname{proposition}{Proposition}{Propositions}
\newtheorem{theorem}[definition]{Theorem}
\crefname{theorem}{Theorem}{Theorems}
\crefname{observation}{Observation}{Observations}
\newtheorem{corollary}[definition]{Corollary}
\crefname{corollary}{Corollary}{Corollaries}
\DeclarePairedDelimiter{\parens}{\lparen}{\rparen}
\DeclareMathOperator*{\argmin}{arg\,min}
\DeclareMathOperator{\pcost}{cost}
\DeclareMathOperator{\pexpected}{E}
\DeclareMathOperator{\psimpl}{simpl}
\DeclareMathOperator{\ppoly}{poly}
\newcommand{\poly}[1]{\ensuremath{\ppoly\parens*{#1}}}
\newcommand{\expected}[1]{\ensuremath{\pexpected\left[#1\right]}}
\newcommand{\conferre}{cf.~}
\newcommand{\cost}[2]{\ensuremath{\pcost\parens*{#1,#2}}}
\newcommand{\simpl}[2]{\ensuremath{\psimpl\parens*{#1,#2}}}
\let\epsilon\relax
\newcommand{\epsilon}{\varepsilon}
\begin{document}
\maketitle
\thispagestyle{empty}
\setlength{\parindent}{0pt}
\begin{abstract}
    In 2015, Driemel, Krivošija and Sohler introduced the $(k,\ell)$-median problem for clustering polygonal curves under the Fr\'echet distance. Given a set of input curves, the problem asks to find $k$ median curves of at most $\ell$ vertices each that minimize the sum of Fr\'echet distances over all input curves to their closest median curve. A major shortcoming of their algorithm is that the input curves are restricted to lie on the real line.
    In this paper, we present a randomized bicriteria-approximation algorithm that works for polygonal curves in $\mathbb{R}^d$ and achieves approximation factor $(1+\epsilon)$ with respect to the clustering costs. The algorithm has worst-case running-time linear in the number of curves, polynomial in the maximum number of vertices per curve, i.e. their complexity, and exponential in $d$, $\ell$, $\epsilon$ and $\delta$, i.e.,~the failure probability. We achieve this result through a shortcutting lemma, which guarantees the existence of a polygonal curve with similar cost as an optimal median curve of complexity $\ell$, but of complexity at most $2\ell-2$, and whose vertices can be computed efficiently. We combine this lemma with the superset-sampling technique by Kumar et al. to derive our clustering result.
    In doing so, we describe and analyze a generalization of the algorithm by Ackermann et al., which may be of independent interest.
\end{abstract}
\clearpage

\setcounter{page}{1}

\section{Introduction}
Since the development of $k$-means -- the pioneer of modern computational clustering -- the last 65 years have brought a diversity of specialized \cite{SCHAEFFER200727, DBLP:journals/jmlr/Ben-HurHSV01, DBLP:conf/stoc/Har-PeledM04, DBLP:journals/ml/BansalBC04, DBLP:journals/tit/CilibrasiV05,  DBLP:books/sp/16/GuhaM16, DBLP:journals/spm/Vidal11} as well as generalized clustering algorithms \cite{Johnson1967, DBLP:journals/talg/AckermannBS10, DBLP:journals/jmlr/BanerjeeMDG05}. However, in most cases clustering of point sets was studied. Many clustering problems indeed reduce to clustering of point sets, but for sequential data like time-series and trajectories -- which arise in the natural sciences, medicine, sports, finance, ecology, audio/speech analysis, handwriting and many more -- this is not the case. Hence, we need specialized clustering methods for these purposes, \conferre \cite{doi:10.1111/1467-9469.00350, doi:10.1111/j.1467-9868.2007.00605.x, RePEc:spr:jclass:v:22:y:2005:i:2:p:185-201, PETITJEAN201276, PETITJEAN2011678}. 

A promising branch of this active research deals with $(k,\ell)$-center and $(k,\ell)$-median clustering -- adaptions of the well-known Euclidean $k$-center and $k$-median clustering. In $(k,\ell)$-center clustering, respective $(k,\ell)$-median clustering, we are given a set of $n$ polygonal curves in $\mathbb{R}^d$ of complexity (i.e.,~the number of vertices of the curve) at most $m$ each and want to compute $k$ centers that minimize the objective function -- just as in Euclidean $k$-clustering. In addition, the centers are restricted to have complexity at most $\ell$ each to prevent over-fitting -- a problem specific for sequential data. 
A great benefit of regarding the sequential data as polygonal curves is that we introduce an implicit linear interpolation. This does not require any additional storage space since we only need to store the vertices of the curves, which are the sequences at hand. We compare the polygonal curves by their Fr\'echet distance, that is a continuous distance measure which takes the entire course of the curves into account, not only the pairwise distances among their vertices. Therefore, irregular sampled sequences are automatically handled by the interpolation, which is desirable in many cases. 
Moreover,  \citet{DBLP:conf/gis/BuchinDLN19} showed, by using heuristics, that the $(k,\ell)$-clustering objectives yield promising results on trajectory data.

This branch of research formed only recently, about twenty years after \citeauthor{alt_godau} developed an algorithm to compute the Fr\'echet distance between polygonal curves \cite{alt_godau}. Several papers have since studied this type of clustering~\cite{DBLP:conf/soda/DriemelKS16, k_l_center, DBLP:conf/gis/BuchinDLN19, DBLP:conf/swat/BuchinDS20, DBLP:conf/nips/MeintrupMR19}. However, all of these clustering algorithms, except the approximation-schemes for polygonal curves in $\mathbb{R}$ \citep{DBLP:conf/soda/DriemelKS16} and the heuristics in \citep{DBLP:conf/gis/BuchinDLN19}, choose a $k$-subset of the input as centers. (This is also often called \emph{discrete} clustering.) This $k$-subset is later simplified, or all input-curves are simplified before choosing a $k$-subset. 
Either way, using these techniques one cannot achieve an approximation factor of less than $2$. This is because there need not be an input curve with distance to its median which is less than the average distance of a curve to its median.

\citet{DBLP:conf/soda/DriemelKS16}, who were the first to study clustering of polygonal curves under the Fr\'echet distance in this setting, already overcame this problem in one dimension by defining and analyzing $\delta$-signatures, which are succinct representations of classes of curves that allow synthetic center-curves to be constructed. However, it seems that $\delta$-signatures are only  applicable in $\mathbb{R}$. 
Here, we extend their work and obtain the first randomized bicriteria approximation algorithm for $(k,\ell)$-median clustering of polygonal curves in $\mathbb{R}^d$.

\subsection{Related Work}

\citet{DBLP:conf/soda/DriemelKS16} introduced the $(k,\ell)$-center and $(k,\ell)$-median objectives and developed the first approximation-schemes for these objectives, for curves in $\mathbb{R}$. Furthermore, they proved that $(k,\ell)$-center as well as $(k,\ell)$-median clustering is NP-hard, where $k$ is a part of the input and $\ell$ is fixed. Also, they showed that the doubling dimension of the metric space of polygonal curves under the Fr\'echet distance is unbounded, even when the complexity of the curves is bounded.  

Following this work, \citet{k_l_center} developed a constant-factor approximation algorithm for $(k,\ell)$-center clustering in $\mathbb{R}^d$. Furthermore, they provide improved results on the hardness of approximating $(k,\ell)$-center clustering under the Fr\'echet distance: the $(k,\ell)$-center problem is NP-hard to approximate within a factor of $(1.5 - \epsilon)$ for curves in $\mathbb{R}$ and within a factor of $(2.25 - \epsilon)$ for curves in $\mathbb{R}^d$, where $d \geq 2$, in both cases even if $k = 1$.
Furthermore, for the $(k,\ell)$-median variant, \citet{DBLP:conf/swat/BuchinDS20} proved NP-hardness using a similar reduction. Again, the hardness holds even if $k=1$. Also, they provided $(1+\epsilon)$-approximation algorithms for $(k,\ell)$-center, as well as $(k,\ell)$-median clustering, under the discrete Fr\'echet distance. \citet{abhin2020kmedian} give improved algorithms for $(1+\epsilon)$-approximation of $(k,\ell)$-median clustering under discrete Fr\'echet and Hausdorff distance. Recently, \citet{DBLP:conf/nips/MeintrupMR19} introduced a practical $(1+\epsilon)$-approximation algorithm for discrete $k$-median clustering under the Fr\'echet distance, when the input adheres to a certain natural assumption, i.e., the presence of a certain number of outliers.

Our algorithms build upon the clustering algorithm of \citet{10.1109/FOCS.2004.7}, which was later extended by \citet{DBLP:journals/talg/AckermannBS10}. This algorithm is a recursive approximation scheme, that employs two phases in each call. In the so-called \emph{candidate phase} it computes candidates by taking a sample $S$ from the input set $T$ and running an algorithm on each subset of $S$ of a certain size. Which algorithm to use depends on the metric at hand. The idea behind this is simple: if $T$ contains a cluster $T^\prime$ that takes a constant fraction of its size, then a constant fraction of $S$ is from $T^\prime$ with high probability. By brute-force enumeration of all subsets of $S$, we find this subset $S^\prime \subseteq T^\prime$ and if $S$ is taken uniformly and independently at random from $T$ then $S^\prime$ is a uniform and independent sample from $T^\prime$. \citeauthor{DBLP:journals/talg/AckermannBS10} proved for various metric and non-metric distance measures, that $S^\prime$ can be used for computing candidates that contain a $(1+\epsilon)$-approximate median for $T^\prime$ with high probability. The algorithm recursively calls itself for each candidate to eventually evaluate these together with the candidates for the remaining clusters. 

The second phase of the algorithm is the so-called \emph{pruning  phase}, where it partitions its input according to the candidates at hand into two sets of equal size: one with the smaller distances to the candidates and one with the larger distances to the candidates. It then recursively calls itself with the second set as input. The idea behind this is that small clusters now become large enough to find candidates for these. Furthermore, the partitioning yields a provably small error. Finally it returns the set of $k$ candidates that together evaluated best.

\subsection{Our Contributions}

\begin{figure}
    \centering
    \includegraphics[width=0.9\textwidth]{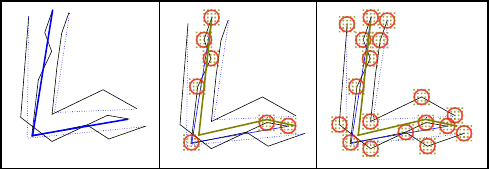}
    \caption{From left to right: symbolic depiction of the operation principle of \cref{alg:1l_median_6,alg:1l_median_3_candidates,alg:1l_median_1_candidates}. Among all approximate $\ell$-simplifications (depicted in blue) of the input curves (depicted in black), \cref{alg:1l_median_6} returns the one that evaluates best (the solid curve) with respect to a sample of the input. \cref{alg:1l_median_3_candidates} does not return a single curve, but a set of candidates. These include the curve returned by \cref{alg:1l_median_6} plus all curves with $\ell$ vertices from the cubic grids, covering balls of certain radius centered at the vertices of an input curve that is close to a median, w.h.p. \cref{alg:1l_median_1_candidates} is similar to \cref{alg:1l_median_3_candidates} but does not only cover the vertices of a single curve, but of multiple curves. We depict the best approximate median that can be generated from the grids in solid green.}
    \label{fig:algorithms_operation_principle}
\end{figure}

We present several algorithms for approximating $(1,\ell)$-median clustering of polygonal curves under the Fr\'echet distance, see \cref{fig:algorithms_operation_principle} for an illustration of the operation principles of our algorithms. While the first one, \cref{alg:1l_median_6}, yields only a coarse approximation (factor 34), it is suitable as plugin for the following two algorithms, \cref{alg:1l_median_3_candidates,alg:1l_median_1_candidates}, due to its asymptotically fast running-time. These algorithms yield a better approximation (factor $3+\epsilon$, respectively  $1+\epsilon$). Additionally, \cref{alg:1l_median_3_candidates,alg:1l_median_1_candidates} are not only able to yield an approximation for the input set $T$, but for a cluster $T^\prime \subseteq T$, that takes a constant fraction of $T$. We would like to use these as plugins to the $(1+\epsilon)$-approximation algorithm for $k$-median clustering by \citet{DBLP:journals/talg/AckermannBS10}, but that would require our algorithms to comply with the sampling properties. For an input set $T$ the weak sampling property expresses that a constant-size set of candidates can be computed, that contains a $(1+\epsilon)$-approximate median for $T$ with high probability, by taking a constant-size uniform and independent sample of $T$. Further, the running-time for computing the candidates depends only on the size of the sample, the size of the candidate set and the failure probability parameter. The strong sampling property is defined similarly, but instead of a candidate set, an approximate median can be computed directly and the running-time may only depend on the size of the sample. In our algorithms, the running-time for computing the candidate set depends on $m$ which is a parameter of the input. Additionally, our first algorithm for computing candidates, which contain a $(3+\epsilon)$-approximate $(1,\ell)$-median with high probability, does not achieve the required approximation-factor of $(1+\epsilon)$. However, looking into the analysis of \citeauthor{DBLP:journals/talg/AckermannBS10}, \emph{any} algorithm for computing candidates, with some guaranteed approximation-factor, can be used in the recursive approximation-scheme. Therefore, we decided to generalize the $k$-median clustering algorithm of \citet{DBLP:journals/talg/AckermannBS10}.

\citet{abhin2020kmedian} use a similar approach, but they developed yet another way to compute candidates: they define and analyze $g$-coverability, which is a generalization of the notion of doubling dimension and indeed, for the discrete Fr\'echet distance the proof builds upon the doubling dimension of points in $\mathbb{R}^d$. However, the doubling dimension of polygonal curves under the Fr\'echet distance is unbounded, even when the complexities of the curves are bounded and it is an open question whether $g$-coverability holds for the continuous Fr\'echet distance. 

We circumvent this by taking a different approach using the idea of shortcutting. It is well-known that shortcutting a polygonal curve (that is, replacing a subcurve by the line segment connecting its endpoints) does not increase its Fr\'echet distance to a line segment. This idea has been used before for a variety of Fr\'echet-distance related problems~\cite{near_linear_curve_simp, DBLP:conf/soda/DriemelKS16, DBLP:journals/siamcomp/DriemelH13,DBLP:journals/comgeo/BuchinBW08}.
Specifically, we introduce two new shortcutting lemmata. These lemmata guarantee the existence of good approximate medians, with complexity at most $2\ell-2$ and whose vertices can be computed efficiently. The first one enables us to return candidates, which contain a $(3+\epsilon)$-approximate median for a cluster inside the input, that takes a constant fraction of the input, w.h.p., and we call it \emph{simple shortcutting}. The second one enables us to return candidates, which contain a $(1+\epsilon)$-approximate median for a cluster inside the input, that takes a constant fraction of the input, w.h.p., and we call it \emph{advanced shortcutting}.
All in all, we obtain as main result, following from \cref{coro:1_approx}:

\begin{theorem}
    Given a set $T$ of $n$ polygonal curves in $\mathbb{R}^d$, of complexity at most $m$ each, parameter values $\epsilon \in (0, 0.158]$ and $\delta \in (0,1)$, and constants $k,\ell \in \mathbb{N}$, there exists an algorithm, which computes a set $C$ of $k$ polygonal curves, each of complexity at most $2\ell-2$, such that with probability at least $(1-\delta)$, it holds that
    \begin{align*}
        \cost{T}{C} = \sum_{\tau \in T} \min_{c \in C} d_F(c,\tau) \leq (1+\epsilon) \sum_{\tau \in T} \min_{c \in C^{\ast}} d_F(c,\tau) = (1+\epsilon) \cost{T}{C^\ast}
    \end{align*}
    where $C^{\ast}$ is an optimal $(k,\ell)$-median solution for $T$ under the Fr\'echet distance~$d_F(\cdot, \cdot)$.
    
    The algorithm has worst-case running-time linear in $n$, polynomial in $m$ and exponential in $\delta,\epsilon, d$ and $\ell$.
\end{theorem}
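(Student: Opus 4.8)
The plan is to assemble the theorem from three ingredients developed earlier in the paper: the advanced shortcutting lemma, the single-cluster candidate-generating algorithm \cref{alg:1l_median_1_candidates} that is built on it, and the generalization of the recursive $k$-median scheme of \citet{DBLP:journals/talg/AckermannBS10}. I would start from the advanced shortcutting lemma, which I expect to state roughly as follows: for any subset $T' \subseteq T$ with optimal $(1,\ell)$-median $c^\ast$, there is a polygonal curve $\sigma$ of complexity at most $2\ell-2$ with $\cost{T'}{\sigma} \le (1+\epsilon)\,\cost{T'}{c^\ast}$, all of whose vertices lie in small, explicitly located neighborhoods of vertices of curves in $T'$. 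The rationale is the standard one for shortcutting: whenever $c^\ast$ leaves the vicinity of the input curves, the corresponding subcurve can be replaced by a line segment without increasing its Fr\'echet distance to any input subcurve it is matched to; the surviving vertices of $c^\ast$ are then snapped to bounded neighborhoods of vertices of the input, and a careful accounting shows the total cost grows by at most a $(1+\epsilon)$ factor rather than a constant factor --- this is exactly what distinguishes ``advanced'' from ``simple'' shortcutting.

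Next I would describe \cref{alg:1l_median_1_candidates}. It first runs the coarse constant-factor algorithm \cref{alg:1l_median_6} to obtain an estimate $r$ of the average radius of the optimal single-cluster solution. It then draws a uniform independent sample $S$ from $T$ of size depending only on $\epsilon$ and $\delta$ and, following the superset-sampling idea of \citet{10.1109/FOCS.2004.7}, iterates over all subsets of $S$ of an appropriate constant size; for each such subset it places axis-parallel cubic grids, of side length a suitable $\epsilon$-dependent fraction of $r$, over balls of radius proportional to $r$ centered at the vertices of the sampled curves, and enumerates all polygonal curves of complexity at most $2\ell-2$ whose vertices are taken from the union of these grids. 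This yields a candidate set whose size is bounded as a function of $\epsilon$, $\delta$ (and $d$) only. Correctness w.h.p.\ follows by combining three facts: (i) with constant probability, one of the enumerated subsets of $S$ is itself a uniform independent sample of the dominant subcluster $T'$; (ii) the estimate $r$ is within a constant factor of the optimal average radius of $T'$, so the balls around the sampled vertices contain the neighborhoods in which the advanced-shortcutting curve $\sigma$ for $T'$ has its vertices; and (iii) the grid resolution is fine enough that some enumerated curve is within Fr\'echet distance $O(\epsilon)\cdot\cost{T'}{c^\ast}/|T'|$ of $\sigma$, so a $(1+O(\epsilon))$-approximate $(1,\ell)$-median of $T'$ lies in the candidate set. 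The concentration bounds here are the promised generalization of the analysis of \citeauthor{DBLP:journals/talg/AckermannBS10}, which lets one certify a candidate by its cost on the sample alone.

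With \cref{alg:1l_median_1_candidates} satisfying this generalized weak sampling property --- candidate set of constant size, running time for generating it polynomial in $m$ and exponential in $1/\epsilon$ and $1/\delta$, approximation factor $1+\epsilon$ --- the rest is to feed it into the generalized recursive scheme: in each recursive call the candidate phase uses \cref{alg:1l_median_1_candidates} to produce medians for the currently dominant cluster and recurses on each candidate together with the remaining clusters, while the pruning phase halves the current input by distance to the chosen candidates and recurses so that small clusters eventually become dominant; the error accumulated over the recursion is controlled exactly as in \cref{coro:1_approx}, whose conclusion is the bound in the statement. The overall approximation factor $(1+\epsilon)$ then follows after rescaling $\epsilon$ by the fixed constant hidden in the $O(\cdot)$'s above --- which is why the admissible range is $\epsilon \in (0,0.158]$ --- and the $(1-\delta)$ success probability follows from a union bound over the constantly many recursive calls. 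For the running time, every input curve participates in only a constant number of cost evaluations, each being a Fr\'echet-distance computation between a candidate of complexity $O(\ell)=O(1)$ and an input curve of complexity $O(m)$, which costs $\poly{m}$; the number of candidates and the recursion depth are bounded by functions of $k$, $\epsilon$, and $\delta$ only. Hence the total is linear in $n$, polynomial in $m$, and exponential in $1/\epsilon$ and $1/\delta$.

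The crux --- and the step I expect to be the main obstacle --- is the advanced shortcutting lemma together with the constant bound on the candidate count. One must simultaneously ensure that shortcutting plus vertex-snapping costs only a $(1+\epsilon)$ factor and not merely a constant, that the vertices of the shortcut curve provably lie in a search region small enough to be covered by a constant number of grid points --- this is delicate precisely because the Fr\'echet metric on bounded-complexity curves has unbounded doubling dimension, so the reduction to a bounded search space has to be extracted from the shortcutting structure rather than from a generic covering argument --- and that the coarse radius estimate from \cref{alg:1l_median_6} is accurate enough to calibrate the grid radii without inflating the candidate count. Once these are in place, the sampling concentration, the recursion, and the running-time bookkeeping are a fairly mechanical adaptation of the Kumar et al.\ and Ackermann et al.\ machinery.
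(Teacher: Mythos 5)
Your plan follows the paper's own route exactly: the advanced shortcutting lemma (\cref{lem:shortcutting_multiple}), the candidate generator \cref{alg:1l_median_1_candidates} based on superset sampling and grids calibrated by the coarse median from \cref{alg:1l_median_6}, and the generalized recursive scheme \cref{alg:kl_median}, assembled as in \cref{coro:1_approx}. Two of the facts you assert along the way, however, are not available in the form you use them. First, the candidate set produced by \cref{alg:1l_median_1_candidates} is \emph{not} of size depending only on $\epsilon$, $\delta$ and $d$: the grids are placed around up to $m$ vertices of each sampled curve and all $(2\ell-2)$-tuples of grid points are enumerated, so both the number of candidates and the time to generate them are polynomial in $m$ (\conferre \cref{theo:1l_median_1_running_time}). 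This is precisely why the paper cannot invoke the weak sampling property of Ackermann et al.\ and instead proves \cref{theo:kl_median_guarantee} for an arbitrary candidate-generating plugin. Your final running-time claim survives (for constant $k$ the factor $C(n,\beta,\delta,\epsilon)^{k+2}$ is still $\poly{m}$), but the ``constant number of cost evaluations per input curve'' bookkeeping is wrong as stated.

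Second, and more substantively, your step (ii) --- that the coarse estimate $r$ is within a constant factor of the optimal average radius of $T^\prime$ --- cannot be guaranteed: \cref{alg:1l_median_6} is run on the sub-sample $S^\prime$, and $\cost{S^\prime}{c}$ may badly underestimate $\cost{T^\prime}{c^\ast}/\lvert T^\prime\rvert$, in which case the balls you cover with grids are too small to contain the vertices of the shortcut curve and the enumeration misses it. The paper repairs this by always adding the coarse median $c$ itself to the candidate set and running the case distinction borrowed from \citet{abhin2020kmedian}: either sufficiently many curves of $T^\prime$ are far from $c$, in which case w.h.p.\ some sampled curve certifies the lower bound $\cost{S^\prime}{c} > \epsilon\,\cost{T^\prime}{c^\ast}/\lvert T^\prime\rvert$ that calibrates the ball radii, or almost all of $T^\prime$ lies close to $c$, in which case $c$ is itself a $(1+O(\epsilon))$-approximate median; the contradiction argument that rules out the remaining sub-case is exactly what forces $\epsilon \le 0.158$, not a generic rescaling of constants as you suggest. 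Relatedly, the content of \cref{lem:shortcutting_multiple} that makes the sampling argument work is its hitting-set structure --- the $2\ell-4$ subsets of size $\epsilon\lvert T^\prime\rvert/(2\ell)$, obtained by routing each shortcut through the $\epsilon n/(2\ell)$-th ball entered before and after the removed vertex rather than the last one, which preserves the distances to all but an $\epsilon$-fraction of the curves --- and this mechanism, which you correctly identify as the crux, is not supplied in your proposal, so the $(1+\epsilon)$ bound in your version of the lemma remains unproven.
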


\subsection{Organization}

The paper is organized as follows.
 First we present a simple and fast $34$-approximation algorithm for $(1,\ell)$-median clustering.
 Then, we present the $(3+\epsilon)$-approximation algorithm for $(1,\ell)$-median clustering of a cluster inside the input, that takes a constant fraction of the input, which builds upon simple shortcutting and the $34$-approximation algorithm. Then, we present a more practical modification of the $(3+\epsilon)$-approximation algorithm, which achieves a $(5+\epsilon)$-approximation for $(1,\ell)$-median clustering. Following this, we present the similar but more involved $(1+\epsilon)$-approximation algorithm for $(1,\ell)$-median clustering of a cluster inside the input, that takes a constant fraction of the input, which builds upon the advanced shortcutting and the $34$-approximation algorithm.
 Finally we present the generalized recursive $k$-median approximation-scheme, which leads to our main result.

\section{Preliminaries}

Here we introduce all necessary definitions. In the following $d \in \mathbb{N}$ is an arbitrary constant. By $\lVert \cdot \rVert$ we denote the Euclidean norm and for $p \in \mathbb{R}^d$ and $r \in \mathbb{R}_{\geq 0}$ we denote by $B(p,r) = \{ q \in \mathbb{R}^d \mid \lVert p - q \rVert \leq r\}$ the closed ball of radius $r$ with center $p$. By $S_n$ we denote the symmetric group of degree $n$. We give a standard definition of grids:

\begin{definition}[grid]
    \label{def:grid}
    Given a number $r \in \mathbb{R}_{>0}$, for $(p_1, \dots, p_d) \in \mathbb{R}^d$ we define by $G(p,r) = (\lfloor p_1 / r \rfloor \cdot r, \dots, \lfloor p_d / r \rfloor \cdot r)$ the $r$-grid-point of $p$. Let $X \subseteq \mathbb{R}^d$ be a subset of $\mathbb{R}^d$. The grid of cell width $r$ that covers $X$ is the set $\mathbb{G}(X,r) = \{G(p,r) \mid p \in X\}$.
\end{definition}
Such a grid partitions the set $X$ into cubic regions and for each $r \in \mathbb{R}_{>0}$ and $p \in X$ we have that $\lVert p - G(p,r) \rVert \leq \frac{\sqrt{d}}{2} r$. We give a standard definition of polygonal curves:

\begin{definition}[polygonal curve]
    \label{def:polygonal_curve}
	A (parameterized) curve is a continuous mapping $\tau \colon [0,1] \rightarrow \mathbb{R}^d$. A curve $\tau$ is polygonal, iff there exist $v_1, \dots, v_m \in \mathbb{R}^d$, no three consecutive on a line, called $\tau$'s vertices and $t_1, \dots, t_m \in [0,1]$ with $t_1 < \dots < t_m$, $t_1 = 0$ and $t_m = 1$, called $\tau$'s instants, such that $\tau$ connects every two contiguous vertices $v_i = \tau(t_i), v_{i+1} = \tau(t_{i+1})$ by a line segment.
\end{definition}
We call the line segments $\overline{v_1v_2}, \dots, \overline{v_{m-1}v_m}$ the edges of $\tau$ and $m$ the complexity of $\tau$, denoted by $\lvert \tau \rvert$. Sometimes we will argue about a sub-curve $\tau$ of a given curve $\sigma$. We will then refer to $\tau$ by restricting the domain of $\sigma$, denoted by $\sigma\vert_X$, where $X \subseteq [0,1]$. 

\begin{definition}[Fr\'echet distance]
    \label{def:frechet_distance}
    Let $\mathcal{H}$ denote the set of all continuous bijections $h\colon [0,1] \rightarrow [0,1]$ with $h(0) = 0$ and $h(1) = 1$, which we call reparameterizations.
    The Fr\'echet distance between curves $\sigma$ and $\tau$ is defined as \[d_F(\sigma, \tau)\ =\ \inf_{h \in \mathcal{H}}\  \max_{t \in [0,1]}\ \lVert \sigma(t) - \tau(h(t)) \rVert. \]
\end{definition}
Sometimes, given two curves $\sigma, \tau$, we will refer to an $h \in \mathcal{H}$ as matching between $\sigma$ and $\tau$.

Note that there must not exist a matching $h \in \mathcal{H}$, such that $\max_{t \in [0,1]} \lVert \sigma(t) - \tau(h(t)) \rVert = d_F(\sigma, \tau)$. This is due to the fact that in some cases a matching realizing the Fr\'echet distance would need to match multiple points $p_1, \dots, p_n$ on $\tau$ to a single point $q$ on $\sigma$, which is not possible since matchings need to be bijections, but the $p_1, \dots, p_n$ can get matched arbitrarily close to $q$, realizing $d_F(\sigma, \tau)$ in the limit, which we formalize in the following lemma:

\begin{lemma}
    \label{lem:sequence}
    Let $\sigma, \tau \colon [0,1] \rightarrow \mathbb{R}^d$ be curves. Let $r = d_F(\sigma, \tau)$. There exists a sequence $(h_i)_{i=1}^\infty$ in $\mathcal{H}$, such that $\lim\limits_{i \to \infty} \max\limits_{t \in [0,1]} \lVert \sigma(t) - \tau(h_i(t)) \rVert = r$.
\end{lemma}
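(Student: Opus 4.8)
The plan is to observe that the statement is essentially an unwinding of the definition of the Fr\'echet distance as an infimum, together with the (trivial but necessary) remark that the inner maximum is actually attained.

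First I would fix notation: for a reparameterization $h \in \mathcal{H}$, write $F(h) = \max_{t \in [0,1]} \lVert \sigma(t) - \tau(h(t)) \rVert$. This maximum exists because $\sigma$, $\tau$ and $h$ are continuous, so $t \mapsto \lVert \sigma(t) - \tau(h(t)) \rVert$ is a continuous real-valued function on the compact interval $[0,1]$, and hence attains its supremum. Thus $F \colon \mathcal{H} \to \mathbb{R}_{\geq 0}$ is well defined and, by \cref{def:frechet_distance}, $r = d_F(\sigma,\tau) = \inf_{h \in \mathcal{H}} F(h)$. Note also that $\mathcal{H}$ is nonempty, since the identity map $\mathrm{id}_{[0,1]}$ lies in $\mathcal{H}$, and that $F(\mathrm{id}_{[0,1]})$ is finite because a continuous curve on $[0,1]$ is bounded; hence $r$ is a finite nonnegative real number, so $r + \tfrac{1}{i}$ makes sense for every $i \in \mathbb{N}$.

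Next I would invoke the $\varepsilon$-characterization of the infimum: since $r = \inf_{h \in \mathcal{H}} F(h)$, the value $r + \tfrac{1}{i}$ is not a lower bound of $\{\, F(h) \mid h \in \mathcal{H} \,\}$, so there is some $h_i \in \mathcal{H}$ with $F(h_i) < r + \tfrac{1}{i}$. On the other hand $r$ is a lower bound, so $F(h_i) \geq r$. Choosing such an $h_i$ for each $i \in \mathbb{N}$ yields a sequence $(h_i)_{i=1}^{\infty}$ in $\mathcal{H}$ with $r \leq F(h_i) < r + \tfrac{1}{i}$ for all $i$, and letting $i \to \infty$ the squeeze theorem gives $\lim_{i\to\infty} F(h_i) = r$, which is exactly the claimed statement.

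I do not expect any genuine obstacle here; the only point that requires care is that the inner $\max$ in the definition of $F$ is genuinely a maximum and not merely a supremum, which is precisely where the continuity of $\sigma$, $\tau$ and $h$ and the compactness of $[0,1]$ enter. Everything else is the standard fact that an infimum is approached by a sequence of values from below.
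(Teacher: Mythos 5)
Your proposal is correct and follows essentially the same route as the paper's proof: both arguments simply choose, for each $i$, a reparameterization whose realized distance lies within $1/i$ of the infimum and let $i \to \infty$. Your additional remarks that the inner maximum is attained (by continuity and compactness) and that $r$ is finite are sound and slightly more careful than the paper's write-up, but they do not change the argument.
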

\begin{proof}
    Define $\rho \colon \mathcal{H} \rightarrow \mathbb{R}_{\geq 0}, h \mapsto \max\limits_{t\in [0,1]} \lVert \sigma(t) - \tau(h(t)) \rVert$ with image $R = \{ \rho(h) \mid h \in \mathcal{H} \}$. Per definition, we have $d_F(\sigma, \tau) = \inf R = r$.
    
    For any non-empty subset $X$ of $\mathbb{R}$ that is bounded from below and for every $\epsilon > 0$ it holds that there exists an $x \in X$ with $\inf X \leq x < \inf X + \epsilon$, by definition of the infimum. Since $R \subseteq \mathbb{R}$ and $\inf R$ exists, for every $\epsilon > 0$ there exists an $r^\prime \in R$ with $\inf R \leq r^\prime < \inf R + \epsilon$. 
    
    Now, let $a_i = 1/i$ be a zero sequence. For every $i \in \mathbb{N}$ there exists an $r_i \in R$ with $r \leq r_i < r + a_i$, thus $\lim\limits_{i \to \infty} r_i = r$. 
    
    Let $\rho^{-1}(r^\prime) = \{ h \in \mathcal{H} \mid \rho(h) = r^\prime \}$ be the preimage of $\rho$. Since $\rho$ is a function, $\lvert \rho^{-1}(r^\prime) \rvert \geq 1$ for each $r^\prime \in R$. Now, for $i \in \mathbb{N}$, let $h_i$ be an arbitrary element from $\rho^{-1}(r_i)$. By definition it holds that \[ \lim\limits_{i \to \infty} \max\limits_{t \in [0,1]} \lVert \sigma(t) - \tau(h_i(t)) \rVert = \lim_{i \to \infty} \rho(h_i) = \lim_{i \to \infty} r_i = r = \inf R, \] which proves the claim.
\end{proof}

Now we introduce the classes of curves we are interested in.

\begin{definition}[polygonal curve classes]
    For $d \in \mathbb{N}$, we define by $\mathbb{X}^d$ the equivalence class of polygonal curves (where two curves are equivalent, iff they can be made identical by a reparameterization) in ambient space $\mathbb{R}^d$. For $m \in \mathbb{N}$ we define by $\mathbb{X}^d_m$ the subclass of polygonal curves of complexity at most $m$.
\end{definition}

Simplification is a fundamental problem related to curves and which appears as sub-problem in our algorithms.

\begin{definition}[minimum-error $\ell$-simplification]
    For a polygonal curve $\tau \in \mathbb{X}^d$ we denote by $\simpl{\alpha}{\tau}$ an $\alpha$-approximate minimum-error $\ell$-simplification of $\tau$, i.e., a curve $\sigma \in \mathbb{X}^d_\ell$ with $d_F(\tau, \sigma) \leq \alpha \cdot d_F(\tau, \sigma^\prime)$ for all $\sigma^\prime \in \mathbb{X}^d_\ell$.
\end{definition}

Now we define the $(k,\ell)$-median clustering problem for polygonal curves.

\begin{definition}[$(k,\ell)$-median clustering]
    The $(k,\ell)$-median clustering problem is defined as follows, where $k,l \in \mathbb{N}$ are fixed (constant) parameters of the problem: given a finite and non-empty set $T \subset \mathbb{X}^d_m$ of polygonal curves, compute a set of $k$ curves $C^\ast \subset \mathbb{X}^d_\ell$, such that $\cost{T}{C^\ast} = \sum\limits_{\tau \in T} \min\limits_{c^\ast \in C^\ast} d_F(\tau, c^\ast)$ is minimal.
\end{definition}

We call $\cost{\cdot}{\cdot}$ the objective function and we often write $\cost{T}{c}$ as shorthand for $\cost{T}{\{c\}}$. The following theorem of Indyk~\cite{Indyk00} is useful for evaluating the cost of a curve at hand.

\begin{theorem}{\citep[Theorem 31]{Indyk00}}
	\label{theo:indyk_median}
	Let $\epsilon \in (0,1]$ and $T \subset \mathbb{X}^d$ be a set of polygonal curves. Further let $W$ be a non-empty sample, drawn uniformly and independently at random from $T$, with replacement. For $\tau, \sigma \in T$ with $\cost{T}{\tau} > (1+\epsilon) \cost{T}{\sigma}$ it holds that $\Pr[\cost{W}{\tau} \leq \cost{W}{\sigma}] < \exp\left( - {\epsilon^2 \lvert W \rvert}/{64} \right)$.
\end{theorem}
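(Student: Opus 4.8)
The plan is to recast $\{\cost{W}{\tau}\le\cost{W}{\sigma}\}$ as a one-sided deviation of a sum of bounded random variables from its mean and then apply Hoeffding's inequality; the only step that is not routine is a lower bound on the ``per-curve cost gap'' in terms of $\epsilon$ and $r:=d_F(\sigma,\tau)$. First, we may assume $r>0$: if $r=0$ then $\lvert d_F(\tau,\kappa)-d_F(\sigma,\kappa)\rvert\le r=0$ for every $\kappa\in T$, so $\cost{T}{\tau}=\cost{T}{\sigma}$, contradicting $\cost{T}{\tau}>(1+\epsilon)\cost{T}{\sigma}$ (costs are nonnegative and $\epsilon>0$). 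For $\kappa\in W$ put $Z_\kappa:=d_F(\tau,\kappa)-d_F(\sigma,\kappa)$; by the triangle inequality for $d_F$ we have $Z_\kappa\in[-r,r]$, and, $W$ being a uniform sample, each element of $W$ is marginally uniform over $T$, so $\expected{Z_\kappa}=\mu$ where $\mu:=\tfrac1{\lvert T\rvert}\bigl(\cost{T}{\tau}-\cost{T}{\sigma}\bigr)$; hence $\expected{\sum_{\kappa\in W}Z_\kappa}=\lvert W\rvert\,\mu$. Since $\cost{W}{\tau}-\cost{W}{\sigma}=\sum_{\kappa\in W}Z_\kappa$, the event in question is exactly $\bigl\{\sum_{\kappa\in W}Z_\kappa\le 0\bigr\}$, i.e.\ a downward deviation of at least $\lvert W\rvert\,\mu$ from the mean.

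The crux is the inequality $\mu\ge\frac{\epsilon r}{2+\epsilon}$ (hence $\mu\ge\frac{\epsilon r}{3}$ for $\epsilon\in(0,1]$). Summing the triangle inequality $d_F(\sigma,\tau)\le d_F(\sigma,\kappa)+d_F(\tau,\kappa)$ over all $\kappa\in T$ gives $\cost{T}{\sigma}+\cost{T}{\tau}\ge\lvert T\rvert\,r$, and the hypothesis gives $\cost{T}{\sigma}<\tfrac1\epsilon\bigl(\cost{T}{\tau}-\cost{T}{\sigma}\bigr)$. Writing $\cost{T}{\sigma}+\cost{T}{\tau}=2\,\cost{T}{\sigma}+\bigl(\cost{T}{\tau}-\cost{T}{\sigma}\bigr)$ and substituting the bound on $\cost{T}{\sigma}$ yields $\bigl(\tfrac2\epsilon+1\bigr)\bigl(\cost{T}{\tau}-\cost{T}{\sigma}\bigr)>\lvert T\rvert\,r$, i.e.\ $\mu>\frac{\epsilon r}{2+\epsilon}$. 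I expect this to be the only genuinely delicate point; it is precisely what makes the final bound independent of $\lvert T\rvert$, of the magnitude of the costs, and of the diameter of $T$, in line with the scale-invariance of the statement.

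It remains to apply Hoeffding's inequality to $\sum_{\kappa\in W}Z_\kappa$, a sum of $\lvert W\rvert$ terms each contained in an interval of length $2r$ (the bound holds for sampling with replacement and, by Hoeffding's original argument, also without). It gives
\[
\Pr\bigl[\cost{W}{\tau}\le\cost{W}{\sigma}\bigr]=\Pr\Bigl[\textstyle\sum_{\kappa\in W}Z_\kappa\le 0\Bigr]\le\exp\!\left(-\frac{\lvert W\rvert\,\mu^{2}}{2r^{2}}\right),
\]
and substituting $\mu\ge\epsilon r/3$ bounds the right-hand side by $\exp\bigl(-\epsilon^{2}\lvert W\rvert/18\bigr)<\exp\bigl(-\epsilon^{2}\lvert W\rvert/64\bigr)$, using $\lvert W\rvert\ge 1$ and $\epsilon>0$; this is the claim. (Retaining $\tfrac{\epsilon r}{2+\epsilon}$ rather than $\tfrac{\epsilon r}{3}$ would give the sharper exponent $-\epsilon^{2}\lvert W\rvert/\bigl(2(2+\epsilon)^{2}\bigr)$, so the stated constant $64$ is met with room to spare.)
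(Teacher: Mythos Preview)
The paper does not actually prove this statement; it is quoted verbatim as \cite[Theorem 31]{Indyk00} and used as a black box, so there is no in-paper proof to compare against.

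That said, your argument is correct and is essentially the standard proof of Indyk's result specialized to the Fr\'echet distance. The three ingredients---(i) the range bound $Z_\kappa\in[-r,r]$ from the triangle inequality, (ii) the gap estimate $\mu>\epsilon r/(2+\epsilon)$ obtained by combining $\cost{T}{\sigma}+\cost{T}{\tau}\ge |T|\,r$ with the hypothesis $\cost{T}{\tau}>(1+\epsilon)\cost{T}{\sigma}$, and (iii) Hoeffding's inequality---are all sound, and the passage to the constant $64$ from your sharper $18$ is valid since $\epsilon^2|W|>0$. Your remark that Hoeffding's bound also covers sampling without replacement is appropriate, as the paper is not explicit about the sampling mode in the theorem statement (though its applications use i.i.d.\ sampling with replacement).
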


The following concentration bound also applies to independent Bernoulli trials, which are a special case of Poisson trials where each trial has same probability of success. \citet{10.1109/FOCS.2004.7} use this to bound the probability that a subset $S^\prime$ of an independent and uniform sample $S$ from a set $T$ is entirely contained in a subset $T^\prime$ of $T$. They call it superset-sampling.

\begin{lemma}[Chernoff bound for independent Poisson trials]{\citep[Theorem 4.5]{probability_and_computing}}
    \label{lem:bernoulli_trial_bound}
    Let $X_1, \dots, X_n$ be independent Poisson trials. For $\delta \in (0,1)$ it holds that \[ \Pr\left[\sum_{i=1}^n X_i \leq (1-\delta) \expected{\sum_{i=1}^n X_i} \right] \leq \exp\left( - \frac{\delta^2}{2} \expected{\sum_{i=1}^n X_i} \right). \]
\end{lemma}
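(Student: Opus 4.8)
The plan is to prove the lower-tail bound by the standard exponential-moment (Chernoff) method and then reduce the resulting expression to the stated Gaussian-type bound with an elementary calculus estimate. Write $p_i = \Pr[X_i = 1]$ and $\mu = \expected{\sum_{i=1}^n X_i} = \sum_{i=1}^n p_i$; we may assume $\mu > 0$, since otherwise every $X_i$ is identically $0$ and the bound holds trivially. Fix any $t > 0$. Since $x \mapsto e^{-tx}$ is positive and strictly decreasing, the event $\sum_i X_i \le (1-\delta)\mu$ coincides with $e^{-t \sum_i X_i} \ge e^{-t(1-\delta)\mu}$, so Markov's inequality gives
\[ \Pr\Bigl[\textstyle\sum_{i=1}^n X_i \le (1-\delta)\mu\Bigr] \;\le\; e^{t(1-\delta)\mu}\,\expected{e^{-t\sum_{i=1}^n X_i}}. \]

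Next I would exploit independence. Because the $X_i$ are independent, $\expected{e^{-t\sum_{i=1}^n X_i}} = \prod_{i=1}^n \expected{e^{-tX_i}}$, and for each $i$ we have $\expected{e^{-tX_i}} = (1-p_i) + p_i e^{-t} = 1 + p_i(e^{-t}-1)$. Applying the inequality $1 + x \le e^{x}$ with $x = p_i(e^{-t}-1)$ and multiplying over $i$ yields $\expected{e^{-t\sum_{i=1}^n X_i}} \le \exp\bigl(\mu(e^{-t}-1)\bigr)$, hence
\[ \Pr\Bigl[\textstyle\sum_{i=1}^n X_i \le (1-\delta)\mu\Bigr] \;\le\; \exp\bigl(\mu\,(e^{-t} - 1 + t(1-\delta))\bigr). \]
This holds for every $t>0$, so I would choose $t$ to minimize the exponent. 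The function $t \mapsto e^{-t} - 1 + t(1-\delta)$ has derivative $-e^{-t} + (1-\delta)$, which vanishes at $t^\star = \ln\frac{1}{1-\delta}$; note $t^\star > 0$ precisely because $\delta \in (0,1)$, so this choice is admissible. Substituting $e^{-t^\star} = 1-\delta$ gives exponent $\mu\bigl(-\delta - (1-\delta)\ln(1-\delta)\bigr)$, i.e.
\[ \Pr\Bigl[\textstyle\sum_{i=1}^n X_i \le (1-\delta)\mu\Bigr] \;\le\; \left(\frac{e^{-\delta}}{(1-\delta)^{\,1-\delta}}\right)^{\!\mu}. \]

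The final and most delicate step is the elementary inequality $-\delta - (1-\delta)\ln(1-\delta) \le -\delta^2/2$ for all $\delta \in (0,1)$, which upgrades the bound above to $\exp(-\mu\delta^2/2)$ and completes the proof. I would prove it by setting $g(\delta) = \delta + (1-\delta)\ln(1-\delta) - \delta^2/2$ and showing $g \ge 0$ on $[0,1)$: one has $g(0) = 0$ and $g'(\delta) = -\ln(1-\delta) - \delta$, which is nonnegative on $[0,1)$ since $-\ln(1-\delta) = \sum_{j \ge 1} \delta^j / j \ge \delta$; monotonicity then gives $g(\delta) \ge g(0) = 0$. The only genuine obstacle in the whole argument is this last estimate — the exponential-moment computation is routine — and it is handled cleanly by the series expansion of the logarithm; one must only take care that the optimizing $t^\star$ is strictly positive, which is exactly where the hypothesis $\delta < 1$ (and $\delta > 0$ for a nontrivial statement) is used.
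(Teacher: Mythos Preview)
Your proof is correct and is the standard exponential-moment argument for the lower-tail Chernoff bound; every step checks out, including the optimization at $t^\star = \ln\frac{1}{1-\delta}$ and the calculus estimate $g(\delta) \ge 0$. Note, however, that the paper does not actually prove this lemma: it is simply quoted from \cite[Theorem~4.5]{probability_and_computing} and used as a black box, so there is no paper proof to compare against --- your argument is essentially the textbook derivation that the citation points to.
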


\section{Simple and Fast \texorpdfstring{$34$-}{34-}Approximation for \texorpdfstring{$(1,\ell)$-}{(1,l)-}Median}

Here, we present \cref{alg:1l_median_6}, a $34$-approximation algorithm for $(1,\ell)$-median clustering, which is based on the following facts: we can obtain a $3$-approximate solution to the $(1,\ell)$-median for a given set $T = \{ \tau_1, \dots, \tau_n \} \subset \mathbb{X}^d_m$ of polygonal curves in terms of objective value, i.e., we obtain one of the at least $n/2$ input curves that are within distance $2 \cdot \cost{T}{c^\ast}/n$ to an optimal $(1,\ell)$-median $c^\ast$ for $T$, w.h.p., by uniformly and independently sampling a sufficient number of curves from $T$. There are at least $n/2$ of these curves by an averaging argument. These curves have cost up to $3 \cdot \cost{T}{c^\ast}$ by the triangle-inequality. The sample has size depending only on a parameter determining the failure probability and we can improve on running-time even more by using \cref{theo:indyk_median} and evaluate the cost of each curve in the sample of candidates against another sample of similar size instead of against the complete input. Though, we have to accept an approximation factor of $5$ (if we set $\epsilon = 1$ in \cref{theo:indyk_median}). That is indeed acceptable, since we only obtain an approximate solution in terms of objective value and completely ignore the bound on the number of vertices of the center curve, which is a disadvantage of this approach and results in the lower bound of $\cost{T}{c^\ast}$ not necessarily holding (if $\ell < m$). To fix this, we simplify the candidate curve that evaluated best against the second sample, using an efficient minimum-error $\ell$-simplification approximation algorithm, which downgrades the approximation factor to $6+7\alpha$, where $\alpha$ is the approximation factor of the minimum-error $\ell$-simplification.

However, \cref{alg:1l_median_6} is very fast in terms of the input size. Indeed, it has worst-case running-time independent of $n$ and sub-quartic in $m$. Now, \cref{alg:1l_median_6} has the purpose to provide us an approximate median for a given set of polygonal curves: the bi-criteria approximation algorithms (\cref{alg:1l_median_3_candidates,alg:1l_median_1_candidates}), which we present afterwards and which are capable of generating center curves with up to $2\ell-2$ vertices, need an approximate median (and the approximation factor) to bound the optimal objective value. Furthermore, there is a case where \cref{alg:1l_median_3_candidates,alg:1l_median_1_candidates} may fail to provide a good approximation, but it can be proven that the result of \cref{alg:1l_median_6} is then a very good approximation, which can be used instead.

\begin{algorithm}[H]
\caption{$(1,\ell)$-Median by Simplification\label{alg:1l_median_6}}
    \begin{algorithmic}[1]
        \Procedure{$(1,\ell)$-Median-$34$-Approximation}{$T = \{\tau_1, \dots, \tau_n \}$, $\delta$}
            \State $S \gets$ sample $\left\lceil 2 (\ln(2)-\ln(\delta)) \right\rceil$ curves from $T$ uniformly and independently with replacement
            \State $\gamma \gets \left \lceil -64 (\ln(\delta) - \ln(\lceil 4\ln(2)-\ln(\delta) \rceil) \right \rceil$
            \State $W \gets$ sample $\gamma$ curves from $T$ uniformly and independently with replacement
            \State $t \gets$ arbitrary elem. from $\argmin\limits_{s \in S} \cost{W}{s}$
            \State \Return \simpl{\alpha}{t} \Comment{E.g. combining \citep{alt_godau,imai_polygonal_1988}}
        \EndProcedure
    \end{algorithmic}
\end{algorithm}

Next, we prove the quality of approximation of \cref{alg:1l_median_6}.

\begin{theorem}
    \label{theo:6_alpha_approx}
    Given a parameter $\delta \in (0,1)$ and a set $T = \{ \tau_1, \dots, \tau_n \} \subset \mathbb{X}^d_m$ of polygonal curves, \cref{alg:1l_median_6} returns with probability at least $1-\delta$ a polygonal curve $c \in \mathbb{X}^d_\ell$, such that $\cost{T}{c^\ast} \leq \cost{T}{c} \leq (6 + 7\alpha) \cdot \cost{T}{c^\ast}$, where $c^\ast$ is an optimal $(1,\ell)$-median for $T$ and $\alpha$ is the approximation-factor of the utilized minimum-error $\ell$-simplification approximation algorithm.
\end{theorem}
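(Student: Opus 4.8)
The plan is to track the two sources of randomness in \cref{alg:1l_median_6} separately and then combine them via a union bound, splitting the failure probability $\delta$ into two halves, each of size $\delta/2$. The first source is the sample $S$: we want $S$ to contain at least one curve that is close to the optimal median $c^\ast$. The second source is the evaluation sample $W$: we want $W$ to be a good enough proxy for $T$ so that $\argmin_{s' \in S'} \cost{W}{s'}$ is not much worse than $\argmin_{s' \in S'} \cost{T}{s'}$.

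First I would argue about $S$. Let $c^\ast$ be an optimal $(1,\ell)$-median with objective value $\cost{T}{c^\ast} = \Delta$. By Markov's inequality, at least $n/2$ of the input curves $\tau$ satisfy $d_F(\tau, c^\ast) \leq 2\Delta/n$; call this set $T_{\mathrm{good}}$. Since $S$ is drawn uniformly and independently with replacement, the probability that $S \cap T_{\mathrm{good}} = \emptyset$ is at most $2^{-|S|} = 2^{-\lceil 2(\ln 2 - \ln \delta)\rceil} \leq \delta/2$, by the choice of $|S|$. So w.h.p.\ there is some $s \in S$ with $d_F(s, c^\ast) \leq 2\Delta/n$; by the triangle inequality this $s$ has $\cost{T}{s} \leq \cost{T}{c^\ast} + n \cdot d_F(s,c^\ast) \leq 3\Delta$. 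Now I pass to simplifications: let $s' = \simpl{\alpha}{s} \in S'$. Since $c^\ast \in \mathbb{X}^d_\ell$, the minimum-error $\ell$-simplification of $s$ is within $d_F(s, c^\ast) \leq 2\Delta/n$ of $s$, so $d_F(s, s') \leq 2\alpha\Delta/n$, and hence $\cost{T}{s'} \leq \cost{T}{s} + n\, d_F(s,s') \leq 3\Delta + 2\alpha\Delta = (3 + 2\alpha)\Delta$. Thus, conditioned on $S$ hitting $T_{\mathrm{good}}$, the set $S'$ contains a curve $s'$ of cost at most $(3+2\alpha)\Delta$.

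Next I would handle $W$ via \cref{theo:indyk_median}. Take $\epsilon = 1$ in that theorem. For a fixed pair of curves $\sigma, \tau$ with $\cost{T}{\tau} > 2\cost{T}{\sigma}$, the probability that $\cost{W}{\tau} \leq \cost{W}{\sigma}$ is at most $\exp(-|W|/64)$. Apply this with $\sigma = s'$ (the good simplification from the previous step) and $\tau$ ranging over all curves in $S'$ whose true cost exceeds $2(3+2\alpha)\Delta \geq 2\cost{T}{s'}$; a union bound over the at most $|S'| = |S| = \lceil 2(\ln 2 - \ln\delta)\rceil \leq \lceil 4\ln 2 - \ln\delta\rceil$ such curves gives failure probability at most $|S'|\exp(-|W|/64)$, and the choice $|W| = \lceil -64(\ln\delta - \ln\lceil 4\ln 2 - \ln\delta\rceil)\rceil$ makes this at most $\delta/2$. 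Hence, on this good event, the curve $c = \argmin_{s' \in S'}\cost{W}{s'}$ returned by the algorithm cannot be one of the "bad" curves, so $\cost{T}{c} \leq 2(3+2\alpha)\Delta = (6 + 4\alpha)\Delta$. Combining the two failure events by a union bound gives overall success probability at least $1-\delta$. The lower bound $\cost{T}{c^\ast} \leq \cost{T}{c}$ is immediate: $c \in \mathbb{X}^d_\ell$ because every element of $S'$ has complexity at most $\ell$, so $c$ is a feasible $(1,\ell)$-median solution and $c^\ast$ is optimal.

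The main obstacle is bookkeeping the constants so that the final factor comes out to $6+7\alpha$ rather than the $6+4\alpha$ my naive accounting above produces. The discrepancy must come from a more careful treatment of how the $\alpha$-approximate simplification interacts with the doubled cost from \cref{theo:indyk_median}: one should not simplify only the single good curve $s$ but work with the bound $\cost{T}{s'} \le \cost{T}{c^\ast} + \cost{T}{s} \cdot (\text{something})$, or more precisely compare $s'$ against $c^\ast$ directly, noting $d_F(s', c^\ast) \le d_F(s',s) + d_F(s,c^\ast) \le (\alpha+1)\cdot 2\Delta/n$, which yields $\cost{T}{s'} \le (1 + 2(\alpha+1))\Delta = (3 + 2\alpha)\Delta$ — still the same. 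The extra $3\alpha$ likely appears because the argument must be run without assuming $s$ itself is close to $c^\ast$ in the simplification step, i.e.\ bounding $d_F(s,\simpl{\alpha}{s}) \le \alpha\, d_F(s, c^\ast)$ requires $c^\ast$ to be a valid competitor, and then the doubling from Indyk's theorem applies to the full quantity $(3+2\alpha)$, but the threshold comparison in \cref{theo:indyk_median} is against $2\cost{T}{\sigma}$ where $\sigma$ may need to be taken as $s$ (cost $3\Delta$) rather than $s'$, giving $2(3\Delta) + 2\alpha\Delta \cdot(\text{extra})$. I would resolve the exact constant by carefully choosing whether to evaluate $W$-costs before or after simplification and propagating the $\alpha$ factor through each triangle-inequality step; this is routine once the order of operations is fixed, and it is the only delicate point.
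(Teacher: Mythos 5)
Your core argument is sound and it is a genuinely different route from the paper's. The paper applies \cref{theo:indyk_median} to the \emph{original} sampled curves: it shows that, w.h.p., the winner $t \in S$ of the $W$-comparison satisfies $\cost{T}{t} \leq 2\cost{T}{s}$ for the good curve $s$ with $d_F(s,c^\ast)\leq 2\cost{T}{c^\ast}/n$, then deduces $d_F(t,c^\ast) \leq 7\cost{T}{c^\ast}/n$ and pays the simplification error $\alpha\, d_F(t,c^\ast) \leq 7\alpha\cost{T}{c^\ast}/n$ only at the very end, which is exactly where the $7\alpha$ comes from. You instead simplify first, bound $\cost{T}{\simpl{\alpha}{s}} \leq (3+2\alpha)\cost{T}{c^\ast}$, and run the Indyk comparison on the simplified candidates, arriving at $(6+4\alpha)$. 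Since $(6+4\alpha) \leq (6+7\alpha)$, your stronger bound immediately implies the stated theorem; the ``obstacle'' in your last paragraph is not an obstacle at all, and you should simply commit to your accounting rather than try to reverse-engineer the paper's constant. Your order of operations also has the advantage of matching the pseudocode of \cref{alg:1l_median_6} literally, since the algorithm selects $\argmin_{s^\prime \in S^\prime}\cost{W}{s^\prime}$ over the simplified curves.

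Two caveats you must address. First, \cref{theo:indyk_median} as stated in the paper applies to curves $\tau,\sigma \in T$, whereas your comparison is between elements of $S^\prime$, which are simplifications and in general not input curves. Indyk's underlying result does hold for arbitrary candidate curves in the ambient space (its proof only uses the triangle inequality), so your step is repairable by invoking that more general form explicitly; alternatively you can restructure as the paper does (compare the originals on $W$, then simplify the winner), which stays within the stated hypotheses at the price of the larger factor $7\alpha$. As written, the citation does not license the step. Second, the arithmetic $\lvert S\rvert = \lceil 2(\ln 2-\ln\delta)\rceil \leq \lceil 4\ln 2 - \ln\delta\rceil$ is false for small $\delta$ (already for $\delta < 1/4$), so with the stated sample sizes your union bound does not come out to $\delta/2$; to make the per-pair failure probability $\delta/(2\lvert S\rvert)$ one needs $\lvert W\rvert$ of order $64\ln(2\lvert S\rvert/\delta)$. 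This is a constant-level fix (and the paper's own accounting of $\lvert W\rvert$ is similarly loose), but the inequality as you wrote it is not justified.
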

\begin{proof}
    First, we know that $d_F(\tau, \simpl{\alpha}{\tau}) \leq \alpha \cdot d_F(\tau, c^\ast)$, for each $\tau \in T$.

    Now, there are at least $\frac{n}{2}$ curves in $T$ that are within distance at most $\frac{2\cost{T}{c^\ast}}{n}$ to $c^\ast$. Otherwise the cost of the remaining curves would exceed $\cost{T}{c^\ast}$, which is a contradiction. Hence each $s \in S$ has probability at least $\frac{1}{2}$ to be within distance $\frac{2\cost{T}{c^\ast}}{n}$ to $c^\ast$.
    
    Since the elements of $S$ are sampled independently we conclude that the probability that every $s \in S$ has distance to $c^\ast$ greater than $\frac{2\cost{T}{c^\ast}}{n}$ is at most $(1-\frac{1}{2})^{\lvert S \rvert} \leq \exp\left(-\frac{2(\ln(2)-\ln(\delta))}{2}\right) = \frac{\delta}{2}$.
    
    Now, assume there is a $s \in S$ with $d_F(s, c^\ast) \leq \frac{2 \cost{T}{c^\ast}}{n}$. We do not want any $t \in S \setminus \{s\}$ with $\cost{T}{t} > 2 \cost{T}{s}$ to have $\cost{W}{t} \leq \cost{W}{s}$. Using \cref{theo:indyk_median} we conclude that this happens with probability at most \[\exp\left(-\frac{-64 (\ln(\delta) - \ln(\lceil 4\ln(2)-\ln(\delta)\rceil)}{64}\right) \leq \frac{\delta}{\lceil 4 (\ln(2) - \ln(\delta)) \rceil} \leq \frac{\delta}{2 \lvert S \rvert},\] for each $t \in S \setminus \{s\}$.
    
    Using a union bound over all bad events, we conclude that with probability at least $1-\delta$, \cref{alg:1l_median_6} samples a curve $s \in S$, with $d_F(s, c^\ast) \leq 2 \cost{T}{c^\ast}/n$ and returns the simplification $c = \simpl{\alpha}{t}$ of a curve $t \in S$, with $\cost{T}{t} \leq 2 \cost{T}{s}$. The triangle-inequality yields
    \[\sum_{\tau \in T} (d_F(t, c^\ast) - d_F(\tau, c^\ast)) \leq \sum_{\tau \in T} d_F(t, \tau) \leq 2 \sum_{\tau \in T} d_F(s, \tau) \leq 2 \sum_{\tau \in T} (d_F(\tau, c^\ast) + d_F(c^\ast, s)),\]
    which is equivalent to
    \[ n \cdot d_F(t, c^\ast) \leq 2 \cost{T}{c^\ast} + \cost{T}{c^\ast} + 2 n \frac{2\cost{T}{c^\ast}}{n} \Leftrightarrow{} d_F(t, c^\ast) \leq \frac{7\cost{T}{c^\ast}}{n}. \]
    
    Hence, we have
    \begin{align*}
        \cost{T}{c} & ={} \sum_{\tau \in T} d_F(\tau, \simpl{\alpha}{t}) \leq \sum_{\tau \in T} (d_F(\tau, t) + d_F(t, \simpl{\alpha}{t})) \\
        & \leq{} 2 \cost{T}{s} + \sum_{\tau \in T} \alpha \cdot d_F(t, c^\ast) \leq{} 2 \sum_{\tau \in T} (d_F(\tau, c^\ast) + d_F(c^\ast, s)) + 7\alpha \cdot \cost{T}{c^\ast} \\
        & \leq{} 2 \cost{T}{c^\ast} + 4 \cost{T}{c^\ast} + 7\alpha \cdot \cost{T}{c^\ast} ={} (6 + 7\alpha) \cost{T}{c^\ast}.
    \end{align*}
    
    The lower bound $\cost{T}{c^\ast} \leq \cost{T}{c}$ follows from the fact that the returned curve has $\ell$ vertices and that $c^\ast$ has minimum cost among all curves with $\ell$ vertices.
\end{proof}

The following lemma enables us to obtain a concrete approximation-factor and worst-case running-time of \cref{alg:1l_median_6}.

\begin{lemma}[{\citet[Lemma 7.1]{k_l_center}}]
    \label{lem:imai_iri_simpli}
    Given a curve $\sigma \in \mathbb{X}^d_m$, a $4$-approximate minimum-error $\ell$-simplification can be computed in $O(m^3 \log m)$ time.
\end{lemma}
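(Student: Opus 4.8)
The plan is to turn the optimization problem of \cref{lem:imai_iri_simpli} into a short sequence of decision problems, each solved by the shortcut-graph technique of Imai and Iri. Write $\sigma = \overline{v_1v_2}\cdots\overline{v_{m-1}v_m}$ with instants $t_1 < \dots < t_m$. First I would restrict attention to \emph{vertex-restricted} simplifications, i.e.\ curves whose vertex sequence is a subsequence of $v_1,\dots,v_m$ (so their edges are shortcuts $\overline{v_iv_j}$ of $\sigma$, with $v_1$ and $v_m$ as endpoints). The first key step is to show that this restriction costs only a constant factor in the error: given an optimal $\ell$-simplification $c^\ast$ with $r^\ast = d_F(\sigma,c^\ast)$, a near-optimal matching between $\sigma$ and $c^\ast$ assigns to each vertex of $c^\ast$ a parameter of $\sigma$; snapping those parameters to nearby vertices of $\sigma$, and using the fact recalled in the introduction that shortcutting a subcurve to a line segment never increases its Fr\'echet distance to another segment, one obtains a vertex-restricted curve with at most $\ell$ vertices whose error is at most a constant multiple of $r^\ast$. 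Carefully tracking the constant through this snapping-and-gluing argument (together with handling the fact that $c^\ast$ need not share endpoints with $\sigma$) is what yields the factor $4$, and I expect this to be the main obstacle.

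For a fixed scale $r \geq 0$, the decision problem ``is there a vertex-restricted $\ell$-simplification of $\sigma$ of error at most $r$?'' is solved by the \emph{shortcut graph} $G_r$: nodes $\{1,\dots,m\}$, with a directed arc $i \to j$ (for $i<j$) whenever $d_F\big(\overline{v_iv_j},\sigma\vert_{[t_i,t_j]}\big) \leq r$. Since the Fr\'echet distance of a concatenation is the maximum over the pieces, a path $1 = i_0 \to i_1 \to \dots \to i_p = m$ in $G_r$ corresponds exactly to the curve $\overline{v_{i_0}v_{i_1}}\cdots\overline{v_{i_{p-1}}v_{i_p}}$ of error at most $r$; hence the answer is ``yes'' iff such a path uses at most $\ell-1$ arcs, which a breadth-first search (the graph being acyclic) detects in $O(m^2)$ time once the arc set of $G_r$ is available.

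To avoid searching a continuum of scales, I would use that the optimal vertex-restricted error $r_V$ equals the smallest of the $\binom{m}{2}$ \emph{critical values} $\rho_{ij} := d_F\big(\overline{v_iv_j},\sigma\vert_{[t_i,t_j]}\big)$ for which the decision answer is ``yes'' (along an optimal path, $r_V$ is the maximum of the $\rho_{ij}$ of its arcs, hence one of them). So I compute all $\rho_{ij}$ up front: each is a Fr\'echet distance between a single segment and a subcurve of complexity $j-i+1$, whose free-space diagram is a one-dimensional strip, so the exact value is obtainable in $O\big((j-i)\log(j-i)\big)$ time (binary search over the $O(j-i)$ candidate events against the linear-time decision); summing over all pairs gives $O(m^3\log m)$. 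Sort the $\rho_{ij}$ in $O(m^2\log m)$ time and binary-search for the smallest one for which the decision procedure answers ``yes'' — that is $O(\log m)$ decision calls, each $O(m^2)$ since the arcs of $G_r$ are just a thresholding of the precomputed $\rho_{ij}$ — and the BFS of the successful call also returns the path, hence the curve $c$ to output.

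Putting the pieces together: the returned curve $c$ satisfies $d_F(\sigma,c) = r_V \leq 4\,r^\ast$, i.e.\ $c$ is a $4$-approximate minimum-error $\ell$-simplification, and the total running time is $O(m^3\log m) + O(m^2\log m) + O(\log m)\cdot O(m^2) = O(m^3\log m)$. Besides the constant-tracking in the vertex-restriction step, the points that need to be spelled out are that $r_V$ is indeed attained at some $\rho_{ij}$ and that the exact segment-versus-subcurve Fr\'echet distance fits in the stated per-pair time bound; both are standard.
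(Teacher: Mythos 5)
The paper does not prove this lemma at all: it is imported verbatim from \citet[Lemma 7.1]{k_l_center}, and the text only remarks that the underlying algorithm combines the shortcut-graph method of \citet{imai_polygonal_1988} with the Fr\'echet-distance computation of \citet{alt_godau}. Your algorithmic skeleton matches that construction exactly (all-pairs shortcut errors $\rho_{ij}=d_F(\overline{v_iv_j},\sigma\vert_{[t_i,t_j]})$, thresholded graph, path of at most $\ell-1$ arcs, binary search over the critical values), and the $O(m^3\log m)$ bookkeeping is right. So as a reconstruction of the cited algorithm your proposal is on target.

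However, as a proof it has a genuine gap: the approximation factor $4$ --- which is the actual content of the lemma beyond standard machinery --- is never established. You reduce it to a ``snapping-and-gluing'' argument and explicitly defer the constant-tracking as ``the main obstacle,'' but this step is not routine. If $c^\ast$ has error $r^\ast$ and its vertices are matched to parameters $s_1<\dots<s_\ell$ of $\sigma$, the points $\sigma(s_i)$ are within $r^\ast$ of the vertices of $c^\ast$, but the vertices of $\sigma$ you snap to are \emph{not} a priori within $O(r^\ast)$ of anything; what must be bounded is the Fr\'echet distance of each shortcut $\overline{v_{j_i}v_{j_{i+1}}}$ to the subcurve $\sigma\vert_{[t_{j_i},t_{j_{i+1}}]}$, whose endpoints straddle the pieces of $\sigma$ matched to two different edges of $c^\ast$. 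One needs a careful choice of $j_i$ (e.g.\ last vertex before $s_i$), a decomposition of each such subcurve into the residual edge piece plus a prefix of the next matched piece, and two applications of the ``replace a matched subcurve by the segment between its matched endpoints'' bound to land at $4r^\ast$; none of this is in your text, and the stated factor cannot be checked from it. A related, smaller inaccuracy: your decision problem is \emph{not} equivalent to the shortcut graph (the ``only if'' direction fails, since an optimal matching between $\sigma$ and a vertex-restricted curve need not align vertices with subcurve endpoints); this is harmless because the algorithm only needs the ``if'' direction plus the existence statement above, but that existence statement is precisely the missing factor-$4$ lemma. Finally, your per-pair time bound is correct but your justification is not: a segment against a subcurve of complexity $k$ has $\Theta(k^2)$ monotonicity-type critical values in general, so ``binary search over $O(k)$ candidate events'' does not work as stated; you should instead invoke the exact $O(k\log k)$ computation of \citet{alt_godau}, which is exactly the ingredient the paper names.
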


The simplification algorithm used for obtaining this statement is a combination of the algorithm by \citet{imai_polygonal_1988} and the algorithm by \citet{alt_godau}. Combining \cref{theo:6_alpha_approx} and \cref{lem:imai_iri_simpli}, we obtain the following corollary.

\begin{corollary}
    \label{coro:1l_median_34}
    Given a parameter $\delta \in (0,1)$ and a set $T \subset \mathbb{X}^d_m$ of polygonal curves, \cref{alg:1l_median_6} returns with probability at least $1-\delta$ a polygonal curve $c \in \mathbb{X}^d_\ell$, such that $\cost{T}{c^\ast} \leq \cost{T}{c} \leq 34 \cdot \cost{T}{c^\ast}$, where $c^\ast$ is an optimal $(1,\ell)$-median for $T$, in time $O(m^2 \log(m) (-\ln^2 \delta) + m^3 \log m)$, when the algorithms by \citet{imai_polygonal_1988} and \citet{alt_godau} are combined for $\ell$-simplification.
\end{corollary}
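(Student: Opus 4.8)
The plan is to instantiate \cref{theo:6_alpha_approx} with the simplification subroutine from \cref{lem:imai_iri_simpli}, which is a deterministic $\alpha$-approximate minimum-error $\ell$-simplification with $\alpha = 4$ running in $O(m^3\log m)$ time. Substituting $\alpha = 4$ into the guarantee $\cost{T}{c^\ast} \le \cost{T}{c} \le (6+7\alpha)\,\cost{T}{c^\ast}$ of \cref{theo:6_alpha_approx} immediately gives $6 + 7\cdot 4 = 34$; since the simplification is deterministic, the success probability $1-\delta$ and the lower bound $\cost{T}{c^\ast}\le\cost{T}{c}$ carry over verbatim. Hence the approximation part needs no further work, and the whole proof reduces to a term-by-term running-time analysis of \cref{alg:1l_median_6}.

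First I would record the sample sizes. We have $\lvert S\rvert = \lceil 2(\ln 2 - \ln\delta)\rceil = \Theta(-\ln\delta)$, and writing $x = -\ln\delta > 0$ one checks $\lvert W\rvert = \lceil 64\,(x + \ln\lceil 4\ln 2 + x\rceil)\rceil = \Theta(x + \ln x) = \Theta(-\ln\delta)$, since the $\ln\ln(1/\delta)$ contribution is of lower order. Drawing both samples costs time proportional to their sizes (up to the cost of referencing/copying a curve, which is dominated by later terms). Constructing $S'$ makes $\lvert S\rvert$ calls to the $4$-approximate minimum-error $\ell$-simplification, each $O(m^3\log m)$ by \cref{lem:imai_iri_simpli}, for a total of $O\!\left(m^3\log m\,(-\ln\delta)\right)$.

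Next I would bound the final line $\argmin_{s'\in S'}\cost{W}{s'}$. Each $s'\in S'$ has complexity at most $\ell$, a constant, and for each of the $\lvert W\rvert$ reference curves $w$ (of complexity at most $m$) the distance $d_F(s',w)$ is computable in $O(\ell m\log(\ell m)) = O(m\log m)$ time via the algorithm of \citet{alt_godau}. Summing over the $\lvert S'\rvert = \lvert S\rvert$ candidates and the $\lvert W\rvert$ reference curves yields $O\!\left(\lvert S\rvert\cdot\lvert W\rvert\cdot m\log m\right) = O\!\left(m\log m\,(-\ln\delta)^2\right)$, while the ensuing $\argmin$ over $O(-\ln\delta)$ scalars is negligible. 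Adding the two dominant contributions gives the claimed $O\!\left(m\log m\,(-\ln^2\delta) + m^3\log m\,(-\ln\delta)\right)$.

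The argument is essentially bookkeeping, and there is no genuinely hard step; the only points requiring a little care are (i) verifying that the $\ln\ln(1/\delta)$ term in $\lvert W\rvert$ is subsumed so that both sample sizes are $\Theta(-\ln\delta)$, and (ii) exploiting that $\ell$ and $d$ are constants so that a single Fréchet evaluation between a curve of complexity $\le\ell$ and one of complexity $\le m$ costs only $O(m\log m)$ rather than $O(\ell m\log(\ell m))$ with $\ell$ kept symbolic.
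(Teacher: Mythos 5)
Your proposal is correct and follows essentially the same route as the paper's proof: plug the $4$-approximate $O(m^3\log m)$ simplification of \cref{lem:imai_iri_simpli} into \cref{theo:6_alpha_approx} to get the factor $6+7\cdot 4=34$, then bound the running time term by term (sampling, $O(m^3\log m\,(-\ln\delta))$ for simplifying $S$, and $O(m\log m\,(-\ln\delta)^2)$ for evaluating $S'$ against $W$ via the algorithm of \citet{alt_godau}). Your write-up merely fills in bookkeeping the paper leaves implicit, such as the size of $W$ and the constant-complexity argument for each Fr\'echet evaluation.
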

\begin{proof}
    We use \cref{lem:imai_iri_simpli} together with \cref{theo:6_alpha_approx}, which yields an approximation factor of $34$. Now, drawing the first sample takes time $O(-\ln \delta)$. Drawing the second sample also takes time $O(-\ln(\delta))$ and evaluating the samples against each other takes time $O(m^2 \log(m) (-\ln^2 \delta))$. Simplifying one of the curves that evaluates best takes time $O(m^3 \log m)$. We conclude that \cref{alg:1l_median_6} has running-time $O(m^2 \log(m) (-\ln^2 \delta) + m^3 \log m)$.
\end{proof}

\section{\texorpdfstring{$(3+\epsilon)$-}{}Approximation for \texorpdfstring{$(1,\ell)$-}{(1,l)-}Median by Simple Shortcutting}

\FloatBarrier

Here, we present \cref{alg:1l_median_3_candidates}, which returns candidates, containing a $(3+\epsilon)$-approximate $(1,\ell)$-median of complexity at most $2\ell-2$, for a cluster contained in the input, that takes a constant fraction of the input, w.h.p. \cref{alg:1l_median_3_candidates} can be used as plugin in our generalized version (\cref{alg:kl_median}, \cref{sec:k-median}) of the algorithm by \citet{DBLP:journals/talg/AckermannBS10}.

In contrast to \citet{abhin2020kmedian} we cannot use the property, that the vertices of a median must be found in the balls of radius $d_F(\tau, c^\ast)$, centered at $\tau$'s vertices, where $c^\ast$ is an optimal $(1,\ell)$-median for a given input $T$, which $\tau$ is an element of. This is an immediate consequence of using the continuous Fr\'echet distance.

We circumvent this by proving the following shortcutting lemmata. We start with the simplest, which states that we can indeed search the aforementioned balls, if we accept a resulting curve of complexity at most $2\ell-2$. See \cref{fig:simple_shortcutting} for a visualization.

\begin{SCfigure}
    \includegraphics[scale=0.8]{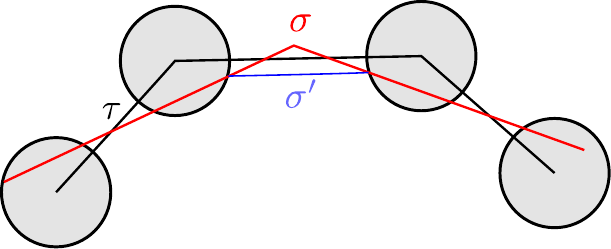}
    \caption{Visualization of a simple shortcut. The black curve is an input-curve that is close to an optimal median, which is depicted in red. By inserting the blue shortcut we can find a curve that has the same distance to the black curve as the median but with all vertices contained in the balls centered at the black curves vertices.}
    \label{fig:simple_shortcutting}
\end{SCfigure}

\begin{lemma}[shortcutting using a single polygonal curve]
    \label{lem:shortcutting_single}
    Let $\sigma, \tau \in \mathbb{X}^d$ be polygonal curves. Let $v^\tau_1, \dots, v^\tau_{\lvert \tau \rvert}$ be the vertices of $\tau$ and let $r = d_F(\sigma, \tau)$. There exists a polygonal curve $\sigma^\prime \in \mathbb{X}^d$ with every vertex contained in at least one of $B(v^\tau_1, r), \dots, B(v^\tau_{\lvert \tau \rvert}, r)$, $d_F(\sigma^\prime, \tau) \leq d_F(\sigma, \tau)$ and $\lvert \sigma^\prime \rvert \leq 2 \lvert \sigma \rvert - 2$.
\end{lemma}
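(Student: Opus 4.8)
I would fix a (near-)optimal matching $h$ between $\sigma$ and $\tau$, read off from it, for every vertex $v^\tau_j$ of $\tau$, the point on $\sigma$ that $h$ sends to $v^\tau_j$ (which is therefore within distance $r$ of $v^\tau_j$), and then take $\sigma'$ to be the curve obtained by \emph{shortcutting} $\sigma$ between these points. Since $r = d_F(\sigma,\tau)$ need not be attained (Lemma~\ref{lem:sequence}), I would first do the construction for an $h \in \mathcal{H}$ with $\max_{t}\lVert \sigma(t)-\tau(h(t))\rVert \le r+\eta$ for a small $\eta>0$. Write $t^\tau_1 < \dots < t^\tau_{\lvert\tau\rvert}$ for the instants of $\tau$; since $h$ is an increasing homeomorphism, $\hat s_j := h^{-1}(t^\tau_j)$ are well defined with $0 = \hat s_1 < \dots < \hat s_{\lvert\tau\rvert} = 1$, and $\lVert \sigma(\hat s_j) - v^\tau_j\rVert = \lVert \sigma(\hat s_j) - \tau(h(\hat s_j))\rVert \le r+\eta$. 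Let $\sigma'$ be the polygonal curve whose trace is the polyline through $\sigma(\hat s_1), \sigma(\hat s_2), \dots, \sigma(\hat s_{\lvert\tau\rvert})$, collinear points dropped so that it is a legal polygonal curve. By construction every vertex of $\sigma'$ lies in some $B(v^\tau_j, r+\eta)$.

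\textbf{Distance and the limit.} For the bound $d_F(\sigma',\tau)\le r$, match the portion of $\sigma'$ between $\sigma(\hat s_j)$ and $\sigma(\hat s_{j+1})$ linearly onto the edge $\overline{v^\tau_j v^\tau_{j+1}}$ of $\tau$; at parameter fraction $t$ the two points are the corresponding convex combinations, so the triangle inequality gives distance at most $(1-t)\lVert \sigma(\hat s_j)-v^\tau_j\rVert + t\lVert\sigma(\hat s_{j+1})-v^\tau_{j+1}\rVert \le r+\eta$, and concatenating these matchings over $j=1,\dots,\lvert\tau\rvert-1$ yields $d_F(\sigma',\tau)\le r+\eta$. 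Letting $\eta\to 0$: the curves so produced have complexity at most $2\lvert\sigma\rvert-2$ (shown next) with all vertices in the fixed compact set $\bigcup_j B(v^\tau_j,r+1)$, so a subsequence of their vertex sequences converges; the limit curve $\sigma'$ has complexity at most $2\lvert\sigma\rvert-2$, each of its vertices lies in $\bigcap_{\eta>0}\bigcup_j B(v^\tau_j,r+\eta) = \bigcup_j B(v^\tau_j,r)$ (finite union of closed balls), and $d_F(\sigma',\tau) = \lim d_F(\sigma'_\eta,\tau) \le r$ by continuity of $d_F$ under this convergence.

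\textbf{Complexity count.} Here is the point that yields the exact bound $2\lvert\sigma\rvert-2$. A point $\sigma(\hat s_j)$ with $2\le j\le\lvert\tau\rvert-1$ can be a vertex of $\sigma'$ only if the polyline turns there, which forces $\sigma\vert_{[\hat s_{j-1},\hat s_{j+1}]}$ not to be contained in a single edge of $\sigma$; hence there is a vertex $v^\sigma_k$ of $\sigma$ with $\hat s_{j-1} < s^\sigma_k < \hat s_{j+1}$, where $s^\sigma_k$ is its instant. If $v^\sigma_k$ is matched by $h$ into the edge $\overline{v^\tau_p v^\tau_{p+1}}$ of $\tau$, then $s^\sigma_k\in[\hat s_p,\hat s_{p+1}]$; combining this with $\hat s_{j-1}<s^\sigma_k<\hat s_{j+1}$ and the strict monotonicity of the $\hat s$'s forces $j\in\{p,p+1\}$. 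So each interior vertex of $\sigma$ is responsible for at most two interior vertices of $\sigma'$, while $v^\sigma_1$ and $v^\sigma_{\lvert\sigma\rvert}$ (at parameters $0$ and $1$) can be responsible for none and contribute only the two endpoints $\sigma(\hat s_1)=v^\sigma_1$ and $\sigma(\hat s_{\lvert\tau\rvert})=v^\sigma_{\lvert\sigma\rvert}$ themselves; thus $\lvert\sigma'\rvert \le 2 + 2(\lvert\sigma\rvert-2) = 2\lvert\sigma\rvert-2$.

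\textbf{Main obstacle.} The distance inequality is a routine triangle-inequality computation once the construction is in place; the delicate parts are (i) the complexity count above --- in particular making precise that turning only at $\sigma(\hat s_j)$ forces an interior vertex of $\sigma$ in $(\hat s_{j-1},\hat s_{j+1})$ and that each such vertex is blamed at most twice, so the count is $2\lvert\sigma\rvert-2$ rather than $2\lvert\sigma\rvert$ --- and (ii) running the argument with a non-attained Fréchet distance, which is why the compactness/limit step is carried out explicitly rather than assuming an optimal $h$ exists.
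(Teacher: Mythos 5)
Your proposal is correct, but it builds a genuinely different curve than the paper does. The paper keeps $\sigma$ essentially intact and repairs it locally: for each vertex of $\sigma$ lying outside all balls $B(v^\tau_j, r_h)$, it replaces the offending vertex by the segment from the last point of $\sigma$ inside the ball of the matched edge's left endpoint to the first point inside the ball of its right endpoint; ball-containment of the vertices then holds by construction, the distance bound follows because the shortcut is matched against a single edge of $\tau$, and the complexity bound $2\lvert\sigma\rvert-2$ is immediate since each bad vertex is traded for two new ones. You instead resample $\sigma$ at the preimages $\hat s_j = h^{-1}(t^\tau_j)$ of $\tau$'s vertex instants and take the polyline through $\sigma(\hat s_1),\dots,\sigma(\hat s_{\lvert\tau\rvert})$; for you the ball-containment and the per-edge distance bound are immediate, but the complexity bound is the nontrivial part, and your charging argument (a turn at $\sigma(\hat s_j)$ forces an interior vertex instant of $\sigma$ strictly inside $(\hat s_{j-1},\hat s_{j+1})$, and each such vertex, matched into one edge $\overline{v^\tau_p v^\tau_{p+1}}$, can only be blamed for $j\in\{p,p+1\}$) is sound and gives exactly $2+2(\lvert\sigma\rvert-2)$. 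A further difference is the treatment of the non-attained infimum: the paper applies its modification to a sequence of matchings from \cref{lem:sequence} and passes to the limit rather informally, whereas your explicit compactness argument on the bounded vertex configurations (fixing the vertex count along a subsequence, letting vertices converge, and using that vertexwise convergence controls $d_F$) is if anything the more careful way to extract a single limit curve with vertices in the closed balls of radius exactly $r$. One stylistic note: your construction yields a curve whose $j$\textsuperscript{th} vertex lies specifically in $B(v^\tau_j, r)$, a slightly stronger conclusion than stated, while the paper's curve retains the ``good'' vertices of $\sigma$; both satisfy the lemma, and both proofs rest on the same underlying fact that two line segments can be matched within the maximum of their endpoint distances.
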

\begin{proof}
    Let $v^\sigma_1, \dots, v^\sigma_{\lvert \sigma \rvert}$ be the vertices of $\sigma$. Further, let $t^\sigma_1, \dots, t^\sigma_{\lvert \sigma \rvert}$ and $t^\tau_1, \dots, t^\tau_{\lvert \tau \rvert}$ be the instants of $\sigma$ and $\tau$, respectively. Also, for $h \in \mathcal{H}$ (recall that $\mathcal{H}$ is the set of all continuous bijections $h\colon [0,1] \rightarrow [0,1]$ with $h(0) = 0$ and $h(1) = 1$), let $r_h = \max\limits_{t \in [0,1]} \lVert \sigma(t) - \tau(h(t)) \rVert$ be the distance realized by $h$. We know from \cref{lem:sequence} that there exists a sequence $(h_x)_{x=1}^\infty$ in $\mathcal{H}$, such that $\lim\limits_{x \to \infty} r_{h_x} = d_F(\sigma, \tau) = r$.
    
    Now, fix an arbitrary $h \in \mathcal{H}$ and assume there is a vertex $v^\sigma_i$ of $\sigma$, with instant $t^\sigma_i$, that is not contained in any of $B(v^\tau_1, r_h), \dots, B(v^\tau_{\lvert \tau \rvert}, r_h)$. Let $j$ be the maximum of $\{1, \dots, \lvert \tau \rvert - 1\}$, such that $t^\tau_j \leq h(t^\sigma_i) \leq t^\tau_{j+1}$. So $v^\sigma$ is matched to $\overline{\tau(t^\tau_j) \tau(t^\tau_{j+1})}$ by $h$. We modify $\sigma$ in such a way, that $v^\sigma_i$ is replaced by two new vertices that are elements of $B(v^\tau_j, r_h)$ and $B(v^\tau_{j+1}, r_h)$, respectively.
    
    Namely, let $t^-$ be the maximum of $[0, t^\sigma_i)$, such that $\sigma(t^-) \in B(v^\tau_j, r_h)$ and let $t^+$ be the minimum of $(t^\sigma_i, 1]$, such that $\sigma(t^+) \in B(v^\tau_{j+1}, r_h)$. These are the instants when $\sigma$ leaves $B(v^\tau_j, r_h)$ before visiting $v^\sigma_i$ and $\sigma$ enters $B(v^\tau_{j+1}, r_h)$ after visiting $v^\sigma_i$, respectively. Let $\sigma^\prime_h$ be the piecewise defined curve, defined just like $\sigma$ on $[0,t^-]$ and $[t^+,1]$, but on $(t^-, t^+)$ it connects $\sigma(t^-)$ and $\sigma(t^+)$ with the line segment $s(t) = \left(1-\frac{t-t^-}{t^+-t^-}\right) \tau(t^-) + \frac{t-t^-}{t^+-t^-} \tau(t^+)$. 
    
    We know that $\lVert \sigma(t^-) - \tau(h(t^-)) \rVert \leq r_h$ and $\lVert \sigma(t^+) - \tau(h(t^+)) \rVert \leq r_h$. Note that $t^\tau_j \leq h(t^-)$ and $h(t^+) \leq t^\tau_{j+1}$ since $\sigma(t^-)$ and $\sigma(t^+)$ are closest points to $v^\sigma_i$ on $\sigma$ that have distance $r_h$ to $v^\tau_j$ and $v^\tau_{j+1}$, respectively, by definition. Therefore, $\tau$ has no vertices between the instants $h(t^-)$ and $h(t^+)$. Now, $h$ can be used to match $\sigma^\prime_h\vert_{[0,t^-)}$ to $\tau\vert_{[0,h(t^-))}$ and $\sigma^\prime_h\vert_{(t^+,1]}$ to $\tau\vert_{(t^+,1]}$ with distance at most $r_h$. Since $\sigma^\prime_h\vert_{[t^-, t^+]}$ and $\tau\vert_{[h(t^-), h(t^+)]}$ are just line segments, they can be matched to each other with distance at most $\max\{\lVert \sigma^\prime_h(t^-) - \tau(h(t^-)) \rVert, \lVert \sigma^\prime_h(t^+) - \tau(h(t^+)) \rVert \} \leq r_h$. We conclude that $d_F(\sigma^\prime_h, \tau) \leq r_h$.
    
    Because this modification works for every $h \in \mathcal{H}$, we have $d_F(\sigma^\prime_{h}, \tau) \leq r_h$ for every $h \in \mathcal{H}$. Thus, $\lim\limits_{x \to \infty} d_F(\sigma^\prime_{h_x}, \tau) \leq d_F(\sigma, \tau) = r$.
    
    Now, to prove the claim, for every $h \in \mathcal{H}$ we apply this modification to $v^\sigma_i$ and successively to every other vertex $v^{\sigma^\prime_h}_i$ of the resulting curve $\sigma^\prime_h$, not contained in one of the balls, until every vertex of $\sigma^\prime_h$ is contained in a ball. Note that the modification is repeated at most $\lvert \sigma \rvert - 2$ times for every $h \in \mathcal{H}$, since the start and end vertex of $\sigma$ must be contained in $B(v^\tau_1, r_h)$ and $B(v^\tau_{\lvert \tau \rvert}, r_h)$, respectively. Therefore, the number of vertices of every $\sigma^\prime_h$ can be bounded by $2 \cdot (\lvert \sigma \rvert - 2) + 2$ since every other vertex must not lie in a ball and for each such vertex one new vertex is created. Thus, $\lvert \sigma^\prime_h \rvert \leq 2 \lvert \sigma \rvert - 2$.
\end{proof}

We now present \cref{alg:1l_median_3_candidates}, which works similar as \cref{alg:1l_median_6}, but uses shortcutting instead of simplification. As a consequence, we can achieve an approximation factor of $3+\epsilon$, instead of a factor of $2+\epsilon+\alpha$, where $\alpha \geq 1$ is the approximation factor of the simplifiaction algorithm used in \cref{alg:1l_median_6}. To achieve an approximation-factor of $3+\epsilon$ using simplification, one would need to compute the optimal minimum-error $\ell$-simplifications of the input curves and to the best of our knowledge, there is no such algorithm for the continuous Fr\'echet distance.

In contrast to \cref{alg:1l_median_6}, \cref{alg:1l_median_3_candidates} utilizes the superset-sampling technique by \citet{10.1109/FOCS.2004.7}, i.e., the concentration bound in \cref{lem:bernoulli_trial_bound}, to obtain an approximate $(1,\ell)$-median for a cluster $T^\prime$ contained in the input $T$, that takes a constant fraction of $T$. Therefore, it has running-time exponential in the size of the sample $S$. A further difference is that we need an upper and a lower bound on the cost of an optimal $(1,\ell)$-median for $T^\prime$, to properly set up the grids we use for shortcutting. The lower bound can be obtained by simple estimation, using Markov's inequality. For the upper bound we utilize a case distinction, which guarantees us that if we fail to obtain an upper bound on the optimal cost, the result of \cref{alg:1l_median_6} then is a good approximation (factor $2+\epsilon$) and can be used instead of a best curve obtained by shortcutting.

\cref{alg:1l_median_3_candidates} has several parameters: $\beta$ determines the size (in terms of a fraction of the input) of the smallest cluster inside the input for which an approximate median can be computed, $\delta$ determines the probability of failure of the algorithm and $\epsilon$ determines the approximation factor.

\begin{algorithm}[H]
\caption{$(1,\ell)$-Median for Subset by Simple Shortcutting\label{alg:1l_median_3_candidates}}
    \begin{algorithmic}[1]
        \Procedure{$(1,\ell)$-Median-$(3+\epsilon)$-Candidates}{$T = \{ \tau_1, \dots, \tau_n \}, \beta, \delta, \epsilon$}
            \State $\epsilon^\prime \gets \epsilon/3$, $C \gets \emptyset$
            \State $S \gets $ sample $\left\lceil -8\beta(\epsilon^\prime)^{-1} (\ln(\delta)-\ln(4)) \right\rceil$ curves from $T$ uniformly and independently
            
            \hspace{\algorithmicindent} with replacement
            \For{$S^\prime \subseteq S$ with $\lvert S^\prime \rvert = \frac{\lvert S \rvert}{2 \beta}$}
                \State $c \gets$ $(1,\ell)$-\textsc{Median}-$34$-\textsc{Approximation}$(S^\prime, \delta/4)$ \Comment{\cref{alg:1l_median_6}}
                \State $\Delta \gets \cost{S^\prime}{c}$, $\Delta_l \gets \frac{\delta n}{2 \lvert S \rvert} \frac{\Delta}{34}$, $\Delta_u \gets \frac{1}{\epsilon^\prime} \Delta$, $C \gets C \cup \{c\}$
                \For{$s \in S^\prime$}
                    \State $P \gets \emptyset$
                    \For{$i \in \{1, \dots, \lvert s \rvert\}$}
                        \State $P \gets P \cup \mathbb{G}\left(B\left(v^s_i, (1+\epsilon^\prime)\Delta_u\right),\frac{2 \epsilon^\prime}{n\sqrt{d}}\Delta_l\right)$ \Comment{$v^s_i$: $i$\textsuperscript{th} vertex of $s$}
                    \EndFor
                    \State $C \gets C\ \cup$ set of all polygonal curves with $2\ell-2$ vertices from $P$
                \EndFor
            \EndFor
            \State \Return $C$
        \EndProcedure
    \end{algorithmic}
\end{algorithm}

We prove the quality of approximation of \cref{alg:1l_median_3_candidates}.

\begin{theorem}
    \label{theo:1l_median_3_guarantee}
    Given three parameters $\beta \in [1, \infty)$, $\delta, \epsilon \in (0,1)$ and a set $T = \{ \tau_1, \dots, \tau_n \} \subset \mathbb{X}^d_m$ of polygonal curves, with probability at least $1-\delta$ the set of candidates that \cref{alg:1l_median_3_candidates} returns contains a $(3+\epsilon)$-approximate $(1,\ell)$-median with up to $2\ell-2$ vertices for any $T^\prime \subseteq T$, if $\lvert T^\prime \rvert \geq \frac{1}{\beta} \lvert T \rvert$.
\end{theorem}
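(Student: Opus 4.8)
The plan is as follows. Fix an arbitrary $T'\subseteq T$ with $\lvert T'\rvert\ge \lvert T\rvert/\beta$, let $c^*$ be an optimal $(1,\ell)$-median for $T'$, and write $\mathrm{opt}=\cost{T'}{c^*}$ and $\mu=\mathrm{opt}/\lvert T'\rvert$ for the average distance of a curve in $T'$ to $c^*$ (if $\mathrm{opt}=0$ the claim is trivial, since \cref{alg:1l_median_6} then returns $c^*$ itself on any subsample). I would identify one iteration $S'$ of the outer loop of \cref{alg:1l_median_3_candidates} in which $C$ provably acquires a curve of complexity at most $2\ell-2$ and cost at most $(3+\epsilon)\,\mathrm{opt}$ for $T'$, and then bound by $\delta$, via a union bound over four events each of probability at least $1-\delta/4$, the chance that this fails.

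\textbf{The four events.} \emph{(E1, superset sampling.)} Each curve of $S$ lies in $T'$ independently with probability at least $1/\beta$, so by \cref{lem:bernoulli_trial_bound} with parameter $1/2$, with probability $1-\delta/4$ we have $\lvert S\cap T'\rvert\ge\lvert S\rvert/(2\beta)$; hence some enumerated $S'$ satisfies $S'\subseteq T'$, and conditioned on which coordinates of $S$ are the hits, those coordinates are independent and uniform in $T'$, so this $S'$ is an independent uniform sample of $T'$ of size $\lvert S\rvert/(2\beta)$. \emph{(E2.)} By \cref{coro:1l_median_34}, the curve $c$ computed in this iteration satisfies $\cost{S'}{c^*_{S'}}\le\Delta:=\cost{S'}{c}\le 34\,\cost{S'}{c^*_{S'}}$ with probability $1-\delta/4$, where $c^*_{S'}$ is an optimal $(1,\ell)$-median for $S'$. \emph{(E3, Markov lower bound.)} Since $\mathbb{E}[\cost{S'}{c^*}]=\lvert S'\rvert\mu$, Markov's inequality gives $\cost{S'}{c^*}<(4/\delta)\lvert S'\rvert\mu$ with probability $1-\delta/4$; chaining $\Delta/34\le\cost{S'}{c^*_{S'}}\le\cost{S'}{c^*}$ with $\lvert S'\rvert/\lvert S\rvert=1/(2\beta)$ and $\lvert T'\rvert\ge n/\beta$ then yields $\Delta_l<\mathrm{opt}$. \emph{(E4, hitting a near-average curve.)} If fewer than an $\epsilon'/(1+\epsilon')$ fraction of $T'$ had distance at most $(1+\epsilon')\mu$ to $c^*$, the remaining curves alone would have cost exceeding $\mathrm{opt}$; hence this fraction is at least $\epsilon'/(1+\epsilon')\ge\epsilon'/2$, and since $\lvert S'\rvert=\Theta(\epsilon'^{-1}\ln(4/\delta))$, with probability $1-\delta/4$ the sample $S'$ contains a curve $s$ with $r:=d_F(s,c^*)\le(1+\epsilon')\mu$.

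\textbf{Conclusion (conditioned on E1--E4).} I would split on $\Delta$. If $\Delta\ge\epsilon'\mu$, then $\Delta_u=\Delta/\epsilon'\ge\mu$, so $(1+\epsilon')\Delta_u\ge(1+\epsilon')\mu\ge r$; \cref{lem:shortcutting_single} with $\sigma=c^*$, $\tau=s$ yields $\sigma'$ with $\lvert\sigma'\rvert\le 2\lvert c^*\rvert-2\le 2\ell-2$, $d_F(\sigma',s)\le r$, and every vertex of $\sigma'$ in $\bigcup_i B(v^s_i,r)\subseteq\bigcup_i B(v^s_i,(1+\epsilon')\Delta_u)$. Snapping each vertex of $\sigma'$ onto the corresponding grid of $P$ displaces it by at most $\tfrac{\sqrt d}{2}\cdot\tfrac{2\epsilon'}{n\sqrt d}\Delta_l=\tfrac{\epsilon'}{n}\Delta_l$, producing $\hat\sigma\in C$ with $d_F(\hat\sigma,\sigma')\le\tfrac{\epsilon'}{n}\Delta_l$ (perturbing each vertex of a curve by at most $\eta$ changes its Fr\'echet distance to any curve by at most $\eta$). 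Summing $d_F(\tau,\hat\sigma)\le d_F(\tau,c^*)+2r+\tfrac{\epsilon'}{n}\Delta_l$ over $\tau\in T'$ and using $\lvert T'\rvert\le n$, $\lvert T'\rvert\mu=\mathrm{opt}$, $r\le(1+\epsilon')\mu$ and $\Delta_l<\mathrm{opt}$,
\[ \cost{T'}{\hat\sigma}\ \le\ \mathrm{opt}+2(1+\epsilon')\,\mathrm{opt}+\epsilon'\,\mathrm{opt}\ =\ (3+3\epsilon')\,\mathrm{opt}\ =\ (3+\epsilon)\,\mathrm{opt}. \]
If instead $\Delta<\epsilon'\mu$, then $c\in C$ already, and summing $d_F(\tau,c)\le d_F(\tau,c^*)+d_F(c^*,s)+d_F(s,c)$ over $\tau\in T'$, with $d_F(c^*,s)\le(1+\epsilon')\mu$ and $d_F(s,c)\le\cost{S'}{c}=\Delta<\epsilon'\mu$, gives $\cost{T'}{c}<\mathrm{opt}+(1+\epsilon')\,\mathrm{opt}+\epsilon'\,\mathrm{opt}=(2+2\epsilon')\,\mathrm{opt}\le(3+\epsilon)\,\mathrm{opt}$. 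Either way $C$ contains a curve of complexity at most $2\ell-2$ that is a $(3+\epsilon)$-approximate $(1,\ell)$-median for $T'$.

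\textbf{Main obstacle.} The delicate point is the case distinction: the algorithm never learns $\mathrm{opt}$ or $\mu$, and the grids built around $s$ are only useful when $\Delta_u\gtrsim\mu$, i.e.\ when $\Delta$ is not too small; the key realization is that exactly when this fails, $\Delta$ is so small that the cheap coarse approximation $c$ has already essentially collapsed onto $c^*$, so it is a $(2+\epsilon)$-approximation on its own. The secondary technical balance is between $\Delta_u$ (the ball radius, which must dominate $d_F(s,c^*)$) and $\Delta_l$ (the grid width, which must be $O(\mathrm{opt})$ so that the total snapping error is at most $\epsilon'\,\mathrm{opt}$), arranged so that the number of grid points per ball --- and hence $\lvert C\rvert$ --- stays constant in $n$ and $m$; this is precisely what the Markov lower bound (E3) and the near-average-fraction bound (E4) deliver. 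Everything else --- that $S'$ is an independent uniform sample conditioned on the hit pattern, and that vertex perturbations change the Fr\'echet distance by at most the perturbation --- is routine.
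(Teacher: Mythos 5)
Your proposal is correct and follows essentially the same route as the paper's proof: superset sampling via the Chernoff bound, Markov's inequality both to lower-bound the optimum (fixing the grid cell width via $\Delta_l$) and to find a near-average curve $s\in S'$, the $34$-approximation to calibrate $\Delta_u$, \cref{lem:shortcutting_single} plus grid snapping, and a fallback case in which the coarse center $c$ is itself a $(2+O(\epsilon))$-approximation. Your case split on $\Delta\ge\epsilon'\mu$ versus $\Delta<\epsilon'\mu$ is just a slightly more direct phrasing of the paper's split on $d_F(c,c^\ast)$, whose Case~1 derives exactly your condition $\cost{S^\prime}{c}\ge\epsilon\mu$ as an intermediate step.
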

\begin{proof}
    We assume that $\lvert T^\prime \rvert \geq \frac{1}{\beta} \lvert T \rvert$. Let $n^\prime$ be the number of sampled curves in $S$ that are elements of $T^\prime$. Clearly, $\expected{n^\prime} \geq \sum_{i=1}^{\lvert S \rvert} \frac{1}{\beta} = \frac{\lvert S \rvert}{\beta}$. Also $n^\prime$ is the sum of independent Bernoulli trials. A Chernoff bound (\conferre \cref{lem:bernoulli_trial_bound}) yields:
    \begin{align*}
        \Pr\left[n^\prime < \frac{\lvert S \rvert}{2\beta}\right] \leq \Pr\left[n^\prime < \frac{1}{2}\expected{n^\prime}\right] \leq \exp\left(-\frac{1}{4}\frac{\lvert S \rvert}{2\beta}\right) \leq \exp\left( \frac{\ln(\delta)-\ln(4)}{\epsilon} \right) = \left(\frac{\delta}{4}\right)^{\frac{1}{\epsilon}} \leq \frac{\delta}{4}.
    \end{align*}
    
    In other words, with probability at most $\delta/4$ no subset $S^\prime \subseteq S$, of cardinality at least $\frac{\lvert S \rvert}{2\beta}$, is a subset of $T^\prime$. We condition the rest of the proof on the contrary event, denoted by $\mathcal{E}_{T^\prime}$, namely, that there is a subset $S^\prime \subseteq S$ with $S^\prime \subseteq T^\prime$ and $\lvert S^\prime \rvert \geq \frac{\lvert S \rvert}{2\beta}$. Note that $S^\prime$ is then a uniform and independent sample of $T^\prime$.

    Now, let $c^\ast \in \argmin\limits_{c \in \mathbb{X}^d_\ell} \cost{T^\prime}{c}$ be an optimal $(1,\ell)$-median for $T^\prime$. The expected distance between $s \in S^\prime$ and $c^\ast$ is \[\expected{d_F(s, c^\ast)\ \vert\ \mathcal{E}_{T^\prime}} = \sum_{\tau \in T^\prime} d_F(c^\ast, \tau) \cdot \frac{1}{\lvert T^\prime \rvert} =  \frac{\cost{T^\prime}{c^\ast}}{\lvert T^\prime \rvert}.\] By linearity we have $\expected{\cost{S^\prime}{c^\ast}\ \vert\ \mathcal{E}_{T^\prime}} = \frac{\lvert S^\prime \rvert}{\lvert T^\prime \rvert} \cost{T^\prime}{c^\ast}$. Markov's inequality yields:
    \begin{align*}
        \Pr\left[ \frac{\delta\lvert T^\prime \rvert}{4\lvert S^\prime \rvert}\cost{S^\prime}{c^\ast} >  \cost{T^\prime}{c^\ast}\ \Big\vert\ \mathcal{E}_{T^\prime}\right] \leq \frac{\delta}{4}.
    \end{align*}
    We conclude that with probability at most $\delta/4$ we have $\frac{\delta \lvert T^\prime \rvert}{4\lvert S^\prime \rvert} \cost{S^\prime}{c^\ast} > \cost{T^\prime}{c^\ast}$.
 
    Using Markov's inequality again, for every $s \in S^\prime$ we have \[\Pr\left[d_F(s, c^\ast) > (1+\epsilon) \frac{\cost{T^\prime}{c^\ast}}{\lvert T^\prime \rvert}\ \Big\vert\ \mathcal{E}_{T^\prime}\right] \leq \frac{1}{1+\epsilon},\] therefore by independence \[\Pr\left[\bigwedge_{s \in S^\prime} \left(d_F(s, c^\ast) > (1+\epsilon)\frac{\cost{T^\prime}{c^\ast}}{\lvert T^\prime \rvert}\right)\ \Big\vert\ \mathcal{E}_{T^\prime}\right] \leq \frac{1}{(1+\epsilon)^{\lvert S^\prime \rvert}} \leq \exp\left(-\frac{\epsilon}{2}\frac{\lvert S \rvert}{2\beta}\right).\] Hence, with probability at most $\exp\left(-\frac{\epsilon \left\lceil -\frac{8\beta(\ln(\delta) - \ln(4))}{\epsilon} \right\rceil}{4\beta}\right) \leq \delta^2/16 \leq \delta/4$ there is no $s \in S^\prime$ with $d_F(s, c^\ast) \leq (1+\epsilon) \frac{\cost{T^\prime}{c^\ast}}{\lvert T^\prime \rvert}$. Also, with probability at most $\delta/4$ \cref{alg:1l_median_6} fails to compute a $34$-approximate $(1,\ell)$-median $c \in \mathbb{X}^d_\ell$ for $S^\prime$, \conferre \cref{coro:1l_median_34}.
    \\
    
    Using a union bound over these bad events, we conclude that with probability at least $1-\delta$ all of the following events occur simultaneously:
    \begin{itemize}
        \item There is a subset $S^\prime \subseteq S$ of cardinality at least $\lvert S \rvert /(2\beta)$ that is a uniform and independent sample of $T^\prime$,
        \item there is a curve $s \in S^\prime$ with $d_F(s, c^\ast) \leq (1+\epsilon)\frac{\cost{T^\prime}{c^\ast}}{\lvert T^\prime \rvert}$,
        \item \cref{alg:1l_median_6} computes a polygonal curve $c \in \mathbb{X}^d_\ell$ with $\cost{S^\prime}{c^\ast_{S^\prime}} \leq \cost{S^\prime}{c} \leq 34 \cost{S^\prime}{c^\ast_{S^\prime}}$, where $c^\ast_{S^\prime} \in \mathbb{X}^d_\ell$ is an optimal $(1,\ell)$-median for $S^\prime$,
        \item and it holds that $\frac{\delta \lvert T^\prime \rvert}{4 \lvert S^\prime \rvert} \cost{S^\prime}{c^\ast} \leq \cost{T^\prime}{c^\ast}$.
    \end{itemize}
    
    Since $c^\ast_{S^\prime}$ is an optimal $(1,\ell)$-median for $S^\prime$ we get the following from the last two items:
    \begin{align*}
        \cost{T^\prime}{c^\ast} \geq \frac{\delta \lvert T^\prime \rvert}{4 \lvert S^\prime \rvert} \cost{S^\prime}{c^\ast} \geq \frac{\delta \lvert T^\prime \rvert}{4 \lvert S^\prime \rvert} \cost{S^\prime}{c^\ast_{S^\prime}} \geq \frac{\delta \lvert T^\prime \rvert}{4 \lvert S^\prime \rvert} \frac{\cost{S^\prime}{c}}{34}.
    \end{align*}
    
    We now distinguish between two cases:
    \\
    
    \textbf{Case 1:} $d_F(c,c^\ast) \geq (1+2\epsilon) \frac{\cost{T^\prime}{c^\ast}}{\lvert T^\prime \rvert}$
    
    The triangle-inequality yields 
    \begin{align*}
        d_F(c,s) & \geq{} d_F(c,c^\ast) - d_F(c^\ast,s) \geq d_F(c,c^\ast) - (1+\epsilon) \frac{\cost{T^\prime}{c^\ast}}{\lvert T^\prime \rvert} \\
        & \geq{} (1+2\epsilon) \frac{\cost{T^\prime}{c^\ast}}{\lvert T^\prime \rvert} - (1+\epsilon) \frac{\cost{T^\prime}{c^\ast}}{\lvert T^\prime \rvert} = \epsilon \frac{\cost{T^\prime}{c^\ast}}{\lvert T^\prime \rvert}.
    \end{align*}
    
    As a consequence, $\cost{S^\prime}{c} \geq \epsilon \frac{\cost{T^\prime}{c^\ast}}{\lvert T^\prime \rvert} \Leftrightarrow \frac{\cost{T^\prime}{c^\ast}}{\lvert T^\prime \rvert} \leq \frac{1}{\epsilon} \cost{S^\prime}{c}$.
    
    Now, let $v^{s}_1, \dots, v^{s}_{\lvert s \rvert}$ be the vertices of $s$. By \cref{lem:shortcutting_single} there exists a polygonal curve $c^\prime$ with up to $2\ell - 2$ vertices, every vertex contained in one of $B(v^{s}_1, d_F(c^\ast, s)), \dots, B(v^{s}_{\lvert s \rvert}, d_F(c^\ast, s))$ and $d_F(s, c^\prime) \leq d_F(s, c^\ast) \leq (1+\epsilon) \frac{\cost{T^\prime}{c^\ast}}{\lvert T^\prime \rvert} \leq (1+\epsilon) \frac{\cost{S^\prime}{c}}{\epsilon}$.
    
    In the set of candidates, that \cref{alg:1l_median_3_candidates} returns, a curve $c^{\prime\prime}$ with up to $2\ell-2$ vertices from the union of the grid covers and distance at most $\frac{\epsilon\frac{2\delta n}{4\lvert S^\prime \rvert}\cost{S^\prime}{c}}{n} \leq \frac{\epsilon\frac{\delta\lvert T^\prime \rvert}{4\lvert S^\prime \rvert}\cost{S^\prime}{c}}{\lvert T^\prime \rvert} \leq \epsilon \frac{\cost{T^\prime}{c^\ast}}{\lvert T^\prime \rvert}$ between every corresponding pair of vertices of $c^\prime$ and $c^{\prime\prime}$ is contained. We conclude that $d_F(c^\prime, c^{\prime\prime}) \leq \frac{\epsilon \cost{T^\prime}{c^\ast}}{\lvert T^\prime \rvert}$.
    
    We can now bound the cost of $c^{\prime\prime}$ as follows:
    \begin{align*}
        \cost{T^\prime}{c^{\prime\prime}} & ={} \sum_{\tau \in T^\prime} d_F(\tau, c^{\prime\prime}) \leq \sum_{\tau \in T^\prime} \left(d_F(\tau, c^\prime) + \frac{\epsilon \cost{T^\prime}{c^\ast}}{\lvert T^\prime \rvert}\right) \\
        & \leq{} \sum_{\tau \in T^\prime} (d_F(\tau, c^\ast) + d_F(c^\ast, c^\prime)) + \epsilon \cost{T}{c^\ast} \\
        & \leq{} \sum_{\tau \in T^\prime} (d_F(\tau, c^\ast) + d_F(c^\ast, s) + d_F(s, c^\prime)) + \epsilon \cost{T^\prime}{c^\ast} \leq{} (3+3\epsilon) \cost{T^\prime}{c^\ast}.
    \end{align*}
    
    \textbf{Case 2:} $d_F(c,c^\ast) < (1+2\epsilon) \frac{\cost{T^\prime}{c^\ast}}{\lvert T^\prime \rvert}$
    
    The cost of $c$ can easily be bounded:
    \begin{align*}
        \cost{T^\prime}{c} \leq \sum_{\tau \in T^\prime} (d_F(\tau, c^\ast) + d_F(c^\ast, c)) < \cost{T^\prime}{c^\ast} + (1+2\epsilon) \cost{T^\prime}{c^\ast} = (2+2\epsilon) \cost{T^\prime}{c^\ast}.
    \end{align*}
    
    The claim follows by rescaling $\epsilon$ by $\frac{1}{3}$.
\end{proof}

Next we analyse the worst-case running-time of \cref{alg:1l_median_3_candidates} and the number of candidates it returns.

\begin{theorem}
    \label{theo:1l_median_3_running_time}
    \cref{alg:1l_median_3_candidates} has running-time and returns number of candidates $2^{O\left(\frac{(-\ln(\delta))^2 \cdot \beta}{\epsilon^2} + \log(m) \right)}$.
\end{theorem}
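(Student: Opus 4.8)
The plan is to bound three quantities separately and then combine them: (i) the number of iterations of the outer \textbf{for}-loop over subsets $S^\prime\subseteq S$; (ii) the number of candidate curves added to $C$ within one iteration; and (iii) the time spent per iteration. Throughout I will write $|S| = \lceil 8\beta(\ln 4-\ln\delta)/\epsilon^\prime\rceil = O(\beta(-\ln\delta)/\epsilon)$ for the sample size, using $\epsilon^\prime=\epsilon/3=\Theta(\epsilon)$. For (i), the loop ranges over all $S^\prime\subseteq S$ with $|S^\prime| = |S|/(2\beta)$, so it is executed $\binom{|S|}{|S|/(2\beta)}\le 2^{|S|} = 2^{O(\beta(-\ln\delta)/\epsilon)}$ times; this is the only genuinely exponential factor, and $\beta(-\ln\delta)/\epsilon = O((-\ln\delta)^2\beta/\epsilon^2)$ since $\beta\ge1$ and $1/\epsilon,\,-\ln\delta$ are $\Omega(1)$ in the relevant parameter range.

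For (ii), besides the single curve $c$ returned by \cref{alg:1l_median_6}, one iteration adds every polygonal curve on at most $2\ell-2$ vertices drawn from the point set $P$. To bound $|P|$ I would first count the grid points generated for one vertex $v^s_i$: the ball $B(v^s_i,(1+\epsilon^\prime)\Delta_u)$ has radius $(1+\epsilon^\prime)\Delta_u$ and the grid cell width is $\tfrac{2\epsilon^\prime}{n\sqrt d}\Delta_l$, so it contains $O((n\Delta_u/(\epsilon^\prime\Delta_l))^d)$ grid points. Substituting $\Delta_u=\Delta/\epsilon^\prime$ and $\Delta_l=\tfrac{\delta n}{2|S|}\cdot\tfrac{\Delta}{34}$ gives $\Delta_u/\Delta_l = \tfrac{68|S|}{\epsilon^\prime\delta n}$, so, as $d$ is a constant, one vertex yields $O((|S|/(\epsilon^2\delta))^d)$ points. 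Summing over the at most $|S|$ curves $s\in S^\prime$, each of complexity at most $m$, gives $|P| = O(|S|\,m\,(|S|/(\epsilon^2\delta))^d)$ and hence $\log|P| = O(\log m + \log\beta + (-\ln\epsilon) + (-\ln\delta))$. Since $\ell$ is a constant, the number of polygonal curves on at most $2\ell-2$ vertices from $P$ is $O(|P|^{2\ell-2}) = |P|^{O(1)}$, so one iteration contributes at most $1+|P|^{O(1)} = 2^{O(\log|P|)}$ candidates.

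For (iii), one iteration consists of: a call to \cref{alg:1l_median_6}, which by \cref{coro:1l_median_34} runs in $\poly{m}\cdot\poly{\log(1/\delta)}$ time, independently of $|S^\prime|$; the evaluation of $\cost{S^\prime}{c}$, which is $O(|S|)$ Fréchet computations between a curve of complexity $\le\ell$ and one of complexity $\le m$, hence $O(|S|\,m\log m)$ time via \cite{alt_godau}; the construction of $P$, in time polynomial in $|P|$, $|S|$ and $m$; and the enumeration of the polygonal curves on at most $2\ell-2$ vertices from $P$, in $|P|^{O(1)}\cdot\poly{\ell,d}$ time. Altogether one iteration runs in $\poly{m}\cdot\poly{\log(1/\delta)}\cdot\poly{|S|}\cdot|P|^{O(1)} = 2^{O(\log m + \log|P|)}$ time, since $\poly{\log(1/\delta)}$ and $\poly{|S|}$ contribute only $O(\log m + \log\beta + (-\ln\epsilon) + (-\ln\delta))$ to the exponent, the same terms already present in $\log|P|$. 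Multiplying the iteration count from (i) by the per-iteration candidate count from (ii) bounds $|C|$, and multiplying it by the per-iteration time bounds the running time; in both cases one obtains $2^{O(\beta(-\ln\delta)/\epsilon)}\cdot 2^{O(\log m + \log|P|)} = 2^{O(\beta(-\ln\delta)/\epsilon + \log m + \log\beta + (-\ln\epsilon) + (-\ln\delta))}$, and absorbing $\log\beta$, $-\ln\epsilon$, $-\ln\delta$ and the $\log|S|$ hidden inside $\log|P|$ into $(-\ln\delta)^2\beta/\epsilon^2$ (once more using $\beta\ge1$, $\epsilon\le1$, $-\ln\delta=\Omega(1)$) gives the claimed $2^{O((-\ln\delta)^2\beta/\epsilon^2 + \log m)}$ for both.

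I expect the main obstacle to be keeping the two very different growth behaviours cleanly separated and then correctly combined: the subset enumeration is exponential in $|S|$, whereas the grid part stays only polynomial in the grid resolution precisely because the exponent $2\ell-2$ is a constant, so one must be careful not to over-count the grid contribution. Relatedly, one must check that the grid resolution ratio $\Delta_u/\Delta_l$, which carries the awkward factor $|S|/(\delta n)$, enters the exponent only additively (via $\log|P|$) rather than multiplicatively, and that every lower-order term thereby produced is indeed dominated by $(-\ln\delta)^2\beta/\epsilon^2 + \log m$.
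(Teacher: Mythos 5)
Your proposal is correct and follows essentially the same route as the paper's proof: bound the number of subset iterations via the binomial coefficient, note that the grid-resolution ratio makes each ball contribute only $O\left(\left(\sqrt{d}\,\lvert S\rvert(1+\epsilon)/(\epsilon^2\delta)\right)^d\right)$ points (the factor $n$ cancels), observe that the per-iteration candidate count and work are polynomial in $m,\beta,1/\delta,1/\epsilon$ because $\ell$ and $d$ are constants, and multiply. The only harmless deviations are that you bound $\binom{\lvert S\rvert}{\lvert S\rvert/(2\beta)}$ by $2^{\lvert S\rvert}$ rather than by $2^{O\left((\lvert S\rvert/(2\beta))\log\lvert S\rvert\right)}$ and aggregate the grids over all of $S^{\prime}$ instead of per curve; both overestimates still lie within the claimed $2^{O\left((-\ln\delta)^2\beta/\epsilon^2+\log m\right)}$.
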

\begin{proof}
    The sample $S$ has size $O\left(\frac{-\ln(\delta) \cdot \beta}{\epsilon}\right)$ and sampling it takes time $O\left(\frac{-\ln(\delta) \cdot \beta}{\epsilon}\right)$. Let $n_S = \lvert S \rvert$. The for-loop runs \[\binom{n_S}{\frac{n_S}{2 \beta}} \in 2^{O\left(\frac{n_S}{2\beta} \log n_S\right)} \subset 2^{O\left(\frac{(-\ln(\delta))^2 \cdot \beta}{\epsilon^2}\right)}\] times. In each iteration, we run \cref{alg:1l_median_6}, taking time $O(m^2 \log(m) (-\ln^2 \delta) + m^3 \log m)$ (\conferre \cref{coro:1l_median_34}), we compute the cost of the returned curve with respect to $S^\prime$, taking time $O\left(\frac{-\ln(\delta)}{\epsilon} \cdot m \log(m)\right)$, and per curve in $S^\prime$ we build up to $m$ grids of size \[\left(\frac{\frac{(1+\epsilon)\Delta}{\epsilon}}{\frac{2\epsilon 2 \delta n \Delta}{n\sqrt{d} 4 \lvert S \rvert}}\right)^d = \left(\frac{\sqrt{d} \lvert S \rvert (1+\epsilon)}{\epsilon^2 \delta}\right)^d \in O\left(\frac{\beta^d(-\ln \delta)^d}{\epsilon^{3d}\delta^d}\right)\] each. For each curve $s \in S^\prime$, \cref{alg:1l_median_3_candidates} then enumerates all combinations of $2\ell-2$ points from these up to $m$ grids, resulting in \[O\left(\frac{m^{2\ell-2} \beta^{2\ell d-2d} (-\ln \delta)^{2\ell d-2d}}{\epsilon^{6\ell d-6d}\delta^{2\ell d-2d}}\right)\] candidates per $s \in S^\prime$, per iteration of the for-loop. Thus, \cref{alg:1l_median_3_candidates} computes $O\left(\poly{m,\beta,\delta^{-1},\epsilon^{-1}}\right)$ candidates per iteration of the for-loop and enumeration also takes time $O\left(\poly{m,\beta,\delta^{-1},\epsilon^{-1}}\right)$ per iteration of the for-loop.
    
    All in all, we have running-time and number of candidates $2^{O\left(\frac{(-\ln(\delta))^2 \cdot \beta}{\epsilon^2} + \log(m)\right)}$.
\end{proof}

\section{More Practical Approximation for \texorpdfstring{$(1,\ell)$-}{(1,l)-}Median by Simple Shortcutting}

The following algorithm is a modification of \cref{alg:1l_median_3_candidates}. It is more practical since it needs to cover only up to $m$ (small) balls, using grids. Unfortunately, it is not compatible with the superset-sampling technique and can therefore not be used as plugin in \cref{alg:kl_median}.

\begin{algorithm}[H]
\caption{$(1,\ell)$-Median by Simple Shortcutting\label{alg:1l_median_5}}
    \begin{algorithmic}[1]
        \Procedure{$(1,\ell)$-Median-$(5+\epsilon)$}{$T = \{ \tau_1, \dots, \tau_n \}, \delta, \epsilon$}
            \State $\widehat{c} \gets$ $(1,\ell)$-\textsc{Median}-$34$-\textsc{Approximation}$(T,\delta/2)$ \Comment{\cref{alg:1l_median_6}}
            \State $\Delta \gets \frac{\cost{T}{\widehat{c}}}{34}$, $\epsilon^\prime \gets \epsilon/9$, $P \gets \emptyset$
            \State $S \gets $ sample $\left\lceil -2(\epsilon^\prime)^{-1} (\ln(\delta)-\ln(4)) \right\rceil$ curves from $T$ uniformly and independently
            
            \hspace{\algorithmicindent} with replacement
            \State $W \gets$ sample $\lceil -64(\epsilon^\prime)^{-2}(\ln(\delta) - \ln(\lceil -8 (\epsilon^\prime)^{-1} (\ln(\delta)-\ln(4))\rceil)) \rceil$ curves from $T$ 
            
            \hspace{\algorithmicindent} uniformly and independently with replacement
            \State $c \gets \argmin\limits_{s \in S} \cost{W}{s}$
            \For{$i \in \{1, \dots, \lvert c \rvert\}$}
                \State $P \gets P \cup \mathbb{G}\left(B\left(v^c_i, \frac{(3+4\epsilon^\prime)}{n}34 \Delta\right),\frac{2 \epsilon^\prime \Delta}{n\sqrt{d}}\right)$ \Comment{$v^c_i$ is the $i$\textsuperscript{th} vertex of $c$}
            \EndFor
            \State $C \gets$ set of all polygonal curves with $2\ell-2$ vertices from $P$
            \State \Return $\argmin\limits_{c^\prime \in C} \cost{T}{c^\prime}$
        \EndProcedure
    \end{algorithmic}
\end{algorithm}

We prove the quality of approximation of \cref{alg:1l_median_5}.

\begin{theorem}
    Given two parameters $\delta, \epsilon \in (0,1)$ and a set $T = \{\tau_1, \dots, \tau_n \} \subset \mathbb{X}^d_m$ of polygonal curves, with probability at least $1-\delta$ \cref{alg:1l_median_5} returns a $(5+\epsilon)$-approximate $(1,\ell)$-median for $T$ with up to $2\ell-2$ vertices.
\end{theorem}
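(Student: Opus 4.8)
The plan is to condition on a high-probability event on which the curve $c$ picked in \cref{alg:1l_median_5} is nearly as cheap as an optimal $(1,\ell)$-median $c^\ast$ of $T$, and then to exhibit, among the candidates of $C$, a rounded shortcutting of $c^\ast$ of complexity at most $2\ell-2$ with cost at most $(5+\epsilon)\cost{T}{c^\ast}$; since \cref{alg:1l_median_5} outputs $\argmin_{c^\prime \in C}\cost{T}{c^\prime}$, this proves the statement. Throughout, $\epsilon^\prime=\epsilon/9$.

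First I would fix three good events. By \cref{coro:1l_median_34}, with probability at least $1-\delta/2$ the curve $\widehat c$ satisfies $\cost{T}{c^\ast}\le\cost{T}{\widehat c}\le 34\,\cost{T}{c^\ast}$, so $\Delta=\cost{T}{\widehat c}/34$ obeys $\Delta\le\cost{T}{c^\ast}$ and $\cost{T}{c^\ast}\le 34\Delta$. Since each $s\in S$ is uniform in $T$, $\expected{d_F(s,c^\ast)}=\cost{T}{c^\ast}/n$, and Markov's inequality gives $\Pr[d_F(s,c^\ast)>(1+\epsilon^\prime)\cost{T}{c^\ast}/n]\le 1/(1+\epsilon^\prime)$; by independence of the $\lvert S\rvert$ draws and the chosen sample size, with probability at least $1-\delta/4$ some $s^\ast\in S$ has $d_F(s^\ast,c^\ast)\le(1+\epsilon^\prime)\cost{T}{c^\ast}/n$, hence $\cost{T}{s^\ast}\le\cost{T}{c^\ast}+n\,d_F(c^\ast,s^\ast)\le(2+\epsilon^\prime)\cost{T}{c^\ast}$. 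Finally, applying \cref{theo:indyk_median} to the sample $W$ with a union bound over the (at most $\lvert S\rvert$) curves $t\in S$ with $\cost{T}{t}>(1+\epsilon^\prime)\cost{T}{s^\ast}$, the chosen size of $W$ ensures that with probability at least $1-\delta/4$ none of them beats $s^\ast$ on $W$, so the minimiser $c=\argmin_{s\in S}\cost{W}{s}$ has $\cost{T}{c}\le(1+\epsilon^\prime)\cost{T}{s^\ast}\le(1+\epsilon^\prime)(2+\epsilon^\prime)\cost{T}{c^\ast}\le(2+4\epsilon^\prime)\cost{T}{c^\ast}$, using $\epsilon^\prime\le 1$. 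A union bound keeps the overall failure probability at most $\delta$; I condition on its complement below.

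Then I would carry out the geometric part. Summing $d_F(c,c^\ast)\le d_F(c,\tau)+d_F(\tau,c^\ast)$ over $\tau\in T$ yields
\[ d_F(c,c^\ast)\ \le\ \tfrac1n\big(\cost{T}{c}+\cost{T}{c^\ast}\big)\ \le\ \tfrac{3+4\epsilon^\prime}{n}\,\cost{T}{c^\ast}\ \le\ \tfrac{3+4\epsilon^\prime}{n}\,34\Delta, \]
which is exactly the radius of the balls used in \cref{alg:1l_median_5}. Invoking \cref{lem:shortcutting_single} with $\sigma=c^\ast$ and $\tau=c$ gives a curve $c^\prime$ with $\lvert c^\prime\rvert\le 2\lvert c^\ast\rvert-2\le 2\ell-2$, with $d_F(c^\prime,c)\le d_F(c^\ast,c)$, and with every vertex in some $B(v^c_i,d_F(c^\ast,c))\subseteq B(v^c_i,\tfrac{3+4\epsilon^\prime}{n}34\Delta)$, the latter being covered by the grid of cell width $\tfrac{2\epsilon^\prime\Delta}{n\sqrt d}$ that \cref{alg:1l_median_5} adds to $P$. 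Rounding each vertex of $c^\prime$ to its grid point produces a curve $c^{\prime\prime}$ with all vertices in $P$, at most $2\ell-2$ vertices, and $\lVert v^{c^{\prime\prime}}_i-v^{c^\prime}_i\rVert\le\tfrac{\sqrt d}{2}\cdot\tfrac{2\epsilon^\prime\Delta}{n\sqrt d}=\tfrac{\epsilon^\prime\Delta}{n}$, so $d_F(c^\prime,c^{\prime\prime})\le\tfrac{\epsilon^\prime\Delta}{n}$ and $c^{\prime\prime}$ is a candidate in $C$. A final triangle-inequality chain gives
\[ \cost{T}{c^{\prime\prime}}\ \le\ \cost{T}{c}+n\,d_F(c,c^\prime)+\epsilon^\prime\Delta\ \le\ \cost{T}{c}+n\,d_F(c,c^\ast)+\epsilon^\prime\,\cost{T}{c^\ast}, \]
and substituting $\cost{T}{c}\le(2+4\epsilon^\prime)\cost{T}{c^\ast}$, $n\,d_F(c,c^\ast)\le(3+4\epsilon^\prime)\cost{T}{c^\ast}$ and $\Delta\le\cost{T}{c^\ast}$ yields $\cost{T}{c^{\prime\prime}}\le(5+9\epsilon^\prime)\cost{T}{c^\ast}=(5+\epsilon)\cost{T}{c^\ast}$. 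As $\lvert c^{\prime\prime}\rvert\le 2\ell-2$ and the returned curve is no costlier than $c^{\prime\prime}$, the statement follows.

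The hard part will be lining up the constants: one must replace the coarse "at least $n/2$ input curves lie within $2\cost{T}{c^\ast}/n$ of $c^\ast$" argument used for \cref{alg:1l_median_6} by the probabilistic Markov bound, so that $s^\ast$ — and through it $c$ — lands within $(1+\epsilon^\prime)\cost{T}{c^\ast}/n$ rather than $2\cost{T}{c^\ast}/n$ of $c^\ast$. Only with this sharper factor does the derived bound $d_F(c,c^\ast)\le\tfrac{3+4\epsilon^\prime}{n}34\Delta$ fit exactly inside the ball radius hard-wired into the algorithm, and only then do the accumulated error terms sum to precisely $9\epsilon^\prime$, which is why the algorithm sets $\epsilon^\prime=\epsilon/9$.
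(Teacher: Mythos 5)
Your proposal is correct and follows essentially the same route as the paper's proof: the same three good events (the $34$-approximation $\widehat{c}$ via \cref{coro:1l_median_34}, a Markov argument giving some $s^\ast\in S$ within $(1+\epsilon^\prime)\cost{T}{c^\ast}/n$ of $c^\ast$, and \cref{theo:indyk_median} over $W$ to certify the selected $c$), followed by \cref{lem:shortcutting_single} applied to $c^\ast$ with respect to $c$, grid rounding with error $\epsilon^\prime\Delta/n$, and the rescaling $\epsilon^\prime=\epsilon/9$. The only cosmetic difference is that you bound $\cost{T}{c}$ first and obtain $d_F(c,c^\ast)\leq\frac{1}{n}(\cost{T}{c}+\cost{T}{c^\ast})$ by averaging, whereas the paper rearranges the summed triangle inequality directly; both yield the same $(3+4\epsilon^\prime)$ radius and the same final factor.
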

\begin{proof}
    Let $c^\ast \in \argmin\limits_{c \in \mathbb{X}^d_\ell} \cost{T}{c}$ be an optimal $(1,\ell)$-median for $T$.
 
    The expected distance between $s \in S$ and $c^\ast$ is \[\expected{d_F(s, c^\ast)} = \sum_{i=1}^n d_F(c^\ast, \tau_i) \cdot \frac{1}{n} = \frac{\cost{T}{c^\ast}}{n}.\]
 
    Now using Markov's inequality, for every $s \in S$ we have \[\Pr[d_F(s, c^\ast) > (1+\epsilon)\cost{T}{c^\ast}/n] \leq \frac{\cost{T}{c^\ast}n^{-1}}{(1+\epsilon)\cost{T}{c^\ast}n^{-1}} = \frac{1}{1+\epsilon},\] therefore by independence \[\Pr\left[\bigwedge_{s \in S} (d_F(s, c^\ast) > (1+\epsilon)\cost{T}{c^\ast}/n)\right] \leq \frac{1}{(1+\epsilon)^{\lvert S \rvert}} \leq \exp\left(-\frac{\epsilon \lvert S \rvert}{2}\right).\] Hence, with probability at most $\exp\left(-\frac{\epsilon \left\lceil -\frac{2(\ln(\delta) - \ln(4))}{\epsilon} \right\rceil}{2}\right) \leq \delta/4$ there is no $s \in S$ with $d_F(s, c^\ast) \leq (1+\epsilon) \frac{\cost{T}{c^\ast}}{n}$. Now, assume there is a $s \in S$ with $d_F(s, c^\ast) \leq (1+\epsilon) \cost{T}{c^\ast}/n$. We do not want any $t \in S \setminus \{s\}$ with $d_F(t, c^\ast) > (1+\epsilon)d_F(s, c^\ast)$ to have $\cost{W}{t} \leq \cost{W}{s}$. Using \cref{theo:indyk_median}, we conclude that this happens with probability at most \[\exp\left(-\frac{\epsilon^2 \lceil -64\epsilon^{-2}(\ln(\delta) - \ln(\lceil -8 (\epsilon^\prime)^{-1} (\ln(\delta)-\ln(4))\rceil)) \rceil}{64}\right) \leq \frac{\delta}{\lceil -8 (\epsilon^\prime)^{-1} (\ln(\delta)-\ln(4))\rceil} \leq \frac{\delta}{4 \lvert S \rvert},\] for each $t \in S \setminus \{s\}$. Also, with probability at most $\delta/2$ \cref{alg:1l_median_6} fails to compute a $34$-approximate $(1,\ell)$-median $\widehat{c} \in \mathbb{X}^d_\ell$ for $T$, \conferre \cref{coro:1l_median_34}.
    
    Using a union bound over these bad events, we conclude that with probability at least $1-\delta$, \cref{alg:1l_median_5} samples a curve $t \in S$ with $\cost{T}{t} \leq (1+\epsilon) \cost{T}{s}$ and \cref{alg:1l_median_6} computes a $34$-approximate $(1,\ell)$-median $\widehat{c} \in \mathbb{X}^d_\ell$ for $T$, i.e., $\cost{T}{c^\ast} \leq 34 \Delta = \cost{T}{\widehat{c}} \leq 34 \cost{T}{c^\ast}$. Let $v^{t}_1, \dots, v^{t}_{\lvert t \rvert}$ be the vertices of $t$. By \cref{lem:shortcutting_single} there exists a polygonal curve $c^\prime$ with up to $2\ell - 2$ vertices, every vertex contained in one of $B(v^{t}_1, d_F(c^\ast, t)), \dots, B(v^{t}_{\lvert t \rvert}, d_F(c^\ast, t))$ and $d_F(t, c^\prime) \leq d_F(t, c^\ast)$. Using the triangle-inequality yields
    \begin{align*}
        \sum_{\tau \in T} (d_F(t, c^\ast) - d_F(\tau, c^\ast)) \leq \sum_{\tau \in T} d_F(t, \tau) \leq (1+\epsilon) \sum_{\tau \in T} d_F(s, \tau) \leq (1+\epsilon) \sum_{\tau \in T} (d_F(\tau, c^\ast) + d_F(c^\ast, s)),
    \end{align*}
    which is equivalent to
    \begin{align*}
        n \cdot d_F(t, c^\ast) \leq (2+\epsilon) \cost{T}{c^\ast}  + (1+\epsilon) n  (1+\epsilon) \cost{T}{c^\ast}/n \Leftrightarrow{} & d_F(t, c^\ast) \leq (3 + 4\epsilon) \cost{T}{c^\ast}/n.
    \end{align*}
    
    Hence, we have $d_F(t, c^\prime) \leq d_F(t, c^\ast) \leq (3+4\epsilon) \cost{T}{c^\ast}/n \leq (3+ 4\epsilon) 34 \Delta/n$.
    
    In the last step, \cref{alg:1l_median_5} returns a curve $c^{\prime\prime}$ from the set $C$ of all curves with up to $2\ell-2$ vertices from $P$, the union of the grid covers, that evaluates best. We can assume that $c^{\prime\prime}$ has distance at most $\frac{\epsilon\Delta}{n} \leq \epsilon\frac{\cost{T}{c^\ast}}{n}$ between every corresponding pair of vertices of $c^\prime$ and $c^{\prime\prime}$. We conclude that $d_F(c^\prime, c^{\prime\prime}) \leq \frac{\epsilon\Delta}{n} \leq \epsilon\frac{\cost{T}{c^\ast}}{n}$.
    
    We can now bound the cost of $c^{\prime\prime}$ as follows:
    \begin{align*}
        \cost{T}{c^{\prime\prime}} & ={} \sum_{\tau \in T} d_F(\tau, c^{\prime\prime}) \leq \sum_{\tau \in T} \left(d_F(\tau, c^\prime) + \frac{\epsilon \Delta}{n}\right) \leq \sum_{\tau \in T} (d_F(\tau, t) + d_F(t, c^\prime)) + \epsilon \cost{T}{c^\ast} \\
        & \leq{} (1+\epsilon) \cost{T}{s} + (3+5\epsilon) \cost{T}{c^\ast} \\
        & \leq{} (1+\epsilon) \sum_{\tau \in T}(d_F(\tau, c^\ast) + d_F(c^\ast, s)) + (3+5\epsilon) \cost{T}{c^\ast} \\
        & \leq{} (1+\epsilon)\cost{T}{c^\ast} + (1+\epsilon)^2 \cost{T}{c^\ast} + (3+5\epsilon) \cost{T}{c^\ast} \\
        & \leq{} (5 + 9\epsilon) \cost{T}{c^\ast}
    \end{align*}
    
    The claim follows by rescaling $\epsilon$ by $\frac{1}{9}$.
\end{proof}

We analyse the worst-case running-time of \cref{alg:1l_median_5}.

\begin{theorem}
    \cref{alg:1l_median_5} has running-time $O\left(\frac{nm^{2\ell-1} \log(m)}{\epsilon^{(2\ell-2)d}} + \frac{m^3 \log(m) (-\ln(\delta))^2}{\epsilon^3}\right)$.
\end{theorem}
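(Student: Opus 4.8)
The plan is to walk through the algorithm line by line and tally the cost of each operation, exactly as was done for \cref{coro:1l_median_34} and \cref{theo:1l_median_3_running_time}. First I would account for the call to \cref{alg:1l_median_6} on $T$ with failure parameter $\delta/2$: by \cref{coro:1l_median_34} this costs $O(m\log(m)(-\ln^2\delta) + m^3\log m(-\ln\delta))$, which I expect to dominate the second additive term in the claimed bound after absorbing $\epsilon$-factors into the stated $\epsilon^{-3}$; I would double-check that $-\ln\delta$ versus $(-\ln\delta)^2$ is handled consistently with the paper's convention of writing $(-\ln\delta)^2$ as an upper bound. Next I would bound the sampling of $S$ and $W$: $|S| = O((\epsilon')^{-1}(-\ln\delta))$ and $|W| = O((\epsilon')^{-2}(-\ln(\delta/|S|)))$, both drawn in time linear in their size (times nothing, since sampling a curve index is $O(1)$), and computing $c = \argmin_{s\in S}\cost{W}{s}$ costs $O(|S|\,|W|\, m\log m)$ for the pairwise Fréchet-distance evaluations via \citet{alt_godau}; this folds into the $\epsilon^{-3}\,m^3\log m\,(-\ln\delta)^2$ term.

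Then I would handle the grid construction and enumeration, which produces the first additive term. For each of the at most $\ell$ vertices $v^c_i$ of $c$, the algorithm builds a grid $\mathbb{G}(B(v^c_i, \tfrac{(3+4\epsilon')}{n}34\Delta), \tfrac{2\epsilon'\Delta}{n\sqrt d})$; the ratio of ball radius to cell width is $\Theta(1/\epsilon')$, independent of $n$ and $\Delta$ (the $\Delta/n$ factors cancel), so each grid has $O(\epsilon^{-d})$ points, and the union $P$ has $O(\ell\,\epsilon^{-d}) = O(\epsilon^{-d})$ points since $\ell$ is constant — but more carefully the paper presumably wants to keep a factor $m$ somewhere; I would follow the convention of \cref{theo:1l_median_3_running_time} and bound the number of grids by $m$ rather than $\ell$, giving $O(m\,\epsilon^{-d})$ points in $P$. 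Enumerating all polygonal curves with $2\ell-2$ vertices from $P$ then yields $|C| = O((m\,\epsilon^{-d})^{2\ell-2}) = O(m^{2\ell-2}\epsilon^{-(2\ell-2)d})$ candidates, and for each candidate we evaluate $\cost{T}{c'}$, which costs $O(n\,m\log m)$ (computing $n$ Fréchet distances, each between curves of complexity $O(m)$ and $2\ell-2 = O(1)$, so really $O(m\log m)$ per distance by \citet{alt_godau}). Multiplying gives the dominant term $O(n\,m^{2\ell-1}\log(m)\,\epsilon^{-(2\ell-2)d})$.

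Finally I would combine the two contributions — the $\cref{alg:1l_median_6}$ call plus sampling/preprocessing giving $O(m^3\log(m)(-\ln\delta)^2\epsilon^{-3})$, and the candidate generation and evaluation giving $O(n\,m^{2\ell-1}\log(m)\,\epsilon^{-(2\ell-2)d})$ — into the stated sum. The main obstacle I anticipate is bookkeeping rather than conceptual: getting the exponents of $m$, $\epsilon^{-1}$, and $-\ln\delta$ to match the claimed expression requires care about (i) whether to count grids as $m$ or $\ell$, (ii) verifying that the $\Delta$ and $n$ factors genuinely cancel in the grid-size ratio so that no spurious $n$ or $\cost{}{}$-dependence survives, and (iii) checking that evaluating $\cost{W}{\cdot}$ over $S$ is subsumed by the $m^3\log(m)(-\ln\delta)^2\epsilon^{-3}$ term rather than contributing a new one. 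None of these steps is hard, but the final displayed bound has no slack advertised, so each constant-degree factor must be tracked precisely; I would present the argument as a per-line cost table in prose, mirroring the proof of \cref{theo:1l_median_3_running_time}, and then state that the sum of the per-line costs is dominated by the two advertised terms.
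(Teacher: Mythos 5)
Your proposal is correct and follows essentially the same per-line cost accounting as the paper's own proof, arriving at the same two terms in the same way. To settle your two hedges: the number of grids is indeed bounded by $m$ (not $\ell$) because $c=\argmin_{s\in S}\cost{W}{s}$ is an \emph{unsimplified} sampled input curve and hence has up to $m$ vertices, and the $S$-versus-$W$ evaluation costs $O(m^2\log m)$ per Fr\'echet distance (not $m\log m$), totalling $O(m^2\log(m)(-\ln\delta)^2\epsilon^{-3})$, which is still absorbed by the $m^3\log(m)(-\ln\delta)^2\epsilon^{-3}$ term exactly as you claim.
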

\begin{proof}
    \cref{alg:1l_median_6} has running-time $O(m^2 \log(m) (-\ln^2 \delta)) + m^3 \log m)$. The sample $S$ has size $O\left(\frac{-\ln(\delta)}{\epsilon}\right)$ and the sample $W$ has size $O\left(\frac{-\ln(\delta)}{\epsilon^2}\right)$. Evaluating each curve of $S$ against $W$ takes time $O\left(\frac{m^2 \log(m) (-\ln(\delta))^2}{\epsilon^3}\right)$, using the algorithm of \citet{alt_godau} to compute the distances.
    
    Now, $c$ has up to $m$ vertices and every grid consists of $\left( \frac{\frac{(3+\epsilon)\Delta}{n}}{\frac{2\epsilon^\prime\Delta}{nc\sqrt{d}}} \right)^d = \left( \frac{(3+\epsilon)c\sqrt{d}}{2 \epsilon^\prime} \right)^d \in O\left(\frac{1}{\epsilon^d}\right)$ points. Therefore, we have $O\left(\frac{m}{\epsilon^{d}}\right)$ points in $P$ and \cref{alg:1l_median_5} enumerates all combinations of $2\ell-2$ points from $P$ taking time $O\left(\frac{m^{2\ell-2}}{\epsilon^{(2\ell-2)d}}\right)$. Afterwards, these candidates are evaluated, which takes time $O(n m \log(m))$ per candidate using the algorithm of \citet{alt_godau} to compute the distances. All in all, we then have  running-time $O\left(\frac{nm^{2\ell-1} \log(m)}{\epsilon^{(2\ell-2)d}} + \frac{m^3 \log(m) (-\ln(\delta))^2}{\epsilon^3}\right)$.
\end{proof}

\section{\texorpdfstring{$(1+\epsilon)$-Approximation}{Approximation-Scheme} for \texorpdfstring{$(1,\ell)$-}{(1,l)-}Median by Advanced Shortcutting}
\FloatBarrier

Now we present \cref{alg:1l_median_1_candidates}, which returns candidates, containing a $(1+\epsilon)$-approximate $(1,\ell)$-median of complexity at most $2\ell-2$, for a cluster contained in the input, that takes a constant fraction of the input, w.h.p. Before we present the algorithm, we present our second shortcutting lemma. Here, we do not introduce shortcuts with respect to a single curve, but with respect to several curves: by introducing shortcuts with respect to $\epsilon \lvert T \rvert$ well-chosen curves from the given set $T \subset \mathbb{X}^d_m$ of polygonal curves, for a given $\epsilon \in (0,1)$, we preserve the distances to at least $(1-\epsilon)\lvert T \rvert$ curves from $T$. In this context well-chosen means that there exists a certain number of subsets of $T$, of each we have to pick a curve for shortcutting. This will enable the high quality of approximation of \cref{alg:1l_median_1_candidates} and we formalize this in the following lemma. 

\begin{figure}
    \centering
    \includegraphics[width=0.9\textwidth]{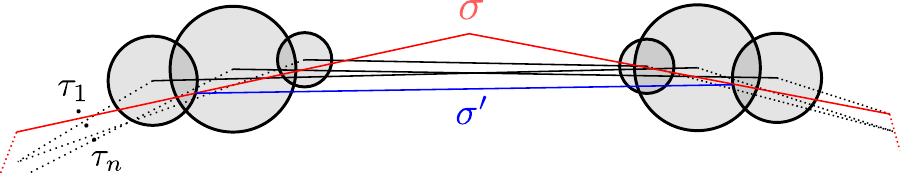}
    \caption{Visualization of an advanced shortcut. The black curves are input-curves and the red curve is an optimal median. By inserting the blue shortcut we can find a curve that has distance not larger as the median to all the black curves, but one, and with all vertices contained in the balls centered at the black curves vertices.}
    \label{fig:advanced_shortcutting}
\end{figure}

\begin{lemma}[shortcutting using a set of polygonal curves]
    \label{lem:shortcutting_multiple}
    Let $\sigma \in \mathbb{X}^d$ be a polygonal curve with $\lvert \sigma \rvert > 2$ vertices and $T = \{ \tau_1, \dots, \tau_n \} \subset \mathbb{X}^d$ be a set of polygonal curves. For $i \in \{1, \dots, n\}$, let $r_i = d_F(\tau_i, \sigma)$ and for $j \in \{1, \dots, \lvert \tau_i \rvert \}$, let $v^{\tau_i}_j$ be the $j$\textsuperscript{th} vertex of $\tau_i$. 
    
    For any $\epsilon \in (0,1)$ there are $2\lvert \sigma \rvert-4$ subsets $T_1, \dots, T_{2\lvert \sigma \rvert-4} \subseteq T$ of $\frac{\epsilon n}{2\lvert \sigma \rvert}$ curves each (not necessarily disjoint) such that for every subset $T^\prime \subseteq T$ containing at least one curve out of each $T_k \in \{T_1, \dots, T_{2\lvert \sigma \rvert-4}\}$, a polygonal curve $\sigma^\prime \in \mathbb{X}^d$ exists with every vertex contained in \[\bigcup\limits_{\tau_i \in T^\prime} \bigcup\limits_{j \in \{1, \dots, \lvert \tau_i \rvert \}} B(v^{\tau_i}_j, r_i),\] $d_F(\tau, \sigma^\prime) \leq d_F(\tau, \sigma)$ for each $\tau \in T\setminus (T_1 \cup \dots \cup T_{2 \lvert \sigma \rvert-4})$ and $\lvert \sigma^\prime \rvert \leq 2 \lvert \sigma \rvert - 2$.
\end{lemma}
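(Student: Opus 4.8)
The plan is to mimic the construction in the proof of \cref{lem:shortcutting_single}: obtain $\sigma'$ from $\sigma$ by replacing each of the $\lvert\sigma\rvert-2$ interior vertices of $\sigma$ by at most two new vertices, which automatically yields $\lvert\sigma'\rvert\le 2\lvert\sigma\rvert-2$; the two endpoints $v^\sigma_1,v^\sigma_{\lvert\sigma\rvert}$ need no change, since $\sigma(0)\in B(v^{\tau_i}_1,r_i)$ and $\sigma(1)\in B(v^{\tau_i}_{\lvert\tau_i\rvert},r_i)$ for every $i$. The new ingredient is the \emph{choice}, for each of the $2(\lvert\sigma\rvert-2)=2\lvert\sigma\rvert-4$ new vertices, of a curve from $T'$ into whose ball it falls, made so that only few curves of $T$ get ``spoiled''.

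Fix an interior vertex $v^\sigma_i$ with instant $t^\sigma_i$. For $\tau\in T$, let $s^\tau_1<\dots<s^\tau_{\lvert\tau\rvert}$ denote the positions on $\sigma$ to which the vertices of $\tau$ are matched by an (almost) optimal matching; then $\sigma(s^\tau_l)\in B(v^\tau_l,r_\tau)$, and $s^\tau_1=0$, $s^\tau_{\lvert\tau\rvert}=1$. Let $\ell^-_\tau$ be the largest $s^\tau_l$ with $s^\tau_l<t^\sigma_i$ and $\ell^+_\tau$ the smallest with $s^\tau_l>t^\sigma_i$ (both exist). Define $T^{(i)}_-$ to be any $\tfrac{\epsilon n}{2\lvert\sigma\rvert}$ curves of $T$ attaining the \emph{largest} values $\ell^-_\tau$, and $T^{(i)}_+$ any $\tfrac{\epsilon n}{2\lvert\sigma\rvert}$ curves attaining the \emph{smallest} values $\ell^+_\tau$ (ties broken consistently). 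Ranging over the $\lvert\sigma\rvert-2$ interior vertices gives the $2\lvert\sigma\rvert-4$ subsets, each of size $\tfrac{\epsilon n}{2\lvert\sigma\rvert}$, so their union has size $<\epsilon n$. Given $T'$ meeting every one of them, pick $\tau_{a_i}\in T^{(i)}_-\cap T'$ and $\tau_{b_i}\in T^{(i)}_+\cap T'$; the two new vertices replacing $v^\sigma_i$ are $p^-_i:=\sigma(\ell^-_{\tau_{a_i}})\in B(v^{\tau_{a_i}}_\cdot,r_{a_i})$ and $p^+_i:=\sigma(\ell^+_{\tau_{b_i}})\in B(v^{\tau_{b_i}}_\cdot,r_{b_i})$, and the ``window'' cut out around $v^\sigma_i$ is the nonempty open interval $W_i:=(\ell^-_{\tau_{a_i}},\ell^+_{\tau_{b_i}})\ni t^\sigma_i$.

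The crux is the observation that any $\tau\notin T^{(i)}_-\cup T^{(i)}_+$ has no matched vertex position strictly inside $W_i$: since $\tau_{a_i}$ is among the curves with the largest $\ell^-$, every $\tau\notin T^{(i)}_-$ satisfies $\ell^-_\tau\le\ell^-_{\tau_{a_i}}$, hence $\tau$ has no matched vertex position in $(\ell^-_{\tau_{a_i}},t^\sigma_i)$, and symmetrically $\tau\notin T^{(i)}_+$ gives none in $(t^\sigma_i,\ell^+_{\tau_{b_i}})$. Now perform all shortcuts simultaneously: let $U=\bigcup_i W_i$, write it as a disjoint union of open intervals $J_q=(\alpha_q,\beta_q)$, and let $\sigma'$ equal $\sigma$ on $[0,1]\setminus U$ and the segment $\overline{\sigma(\alpha_q)\,\sigma(\beta_q)}$ on each $[\alpha_q,\beta_q]$. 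Each $\alpha_q$ equals $\ell^-_{\tau_{a_i}}$ for the leftmost window $W_i$ in its component, so $\sigma(\alpha_q)=p^-_i$ lies in a ball of a curve of $T'$, and likewise $\sigma(\beta_q)$; together with $v^\sigma_1,v^\sigma_{\lvert\sigma\rvert}$ these are exactly the vertices of $\sigma'$, and $\lvert\sigma'\rvert\le 2\lvert\sigma\rvert-2$ as in \cref{lem:shortcutting_single} (overlapping windows only help). For $\tau\in T\setminus\bigcup_i(T^{(i)}_-\cup T^{(i)}_+)$ the observation shows $\tau$ has no matched vertex position inside any $J_q$, so $\tau$ restricted to the $h_\tau$-image of $[\alpha_q,\beta_q]$ lies within a single edge and is a line segment; matching it linearly to $\sigma'|_{[\alpha_q,\beta_q]}$ costs at most the larger of the two endpoint distances, each $\le r_\tau$, and outside $U$ we reuse $h_\tau$. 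Concatenating these (valid since the $J_q$ are disjoint) yields $d_F(\sigma',\tau)\le r_\tau=d_F(\sigma,\tau)$.

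The Fr\'echet-distance-is-an-infimum subtlety is handled exactly as in \cref{lem:shortcutting_single} and \cref{lem:sequence}: run the construction along a sequence of near-optimal matchings $h^{(i)}_x$ for each $\tau_i$, obtaining curves $\sigma'_x$ of complexity $\le 2\lvert\sigma\rvert-2$ with $d_F(\sigma'_x,\tau)$ bounded by the realized distances (which tend to $r_\tau$), and extract a convergent subsequence; defining $\ell^-_\tau,\ell^+_\tau$ intrinsically in terms of Fr\'echet distances of subcurves of $\sigma$ and $\tau$ makes the sets $T^{(i)}_\pm$ independent of this choice. I expect the main obstacle to be exactly this interplay — keeping the $2\lvert\sigma\rvert-4$ exception sets \emph{fixed} while passing to the limit — together with getting the direction of the ranking right (we must sacrifice the curves whose nearest matched vertex positions hug $t^\sigma_i$ most closely, so that the window overshoots the matched vertices of all surviving curves) and the mild bookkeeping needed when windows of adjacent interior vertices overlap and merge.
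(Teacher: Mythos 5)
Your proposal follows essentially the same route as the paper's proof: for each interior vertex of $\sigma$, rank the curves by how closely their matched ball-points hug that vertex from either side, sacrifice the $\tfrac{\epsilon n}{2\lvert\sigma\rvert}$ closest curves on each side (these are the $2\lvert\sigma\rvert-4$ exception sets), shortcut between points lying in balls of $T'$-curves taken from those sets, observe that every surviving curve has no vertex matched inside the cut window so only segment-to-segment matchings are affected, and handle the infimum via a sequence of near-optimal matchings as in \cref{lem:sequence}; your simultaneous cutting with merged windows is just cleaner bookkeeping than the paper's successive modification. One corner case to patch: with your strict inequalities defining $\ell^-_\tau,\ell^+_\tau$, a surviving curve $\tau$ could have a vertex matched exactly at $t^\sigma_i$, which then lies strictly inside $W_i$ and breaks the single-edge claim; the paper avoids this by defining the window endpoints via the balls of the two vertices of the edge of $\tau_i$ that contains $h_i(t^\sigma_i)$ (its inequalities (III)--(VI) then hold with $\leq$), and you can fix it the same way, by using $\leq$ in the definition of $\ell^-_\tau$, or by perturbing the near-optimal matchings before taking the limit.
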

The idea is the following, see \cref{fig:advanced_shortcutting} for a visualization. One can argue that every vertex $v$ of $\sigma$ not contained in any of the balls centered at the vertices of the curves in $T$ (and of radius according to their distance to $\sigma$) can be shortcut by connecting the last point $p^{-}$ before $v$ (in terms of the parameter of $\sigma$) contained in one ball and first point $p^{+}$ after $v$ contained in one ball. This does not increase the Fr\'echet distances between $\sigma$ and the $\tau \in T$, because only matchings among line segments are affected by this modification. Furthermore, most distances are preserved when we do not actually use the last and first ball before and after $v$, but one of the $\frac{\epsilon n}{2\vert\sigma\rvert}$ balls before and one of the $\frac{\epsilon n}{2\lvert\sigma\rvert}$ balls after $v$, which is the key of the following proof.
\begin{proof}[Proof of \cref{lem:shortcutting_multiple}]
    For the sake of simplicity, we assume that $\frac{\epsilon n}{2\lvert \sigma\rvert}$ is integral. Let $\ell = \lvert \sigma \rvert$. For $i \in \{1, \dots, n\}$, let $v^{\tau_i}_1, \dots, v^{\tau_i}_{\lvert \tau_i \rvert}$ be the vertices of $\tau_i$ with instants $t^{\tau_i}_{1}, \dots, t^{\tau_i}_{\lvert \tau_i \rvert}$ and let $v^\sigma_1, \dots, v^\sigma_{\ell}$ be the vertices of $\sigma$ with instants $t^\sigma_1, \dots, t^\sigma_{\ell}$. Also, for $h \in \mathcal{H}$ (recall that $\mathcal{H}$ is the set of all continuous bijections $h\colon [0,1] \rightarrow [0,1]$ with $h(0) = 0$ and $h(1) = 1$) and $i \in \{1, \dots, n\}$, let $r_{i,h} = \max\limits_{t \in [0,1]} \lVert \sigma(t) - \tau_i(h(t)) \rVert$ be the distance realized by $h$ with respect to $\tau_i$. We know from \cref{lem:sequence} that for each $i \in \{1, \dots, n\}$ there exists a sequence $(h_{i,x})_{x=1}^\infty$ in $\mathcal{H}$, such that $\lim\limits_{x \to \infty} r_{i,h_{i,x}} = d_F(\sigma, \tau_i) = r_i$.
    \\
    
    In the following, given arbitrary $h_1, \dots, h_n \in \mathcal{H}$, we describe how to modify $\sigma$, such that its vertices can be found in the balls around the vertices of the $\tau \in T$, of radii determined by $h_1, \dots, h_n$. Later we will argue that this modification can be applied using the $h_{1,x}, \dots, h_{n,x}$, for each $x \in \mathbb{N}$, in particular.
    \\
    
    Now, fix arbitrary $h_1, \dots, h_n \in \mathcal{H}$ and for an arbitrary $k \in \{2, \dots, \lvert \sigma \rvert-1\}$, fix the vertex $v^\sigma_k$ of $\sigma$ with instant $t^\sigma_k$. For $i \in \{1, \dots, n\}$, let $s_i$ be the maximum of $\{1, \dots, \lvert \tau_i \rvert-1\}$, such that $t^{\tau_i}_{s_i} \leq h_i(t^\sigma_k) \leq t^{\tau_i}_{s_{i}+1}$. Namely, $v^\sigma_k$ is matched to a point on the line segment $\overline{v^{\tau_1}_{s_1}v^{\tau_1}_{s_1+1}}, \dots, \overline{v^{\tau_n}_{s_n}v^{\tau_n}_{s_n+1}}$, respectively, by $h_1, \dots, h_n$.
    
    For $i \in \{1, \dots, n\}$, let $t^{-}_i$ be the maximum of $[0, t^\sigma_k]$, such that $\sigma(t^{-}_i) \in B(v^{\tau_i}_{s_i}, r_{i,h_i})$ and let $t^{+}_i$ be the minimum of $[t^\sigma_k, 1]$, such that $\sigma(t^+_i) \in B(v^{\tau_i}_{s_i+1}, r_{i,h_i})$. These are the instants when $\sigma$ visits $B(v^{\tau_i}_{s_i}, r_{i,h_i})$ before or when visiting $v^\sigma_k$ and $\sigma$ visits $B(v^{\tau_i}_{s_i+1}, r_{i,h_i})$ when or after visiting $v^\sigma_k$, respectively. Furthermore, there is a permutation $\alpha \in S_n$ of the index set $\{1, \dots, n\}$, such that 
    \begin{equation}
        t^{-}_{\alpha^{-1}(1)} \leq \dots \leq t^{-}_{\alpha^{-1}(n)}. \label{eq:tminus} \tag{I}
    \end{equation} 
    Also, there is a permutation $\beta \in S_n$ of the index set $\{1, \dots, n\}$, such that 
    \begin{equation}
        t^{+}_{\beta^{-1}(1)} \leq \dots \leq t^{+}_{\beta^{-1}(n)}. \label{eq:tplus} \tag{II}
    \end{equation}
    Additionally, for each $i \in \{1, \dots, n\}$ we have 
    \begin{equation}
        t^{\tau_i}_{s_i} \leq h_i(t^{-}_i) \label{eq:tsitminus} \tag{III}
    \end{equation}
    and
    \begin{equation}
        h_i(t^{+}_i) \leq t^{\tau_i}_{s_i+1}, \label{eq:tsitplus} \tag{IV}
    \end{equation}
    because $\sigma(t^-_i)$ and $\sigma(t^+_i)$ are closest points to $v^\sigma$ on $\sigma$ that have distance at most $r_{i,h_i}$ to $v^{\tau_i}_{s_i}$ and $v^{\tau_i}_{s_i+1}$, respectively, by definition. We will now use \cref{eq:tminus,eq:tplus,eq:tsitminus,eq:tsitplus} to prove that an advanced shortcut only affects matchings among line segments and hence we can easily bound the resulting distances for at least $(1-\epsilon)n$ of the curves.
    
    Let \[I_{v^\sigma_k}(h_1, \dots, h_n) = \{ \tau_{\alpha^{-1}((1-\frac{\epsilon}{2\ell})n + 1)}, \dots, \tau_{\alpha^{-1}(n)}\},\ O_{v^\sigma_k}(h_1, \dots, h_n) = \{ \tau_{\beta^{-1}(1)}, \dots, \tau_{\beta^{-1}(\frac{\epsilon n}{2\ell})} \}.\] $I_{v^\sigma_k}(h_1, \dots, h_n)$ is the set of the last $\frac{\epsilon n}{2\ell}$ curves whose balls are visited by $\sigma$, before or when $\sigma$ visits $v^\sigma_k$. Similarly, $O_{v^\sigma_k}(h_1, \dots, h_n)$ is the set of the first $\frac{\epsilon n}{2\ell}$ curves whose balls are visited by $\sigma$, when or immediately after $\sigma$ visited $v^\sigma_k$. We now modify $\sigma$, such that $v^\sigma_k$ is replaced by two new vertices that are elements of at least one $B(v^{\tau_i}_{j}, r_{i,h_i})$, for a $\tau_i \in I_{v^\sigma_k}(h_1, \dots, h_n)$, respectively for a $\tau_i \in O_{v^\sigma_k}(h_1, \dots, h_n)$, and $j \in \{1, \dots, \lvert \tau_i \rvert \}$, each.
    
    Let $\sigma^\prime_{h_1, \dots, h_n}$ be the piecewise defined curve, defined just like $\sigma$ on $\left[0,t^-_{\alpha^{-1}(k_1)}\right]$ and $\left[t^+_{\beta^{-1}(k_2)},1\right]$ for arbitrary $k_1 \in \{(1-\frac{\epsilon}{2\ell})n + 1, \dots, n\}$ and $k_2 \in \{1, \dots, \frac{\epsilon n}{2\ell} \}$, but on $\left(t^-_{\alpha^{-1}(k_1)}, t^+_{\beta^{-1}(k_2)}\right)$ it connects $\sigma\left(t^-_{\alpha^{-1}(k_1)}\right)$ and $\sigma\left(t^+_{\beta^{-1}(k_2)}\right)$ with the line segment \[\gamma(t) = \left(1-\frac{t-t^-_{\alpha^{-1}(k_1)}}{t^+_{\beta^{-1}(k_2)}-t^-_{\alpha^{-1}(k_1)}}\right) \sigma\left(t^-_{\alpha^{-1}(k_1)}\right) + \frac{t-t^-_{\alpha^{-1}(k_1)}}{t^+_{\beta^{-1}(k_2)}-t^-_{\alpha^{-1}(k_1)}} \sigma\left(t^+_{\beta^{-1}(k_2)}\right).\] We now argue that for all $\tau_i \in T \setminus (I_{v^\sigma_k}(h_1, \dots, h_n) \cup O_{v^\sigma_k}(h_1, \dots, h_n))$ the Fr\'echet distance between $\sigma^\prime_{h_1, \dots, h_n}$ and $\tau_i$ is upper bounded by $r_{i,h_i}$. First, note that by definition $h_1, \dots, h_n$ are strictly increasing functions, since they are continuous bijections that map $0$ to $0$ and $1$ to $1$. As immediate consequence, we have that 
    \begin{equation}
        t^{\tau_i}_{s_i} \leq h_i(t^-_i) \leq h_i\left(t^-_{\alpha^{-1}(k_1)}\right) \label{eq:afterin} \tag{V}
    \end{equation}
    for each $\tau_i \in T \setminus I_{v^\sigma_k}(h_1, \dots, h_n)$ and
    \begin{equation}
        h_i\left(t^+_{\beta^{-1}(k_2)}\right) \leq h_i(t^+_i) \leq t^{\tau_i}_{s_i+1} \label{eq:beforeout} \tag{VI}
    \end{equation}
    for each $\tau_i \in T \setminus O_{v^\sigma_k}(h_1, \dots, h_n)$, using \cref{eq:tminus,eq:tplus,eq:tsitminus,eq:tsitplus}. Therefore, each $\tau_i \in T \setminus (I_{v^\sigma_k}(h_1, \dots, h_n) \cup O_{v^\sigma_k}(h_1, \dots, h_n))$ has no vertex between the instants $h_i\left(t^-_{\alpha^{-1}(k_1)}\right)$ and $h_i\left(t^+_{\beta^{-1}(k_2)}\right)$. We also know that for each $\tau_i \in T$
    \begin{equation}
        \left\lVert \sigma\left(t^-_{\alpha^{-1}(k_1)}\right) - \tau_i\left(h_i\left(t^-_{\alpha^{-1}(k_1)}\right)\right) \right\rVert \leq r_{i,h_i} \label{eq:diststart} \tag{VII}
    \end{equation}
    and 
    \begin{equation}
        \left\lVert \sigma\left(t^+_{\beta^{-1}(k_2)}\right) - \tau_i\left(h_i\left(t^+_{\beta^{-1}(k_2)}\right)\right) \right\rVert \leq r_{i,h_i}. \label{eq:distend} \tag{VIII}
    \end{equation}
    
    Let $D_{s,\sigma} = \left[0,t^-_{\alpha^{-1}(k_1)}\right)$, $D_{m,\sigma} = \left[t^-_{\alpha^{-1}(k_1)}, t^+_{\beta^{-1}(k_2)}\right]$ and $D_{e,\sigma} = \left(t^+_{\beta^{-1}(k_2)}, 1\right]$. Also, for $i \in \{1, \dots, n\}$, let $D_{s,\tau_i} = \left[0,h_i\left(t^-_{\alpha^{-1}(k_1)}\right)\right)$, $D_{m,\tau_i} = \left[h_i\left(t^-_{\alpha^{-1}(k_1)}\right), h_i\left(t^+_{\beta^{-1}(k_2)}\right)\right]$ and $D_{e,\tau_i} = \left(h_i\left(t^+_{\beta^{-1}(k_2)}\right),1\right]$. Now, for each $\tau_i \in T \setminus (I_{v^\sigma_k}(h_1, \dots, h_n) \cup O_{v^\sigma_k}(h_1, \dots, h_n))$ we use $h_i$ to match $\sigma^\prime_{h_1, \dots, h_n}\vert_{D_{s,\sigma}}$ to $\tau_i\vert_{D_{s,\tau_i}}$ and $\sigma^\prime_{h_1, \dots, h_n}\vert_{D_{e,\sigma}}$ to $\tau_i\vert_{D_{e,\tau_i}}$ with distance at most $r_{i,h_i}$. Since $\sigma^\prime_{h_1, \dots, h_n}\vert_{D_{m,\sigma}}$ and $\tau_i\vert_{D_{m,\tau_i}}$ are just line segments by \cref{eq:afterin,eq:beforeout}, they can be matched to each other with distance at most \[\max\left\{\left\lVert \sigma\left(t^-_{\alpha^{-1}(k_1)}\right) - \tau_i\left(h_i\left(t^-_{\alpha^{-1}(k_1)}\right)\right) \right\rVert, \left\lVert \sigma\left(t^+_{\beta^{-1}(k_2)}\right) - \tau_i\left(h_i\left(t^+_{\beta^{-1}(k_2)}\right)\right) \right\rVert \right\},\] which is at most $r_{i,h_i}$ by \cref{eq:diststart,eq:distend}. We conclude that $d_F(\sigma^\prime_{h_1, \dots, h_n}, \tau_i) \leq r_{i,h_i}$. 
    
    Because this modification works for every $h_1, \dots, h_n \in \mathcal{H}$, we conclude that $d_F(\sigma^\prime_{h_1, \dots, h_n}, \tau_i) \leq r_{i,h_i}$ for every $h_1, \dots, h_n \in \mathcal{H}$ and $\tau_i \in T \setminus (I_{v^\sigma_k}(h_1, \dots, h_n) \cup O_{v^\sigma_k}(h_1, \dots, h_n))$. Thus $\lim\limits_{x \to \infty} d_F(\sigma^\prime_{h_{1,x}, \dots, h_{n,x}}, \tau_i) \leq d_F(\sigma, \tau_i) = r_i$ for each $\tau_i \in T \setminus (I_{v^\sigma_k}(h_{1,x}, \dots, h_{n,x}) \cup O_{v^\sigma_k}(h_{1,x}, \dots, h_{n,x}))$.
    \\
    
    Now, to prove the claim, for each combination $h_1, \dots, h_n \in \mathcal{H}$, we apply this modification to $v^\sigma_k$ and successively to every other vertex $v^{\sigma^\prime_{h_1, \dots, h_n}}_l$ of the resulting curve $\sigma^\prime_{h_1, \dots, h_n}$, except $v^{\sigma^\prime_{h_1, \dots, h_n}}_1$ and $v^{\sigma^\prime_{h_1, \dots, h_n}}_{\lvert\sigma^\prime_{h_1, \dots, h_n}\rvert}$, since these must be elements of $B(v^{\tau_i}_1, r_{i,h_i})$ and $B(v^{\tau_i}_{\lvert \tau_i \rvert}, r_{i,h_i})$, respectively, for each $i \in \{1, \dots, n\}$, by definition of the Fr\'echet distance. 
    
    Since the modification is repeated at most $\lvert \sigma \rvert - 2$ times for each combination $h_1, \dots h_n \in \mathcal{H}$, we conclude that the number of vertices of each $\sigma^\prime_{h_1, \dots, h_n}$ can be bounded by $2 \cdot (\lvert \sigma \rvert - 2) + 2$. 
    \\
   
    $T_1, \dots, T_{2\ell-4}$ are therefore all the $I_{v^\sigma_k}(h_{1,x}, \dots, h_{n,x})$ and $O_{v^\sigma_k}(h_{1,x}, \dots, h_{n,x})$ for $k \in \{2, \dots, 2\lvert \sigma\rvert - 3\}$, when $x \to \infty$. Note that every $I_{v^\sigma_k}(h_{1,x}, \dots, h_{n,x})$ and $O_{v^\sigma_k}(h_{1,x}, \dots, h_{n,x})$ is determined by the visiting order of the balls and since their radii converge, these sets do too.
\end{proof}

We now present \cref{alg:1l_median_1_candidates}, which is nearly identical to \cref{alg:1l_median_3_candidates} but uses the advanced shortcutting lemma. Furthermore, like \cref{alg:1l_median_3_candidates}, it can be used as plugin in the recursive $k$-median approximation-scheme (\cref{alg:kl_median}) that we present in \cref{sec:k-median}.

\begin{algorithm}[H]
\caption{$(1,\ell)$-Median for Subset by Advanced Shortcutting\label{alg:1l_median_1_candidates}}
    \begin{algorithmic}[1]
        \Procedure{$(1,\ell)$-Median-$(1+\epsilon)$-Candidates}{$T = \{ \tau_1, \dots, \tau_n \}, \beta, \delta, \epsilon$}
            \State $\epsilon^\prime \gets \epsilon/6$, $C \gets \emptyset$
            \State $S \gets $ sample $\left\lceil -8\beta \ell (\epsilon^\prime)^{-1} (\ln(\delta)-\ln(4(2\ell-4))) \right\rceil$ curves from $T$ uniformly and independently
            
            \hspace{\algorithmicindent} with replacement
            \For{$S^\prime \subseteq S$ with $\lvert S^\prime \rvert = \frac{\lvert S \rvert}{2 \beta}$}
                \State $c \gets$ $(1,\ell)$-\textsc{Median}-$34$-\textsc{Approximation}$(S^\prime,\delta/4)$ \Comment{\cref{alg:1l_median_6}}
                \State $\Delta \gets \cost{S^\prime}{c}$, $\Delta_l \gets \frac{2\delta n}{4 \lvert S \rvert} \frac{\Delta}{34}$, $\Delta_u \gets \frac{1}{\epsilon^\prime} \Delta$
                \State $C \gets C \cup \{c\}$, $P \gets \emptyset$
                \For{$s \in S^\prime$}
                    \For{$i \in \{1, \dots, \lvert s \rvert\}$}
                        \State $P \gets P \cup \mathbb{G}\left(B\left(v^s_i, \frac{4\ell}{\epsilon^\prime}\Delta_u\right),\frac{2\epsilon^\prime}{n\sqrt{d}} \Delta_l\right)$ \Comment{$v^s_i$: $i$\textsuperscript{th} vertex of $s$}
                    \EndFor
                \EndFor
                \State $C \gets C\ \cup$ set of all polygonal curves with $2\ell-2$ vertices from $P$
            \EndFor
            \State \Return $C$
        \EndProcedure
    \end{algorithmic}
\end{algorithm}

We prove the quality of approximation of \cref{alg:1l_median_1_candidates}.

\begin{theorem}
    \label{theo:1l_median_1_guarantee}
    Given three parameters $\beta \in [1, \infty)$, $\delta \in (0,1)$, $\epsilon \in (0,0.158]$ and a set $T = \{ \tau_1, \dots, \tau_n \} \subset \mathbb{X}^d_m$ of polygonal curves, with probability at least $1-\delta$ the set of candidates that \cref{alg:1l_median_1_candidates} returns contains a $(1+\epsilon)$-approximate $(1,\ell)$-median with up to $2\ell-2$ vertices for any $T^\prime \subseteq T$, if $\lvert T^\prime \rvert \geq \frac{1}{\beta} \lvert T \rvert$.
\end{theorem}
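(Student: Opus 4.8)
The plan is to mirror the architecture of the proof of \Cref{theo:1l_median_3_guarantee}, substituting the set-version shortcutting result \Cref{lem:shortcutting_multiple} for the single-curve version, and to pay for its ``well-chosen'' hypothesis with the extra $\ell$ and $\ln(4(2\ell-4))$ factors baked into the sample size of \Cref{alg:1l_median_1_candidates}. First I would run the superset-sampling argument verbatim: by \Cref{lem:bernoulli_trial_bound}, with probability at least $1-\delta/4$ there is $S^\prime \subseteq S$ with $S^\prime \subseteq T^\prime$ and $\lvert S^\prime \rvert \geq \lvert S \rvert/(2\beta)$, and conditioned on this event $\mathcal{E}_{T^\prime}$ the set $S^\prime$ is a uniform and independent sample of $T^\prime$. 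Let $c^\ast$ be an optimal $(1,\ell)$-median for $T^\prime$ (if $\lvert c^\ast \rvert \leq 2$ a simpler, separate argument applies, so assume $\lvert c^\ast \rvert > 2$). As in \Cref{theo:1l_median_3_guarantee} I would invoke \Cref{coro:1l_median_34} to get, with probability at least $1-\delta/4$, a $34$-approximate $(1,\ell)$-median $c$ for $S^\prime$, and use Markov's inequality on $\cost{S^\prime}{c^\ast}$ to get, with probability at least $1-\delta/4$, the lower bound $\Delta_l \leq \cost{T^\prime}{c^\ast}/\lvert T^\prime \rvert$ (up to the design constants of the algorithm).

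The genuinely new ingredient is a hitting step. Apply \Cref{lem:shortcutting_multiple} to $\sigma := c^\ast$ and the set $T^\prime$ with parameter $\epsilon^\prime$: it yields $2\lvert c^\ast \rvert - 4 \leq 2\ell-4$ subsets $T_1, \dots, T_{2\lvert c^\ast \rvert-4} \subseteq T^\prime$ of size $\tfrac{\epsilon^\prime \lvert T^\prime \rvert}{2\lvert c^\ast \rvert}$ each. Call $\tau \in T^\prime$ \emph{good} if $d_F(\tau, c^\ast) \leq \tfrac{4\ell}{\epsilon^\prime} \cdot \tfrac{\cost{T^\prime}{c^\ast}}{\lvert T^\prime \rvert}$; by Markov's inequality at most $\tfrac{\epsilon^\prime}{4\ell}\lvert T^\prime \rvert$ curves of $T^\prime$ fail this, so the good part of every $T_k$ is still an $\Omega(\epsilon^\prime/\ell)$-fraction of $T^\prime$. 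A union bound over the at most $2\ell-4$ subsets — this is exactly what the $\ln(4(2\ell-4))$ term and the factor $\ell$ in $\lvert S \rvert$ are dimensioned for — shows that, with probability at least $1-\delta/4$, $S^\prime$ contains a good curve from every $T_k$; fix a selection $T^{\prime\prime} \subseteq S^\prime$ of one good curve per $T_k$. A union bound over these four bad events leaves all of $\mathcal{E}_{T^\prime}$, the $34$-approximation guarantee, the lower bound on $\Delta_l$, and the hitting property of $T^{\prime\prime}$ holding simultaneously with probability at least $1-\delta$.

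Now I would run the same case distinction as in \Cref{theo:1l_median_3_guarantee} on the size of $d_F(c,c^\ast)$. In the easy case $d_F(c,c^\ast) < (1+2\epsilon^\prime)\cost{T^\prime}{c^\ast}/\lvert T^\prime \rvert$ the curve $c$ itself already has $\cost{T^\prime}{c} \leq (2+2\epsilon^\prime)\cost{T^\prime}{c^\ast}$, and $c \in C$. In the main case $d_F(c,c^\ast)$ is large, which (exactly as before) forces $\cost{S^\prime}{c} \geq \epsilon^\prime \cost{T^\prime}{c^\ast}/\lvert T^\prime \rvert$, hence $\cost{T^\prime}{c^\ast}/\lvert T^\prime \rvert \leq \tfrac1{\epsilon^\prime}\cost{S^\prime}{c}$; this inequality is what makes the grid radius $\tfrac{4\ell}{\epsilon^\prime}\Delta_u = \tfrac{4\ell}{(\epsilon^\prime)^2}\cost{S^\prime}{c}$ large enough to contain every ball $B(v^{\tau_i}_j, r_i)$ with $\tau_i \in T^{\prime\prime}$ (using that $r_i = d_F(\tau_i,c^\ast)$ is small because $\tau_i$ is good), while $\Delta_l$ keeps the grid resolution fine relative to $\cost{T^\prime}{c^\ast}/\lvert T^\prime \rvert$. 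Invoking \Cref{lem:shortcutting_multiple} with this $T^{\prime\prime}$ yields $\sigma^\prime$ of complexity at most $2\ell-2$ whose vertices all lie in $\bigcup_{\tau_i \in T^{\prime\prime}} \bigcup_j B(v^{\tau_i}_j, r_i)$, hence in the union $P$ of grid covers, and with $d_F(\tau,\sigma^\prime) \leq d_F(\tau,c^\ast)$ for every $\tau \in T^\prime \setminus (T_1 \cup \dots \cup T_{2\lvert c^\ast \rvert-4})$, a set whose complement has size at most $(2\ell-4)\tfrac{\epsilon^\prime\lvert T^\prime \rvert}{2\ell} < \epsilon^\prime\lvert T^\prime \rvert$. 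Snapping $\sigma^\prime$ vertex-by-vertex onto the grid produces a candidate $\sigma^{\prime\prime} \in C$ with $d_F(\sigma^\prime,\sigma^{\prime\prime}) \leq \epsilon^\prime\cost{T^\prime}{c^\ast}/\lvert T^\prime \rvert$. I would then bound $\cost{T^\prime}{\sigma^{\prime\prime}}$ by splitting $T^\prime$ into the at most $\epsilon^\prime\lvert T^\prime \rvert$ exceptional curves and the rest: non-exceptional $\tau$ contribute at most $d_F(\tau,c^\ast)+\epsilon^\prime\cost{T^\prime}{c^\ast}/\lvert T^\prime \rvert$ via the triangle inequality and $d_F(\tau,\sigma^\prime)\le d_F(\tau,c^\ast)$, while each exceptional $\tau$ contributes at most $d_F(\tau,c^\ast)+d_F(c^\ast,\sigma^{\prime\prime})$, and $d_F(c^\ast,\sigma^{\prime\prime}) = O(\cost{T^\prime}{c^\ast}/\lvert T^\prime \rvert)$ because a good curve of $T^{\prime\prime}$ is near $c^\ast$ and its distance to $\sigma^\prime$ is preserved. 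Summing, $\cost{T^\prime}{\sigma^{\prime\prime}} \leq (1+O(\epsilon^\prime))\cost{T^\prime}{c^\ast}$, and rescaling $\epsilon$ by $1/6$ gives the claim; the ceiling $\epsilon \leq 0.158$ is what keeps the several accumulated $(1+\Theta(\epsilon^\prime))$ factors below $1+\epsilon$.

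The main obstacle I expect is threading the circular dependence between the data-dependent grid parameters and the distances they must capture: the radius $\tfrac{4\ell}{\epsilon^\prime}\Delta_u$ and the side length $\tfrac{2\epsilon^\prime}{n\sqrt d}\Delta_l$ are defined through $\Delta = \cost{S^\prime}{c}$, yet to apply \Cref{lem:shortcutting_multiple} I need the balls $B(v^{\tau_i}_j, r_i)$ — whose radii come from the unknown optimum $c^\ast$ — to sit inside the grid-covered balls and simultaneously need the resolution fine relative to $\cost{T^\prime}{c^\ast}/\lvert T^\prime \rvert$. This works only because in the main case $\cost{S^\prime}{c}$, $\cost{T^\prime}{c^\ast}/\lvert T^\prime \rvert$, and the radii of the shortcutting balls are all within $\mathrm{poly}(1/\epsilon^\prime,\ell)$ factors of one another — precisely what the large-$d_F(c,c^\ast)$ case plus the ``good curve in every $T_k$'' hitting property buy. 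A secondary subtlety is that \Cref{lem:shortcutting_multiple} controls $d_F(\tau,\sigma^\prime)$ only for the non-exceptional curves, so the exceptional $\leq \epsilon^\prime\lvert T^\prime \rvert$ curves must be handled through a separate triangle-inequality bound on $d_F(c^\ast,\sigma^\prime)$, and one must check that the $2\ell-4$ subsets of size $\tfrac{\epsilon^\prime\lvert T^\prime \rvert}{2\ell}$ indeed total at most $\epsilon^\prime\lvert T^\prime \rvert$ so that this error remains an $O(\epsilon^\prime)$ fraction of the optimum.
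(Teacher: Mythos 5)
Your overall skeleton (superset sampling, hitting each of the $2\ell-4$ subsets of \cref{lem:shortcutting_multiple} with a ``good'' curve, building grids around the sampled vertices, snapping the shortcut curve onto the grid) matches the paper, but your case analysis has a genuine gap that breaks the $(1+\epsilon)$ guarantee. You propose to reuse the case distinction of \cref{theo:1l_median_3_guarantee} on $d_F(c,c^\ast)$, and in the ``easy'' case $d_F(c,c^\ast) < (1+2\epsilon^\prime)\cost{T^\prime}{c^\ast}/\lvert T^\prime \rvert$ you only obtain $\cost{T^\prime}{c} \leq (2+2\epsilon^\prime)\cost{T^\prime}{c^\ast}$. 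That is a $2+O(\epsilon)$ candidate, not a $1+O(\epsilon)$ candidate, so the theorem as stated is not proved in that branch. This is precisely why the paper abandons that dichotomy and instead uses the finer case distinction of \citet{abhin2020kmedian}: defining $B_{c^\ast}$ (curves within $\cost{T^\prime}{c^\ast}/(\epsilon^2\lvert T^\prime \rvert)$ of $c^\ast$) and $B_c$ (curves within $\epsilon\cost{T^\prime}{c^\ast}/\lvert T^\prime\rvert$ of $c$), splitting on $\lvert T^\prime_{c^\ast}\setminus B_c\rvert$, and showing that the subcase ``many curves close to $c$ but $d_F(c,c^\ast) > 4\epsilon\cost{T^\prime}{c^\ast}/\lvert T^\prime\rvert$'' contradicts the optimality of $c^\ast$ (reassigning to $c$ would strictly decrease the cost). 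When many curves are far from $c$, this also yields probabilistically the lower bound $\cost{S^\prime}{c} > \epsilon\cost{T^\prime}{c^\ast}/\lvert T^\prime\rvert$ needed for the grid radius, an event costing an extra $\delta/8$ in the union bound; your route to that lower bound (``exactly as before'') additionally relies on a sample curve within $(1+\epsilon^\prime)$ times the average distance of $c^\ast$, an event you never put into your union bound. Note also that the restriction $\epsilon\le 0.158$ is not about accumulated $(1+\Theta(\epsilon^\prime))$ factors (rescaling handles those); it is exactly what makes $2\epsilon+\epsilon^2-2\epsilon^3<1/3$, i.e.\ $\lvert T^\prime\setminus B_c\rvert < \tfrac12\lvert T^\prime\cap B_c\rvert$, so that the contradiction in that subcase goes through.

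A second flaw is your treatment of the exceptional curves $\tau \in T^\prime_1\cup\dots\cup T^\prime_{2\ell-4}$. You bound their contribution via $d_F(c^\ast,\sigma^{\prime\prime}) = O(\cost{T^\prime}{c^\ast}/\lvert T^\prime\rvert)$ ``because a good curve of $T^{\prime\prime}$ is near $c^\ast$ and its distance to $\sigma^\prime$ is preserved.'' Both halves fail: a good curve is only within $\tfrac{4\ell}{\epsilon^\prime}\cost{T^\prime}{c^\ast}/\lvert T^\prime\rvert$ of $c^\ast$, which is $\Theta(\ell/\epsilon)$ times the average and would wreck the $(1+\epsilon)$ bound; and \cref{lem:shortcutting_multiple} preserves distances only for $\tau \in T^\prime\setminus(T^\prime_1\cup\dots\cup T^\prime_{2\ell-4})$, whereas the curves of $T^{\prime\prime}$ lie inside those excluded sets, so their distances to $\sigma^\prime$ are exactly the ones not controlled. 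The paper's fix is a pigeonhole argument: since at least half of $T^\prime$ lies within $2\cost{T^\prime}{c^\ast}/\lvert T^\prime\rvert$ of $c^\ast$ and the exceptional sets cover less than an $\epsilon$-fraction, there is a non-exceptional anchor $\sigma$ with $d_F(\sigma,c^\ast)\le 2\cost{T^\prime}{c^\ast}/\lvert T^\prime\rvert$ and $d_F(\sigma,c^\prime)\le d_F(\sigma,c^\ast)$, which gives each exceptional curve an additive overhead of $(4+\epsilon)\cost{T^\prime}{c^\ast}/\lvert T^\prime\rvert$ and the total $(1+6\epsilon)$ bound. Your remaining steps (sample size accounting for the $\ell$ and $\ln(4(2\ell-4))$ factors, the Markov bound showing at most $\tfrac{\epsilon}{4\ell}\lvert T^\prime\rvert$ curves are far so each $T^\prime_i$ is still hit by a good curve with probability $\ge \epsilon/(4\ell)$ per sample, and the grid-resolution bookkeeping via $\Delta_l$, $\Delta_u$) do match the paper.
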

In the following proof we make use of a case distinction developed by \citet[Proof of Theorem 10]{abhin2020kmedian}, which is a key ingredient to enable the $(1+\epsilon)$-approximation, though the domain of $\epsilon$ has to be restricted to $(0, 0.158]$.
\begin{proof}[Proof of \cref{theo:1l_median_1_guarantee}]
    We assume that $\lvert T^\prime \rvert \geq \frac{1}{\beta} \lvert T \rvert$. Let $n^\prime$ be the number of sampled curves in $S$ that are elements of $T^\prime$. Clearly, $\expected{n^\prime} \geq \sum_{i=1}^{\lvert S \rvert} \frac{1}{\beta} = \frac{\lvert S \rvert}{\beta}$. Also $n^\prime$ is the sum of independent Bernoulli trials. A Chernoff bound (\conferre \cref{lem:bernoulli_trial_bound}) yields:
    \begin{align*}
        \Pr\left[n^\prime < \frac{\lvert S \rvert}{2\beta}\right] \leq \Pr\left[n^\prime < \frac{\expected{n^\prime}}{2}\right] \leq \exp\left(-\frac{1}{4}\frac{\lvert S \rvert}{2\beta}\right) \leq \exp\left( \frac{\ell(\ln(\delta)-\ln(4(2\ell-4)))}{\epsilon} \right) \leq \left(\frac{\delta^\ell}{4^\ell}\right)^{\frac{1}{\epsilon}} \leq \frac{\delta}{8}.
    \end{align*}
    
    In other words, with probability at most $\delta/8$ no subset $S^\prime \subseteq S$, of cardinality at least $\frac{\lvert S \rvert}{2\beta}$, is a subset of $T^\prime$. We condition the rest of the proof on the contrary event, denoted by $\mathcal{E}_{T^\prime}$, namely, that there is a subset $S^\prime \subseteq S$ with $S^\prime \subseteq T^\prime$ and $\lvert S^\prime \rvert \geq \frac{\lvert S \rvert}{2\beta}$. Note that $S^\prime$ is then a uniform and independent sample of $T^\prime$.

    Now, let $c^\ast \in \mathbb{X}^d_\ell$ be an optimal $(1,\ell)$-median for $T^\prime$. The expected distance between $s \in S^\prime$ and $c^\ast$ is \[\expected{d_F(s, c^\ast)\ \vert\ \mathcal{E}_{T^\prime}} = \sum_{\tau \in T^\prime} d_F(c^\ast, \tau) \cdot \frac{1}{\lvert T^\prime \rvert} =  \frac{\cost{T^\prime}{c^\ast}}{\lvert T^\prime \rvert}.\] By linearity we have $\expected{\cost{S^\prime}{c^\ast}\ \vert\ \mathcal{E}_{T^\prime}} = \frac{\lvert S^\prime \rvert}{\lvert T^\prime \rvert} \cost{T^\prime}{c^\ast}$. Markov's inequality yields:
    \begin{align*}
        \Pr\left[ \frac{\delta\lvert T^\prime \rvert}{4\lvert S^\prime \rvert}\cost{S^\prime}{c^\ast} >  \cost{T^\prime}{c^\ast}\ \Big\vert\ \mathcal{E}_{T^\prime}\right] \leq \frac{\delta}{4}.
    \end{align*}
    
    We conclude that with probability at most $\delta/4$ we have $\frac{\delta \lvert T^\prime \rvert}{4\lvert S^\prime \rvert} \cost{S^\prime}{c^\ast} > \cost{T^\prime}{c^\ast}$.
    
    Now, from \cref{lem:shortcutting_multiple} we know that there are $2\ell-4$ subsets $T^\prime_1, \dots, T^\prime_{2\ell-4} \subseteq T^\prime$, of cardinality $\frac{\epsilon \lvert T^\prime \rvert}{2\ell}$ each and which are not necessarily disjoint, such that for every set $W \subseteq T^\prime$ that contains at least one curve $\tau \in T^\prime_i$ for each $i \in \{1, \dots, 2\ell-4\}$, there exists a curve $c^\prime \in \mathbb{X}^d_{2\ell-2}$ which has all of its vertices contained in \[\bigcup\limits_{\tau \in W} \bigcup\limits_{j \in \{1, \dots, \lvert \tau \rvert\}} B(v^{\tau}_j, d_F(\tau, c^\ast))\] and for at least $(1-\epsilon) \lvert T^\prime \rvert$ curves $\tau \in T^\prime \setminus (T^\prime_1 \cup \dots \cup T^\prime_{2\ell-4})$ it holds that $d_F(\tau, c^\prime) \leq d_F(\tau, c^\ast)$.
    
    There are up to $\frac{\epsilon \lvert T^\prime \rvert}{4\ell}$ curves with distance to $c^\ast$ at least $\frac{4\ell \cost{T^\prime}{c^\ast}}{\epsilon \lvert T^\prime \rvert}$. Otherwise the cost of these curves would exceed $\cost{T^\prime}{c^\ast}$, which is a contradiction. Later we will prove that each ball we cover has radius at most $\frac{4\ell \cost{T^\prime}{c^\ast}}{\epsilon \lvert T^\prime \rvert}$. Therefore, for each $i \in \{1, \dots, 2\ell-4\}$ we have to ignore up to half of the curves $\tau \in T^\prime_i$, since we do not cover the balls of radius $d_F(\tau, c^\ast)$ centered at their vertices. For each $i \in \{1, \dots, 2\ell-4\}$ and $s \in S^\prime$ we now have \[\Pr\left[s \in T^\prime_i \wedge d_F(s, c^\ast) \leq \frac{4\ell\cost{T^\prime}{c^\ast}}{\epsilon \lvert T^\prime \rvert} \ \Big\vert\ \mathcal{E}_{T^\prime} \right] \geq \frac{\epsilon}{4\ell}.\] Therefore, by independence, for each $i \in \{1, \dots, 2\ell-4\}$ the probability that no $s \in S^\prime$ is an element of $T^\prime_i$ and has distance to $c^\ast$ at most $\frac{4\ell \cost{T^\prime}{c^\ast}}{\epsilon \lvert T^\prime \rvert}$ is at most $(1-\frac{\epsilon}{4\ell})^{\lvert S^\prime \rvert} \leq \exp\left(-\frac{\epsilon}{4\ell} \frac{4\ell(\ln(4(2\ell-4))-\ln(\delta))}{\epsilon}\right) = \exp\left(\ln\left(\frac{\delta}{4(2\ell-4)}\right)\right) = \frac{\delta}{4(2\ell-4)}$. Also, with probability at most $\delta/4$ \cref{alg:1l_median_6} fails to compute a $34$-approximate $(1,\ell)$-median $c \in \mathbb{X}^d_\ell$ for $S^\prime$, \conferre \cref{coro:1l_median_34}.
    
    Using a union bound over these bad events, we conclude that with probability at least $1-7/8 \delta$ all of the following events occur simultaneously:
    \begin{enumerate}
        \item \label{item1} There is a subset $S^\prime \subseteq S$ of cardinality at least $\lvert S \rvert /(2\beta)$ that is a uniform and independent sample of $T^\prime$,
        \item \label{item2} for each $i \in \{1, \dots, 2\ell-4\}$, $S^\prime$ contains at least one curve from $T^\prime_i$ with distance to $c^\ast$ up to $\frac{4\ell\cost{T^\prime}{c^\ast}}{\epsilon \lvert T^\prime \rvert}$,
        \item \label{item3} \cref{alg:1l_median_6} computes a polygonal curve $c \in \mathbb{X}^d_\ell$ with $\cost{S^\prime}{c^\ast_{S^\prime}} \leq \cost{S^\prime}{c} \leq 34 \cost{S^\prime}{c^\ast_{S^\prime}}$, where $c^\ast_{S^\prime} \in \mathbb{X}^d_\ell$ is an optimal $(1,\ell)$-median for $S^\prime$,
        \item \label{item4} and it holds that $\frac{\delta \lvert T^\prime \rvert}{4 \lvert S^\prime \rvert} \cost{S^\prime}{c^\ast} \leq \cost{T^\prime}{c^\ast}$.
    \end{enumerate}

    Let $B_{c^\ast} = \left\{ \tau \in T^\prime \mid d_F(\tau, c^\ast) \leq \frac{\cost{T^\prime}{c^\ast}}{\epsilon^2 \lvert T^\prime \rvert} \right\}$, $T^\prime_{c^\ast} = T^\prime \cap B_{c^\ast}$ and $B_{c} = \left\{ \tau \in T^\prime \mid d_F(\tau, c) \leq \epsilon \frac{\cost{T^\prime}{c^\ast}}{\lvert T^\prime \rvert} \right\}$. First, note that $\lvert T^{\prime} \setminus B_{c^\ast} \rvert \leq \epsilon^2 \lvert T^\prime \rvert$, otherwise $\cost{T^\prime \setminus B_{c^\ast}}{c^\ast} > \cost{T^\prime}{c^\ast}$, which is a contradiction, and therefore $\lvert T^\prime_{c^\ast} \rvert \geq (1-\epsilon^2) \lvert T^\prime \rvert$. We now distinguish two cases:\\
    
    \textbf{Case 1:} $\lvert T^\prime_{c^\ast} \setminus B_{c} \rvert > 2 \epsilon \lvert T^\prime_{c^\ast} \rvert$
    
    We have $2 \epsilon \lvert T^\prime_{c^\ast} \rvert \geq (1-\epsilon^2)2\epsilon \lvert T^\prime \rvert \geq \epsilon \lvert T^\prime \rvert$, hence $\Pr\left[d_F(s,c) > \epsilon \frac{\cost{T^\prime}{c^\ast}}{\lvert T^\prime \rvert}\ \Big\vert\ \mathcal{E}_{T^\prime} \right] \geq \epsilon$ for each $s \in S^\prime$. Using independence we conclude that with probability at most \[(1-\epsilon)^{\lvert S^\prime \rvert} \leq \exp\left(-\epsilon \frac{4\ell(\ln(4(2\ell-4))-\ln(\delta))}{\epsilon} \right) \leq \frac{\delta^{4\ell}}{4^{4\ell}} \leq \frac{\delta}{8}\] no $s \in S^\prime$ has distance to $c$ greater than $\epsilon\frac{\cost{T^\prime}{c^\ast}}{\lvert T^\prime \rvert}$. Using a union bound again, we conclude that with probability at least $1-\delta$ \cref{item1,item2,item3,item4} occur simultaneously and at least one $s \in S^\prime$ has distance to $c$ greater than $\epsilon\frac{\cost{T^\prime}{c^\ast}}{\lvert T^\prime \rvert}$, hence $\cost{S^\prime}{c} > \epsilon\frac{\cost{T^\prime}{c^\ast}}{\lvert T^\prime \rvert} \Leftrightarrow \frac{\cost{S^\prime}{c}}{\epsilon} > \frac{\cost{T^\prime}{c^\ast}}{\lvert T^\prime \rvert}$ and thus we indeed cover the balls of radius at most $\frac{4\ell \cost{T^\prime}{c^\ast}}{\epsilon \lvert T^\prime \rvert} < \frac{4\ell}{\epsilon} \frac{\cost{S^\prime}{c^\ast}}{\epsilon}$.
    
    In the last step, \cref{alg:1l_median_1_candidates} returns a set $C$ of all curves with up to $2\ell-2$ vertices from the grids, that contains one curve, denoted by $c^{\prime\prime}$ with same number of vertices as $c^\prime$ (recall that this is the curve guaranteed from \cref{lem:shortcutting_multiple}) and distance at most $\frac{\epsilon}{n} \Delta_l \leq \frac{\epsilon}{\lvert T^\prime \rvert} \cost{T^\prime}{c^\ast}$ between every corresponding pair of vertices of $c^\prime$ and $c^{\prime\prime}$. We conclude that $d_F(c^\prime, c^{\prime\prime}) \leq \frac{\epsilon}{\lvert T^\prime \rvert} \cost{T^\prime}{c^\ast}$. Also, recall that $d_F(\tau, c^\prime) \leq d_F(\tau, c^\ast)$ for $\tau \in T^\prime \setminus (T^\prime_1 \cup \dots \cup T^\prime_{2\ell-4})$. Further, $T^\prime$ contains at least $\frac{\lvert T^\prime \rvert}{2}$ curves with distance at most $\frac{2\cost{T^\prime}{c^\ast}}{\lvert T^\prime \rvert}$ to $c^\ast$, otherwise the cost of the remaining curves would exceed $\cost{T^\prime}{c^\ast}$, which is a contradiction, and since $\epsilon < \frac{1}{2}$ there is at least one curve $\sigma \in T^\prime\setminus (T^\prime_1 \cup \dots \cup T^\prime_{2\ell-4})$ with $d_F(\sigma, c^\prime) \leq d_F(\sigma, c^\ast) \leq \frac{2\cost{T^\prime}{c^\ast}}{\lvert T^\prime \rvert}$ by the pigeonhole principle. We can now bound the cost of $c^{\prime\prime}$ as follows:
    \begin{align*}
        \cost{T^\prime}{c^{\prime\prime}}  & ={} \sum_{\tau \in T^\prime} d_F(\tau, c^{\prime\prime}) \leq \sum_{\tau \in T^\prime \setminus (T^\prime_1 \cup \dots \cup T^\prime_{2\ell-4})} \left(d_F(\tau, c^\prime) + \frac{\epsilon}{\lvert T^\prime \rvert} \cost{T^\prime}{c^\ast}\right)\ + \\
        & \ \ \ \ \sum_{\tau \in (T^\prime_1 \cup \dots \cup T^\prime_{2\ell-4})} \left(d_F(\tau, c^\ast) + d_F(c^\ast, \sigma) + d_F(\sigma, c^{\prime}) + d_F(c^\prime, c^{\prime\prime}) \right) \\
        & \leq{} (1+\epsilon) \cost{T^\prime}{c^\ast} + \sum_{\tau \in (T^\prime_1 \cup \dots \cup T^\prime_{2\ell-4})} \left((2+2+\epsilon)\frac{\cost{T^\prime}{c^\ast}}{\lvert T^\prime \rvert} \right) \\
        & \leq{} \cost{T^\prime}{c^\ast} + \epsilon \cost{T^\prime}{c^\ast} + 5\epsilon \cost{T^\prime}{c^\ast} = (1 + 6 \epsilon) \cost{T^\prime}{c^\ast}.
    \end{align*}
    
    \textbf{Case 2:} $\lvert T^\prime_{c^\ast} \setminus B_{c} \rvert \leq 2 \epsilon \lvert T^\prime_{c^\ast} \rvert$
    
    Again, we distinguish two cases:
    \\
    
    \textbf{Case 2.1:} $d_F(c,c^\ast) \leq 4\epsilon \frac{\cost{T^\prime}{c^\ast}}{\lvert T^\prime \rvert}$
    
    We can easily bound the cost of $c$:
    \begin{align*}
        \cost{T^\prime}{c} \leq \sum_{\tau \in T^\prime} (d_F(\tau, c^\ast) + d_F(c^\ast, c)) \leq (1+4\epsilon) \cost{T^\prime}{c^\ast}.
    \end{align*}
        
    \textbf{Case 2.2:} $d_F(c,c^\ast) > 4\epsilon \frac{\cost{T^\prime}{c^\ast}}{\lvert T^\prime \rvert}$
    
    Recall that $\lvert T^\prime_{c^\ast} \rvert \geq (1-\epsilon^2) \lvert T^\prime \rvert$. We have
    \begin{align*}
        \lvert T^\prime \setminus B_{c} \rvert & \leq{} \lvert T^\prime \setminus T^\prime_{c^\ast} \rvert + 2\epsilon \lvert T^\prime_{c^\ast} \rvert = \lvert T^\prime \rvert - (1-2\epsilon) \lvert T^\prime_{c^\ast} \rvert \leq \lvert T^\prime \rvert - (1-2\epsilon)(1-\epsilon^2) \lvert T^\prime \rvert \\
        & = (2\epsilon + \epsilon^2 - 2\epsilon^3) \lvert T^\prime \rvert < \frac{1}{3} \lvert T^\prime \rvert.
    \end{align*}
    
    Hence, $\lvert T^\prime \cap B_{c} \rvert \geq (1-2\epsilon-\epsilon^2 + 2 \epsilon^3) \lvert T^\prime \rvert > \frac{2}{3} \lvert T^\prime \rvert$. Assume we assign all curves to $c$ instead of to $c^\ast$. For $\tau \in T^\prime \cap B_{c}$ we now have decrease in cost $d_F(\tau, c^\ast) - d_F(\tau, c)$, which can be bounded as follows:
    \begin{align*}
        d_F(\tau, c^\ast) - d_F(\tau, c) & \geq{} d_F(\tau, c^\ast) - \epsilon \frac{\cost{T^\prime}{c^\ast}}{\lvert T^\prime \rvert} \geq d_F(c,c^\ast) - d_F(\tau, c) - \epsilon \frac{\cost{T^\prime}{c^\ast}}{\lvert T^\prime \rvert} \\
        & \geq d_F(c, c^\ast) - 2\epsilon \frac{\cost{T^\prime}{c^\ast}}{\lvert T^\prime \rvert} > \frac{1}{2} d_F(c,c^\ast).
    \end{align*}
    
    For $\tau \in T^\prime \setminus B_{c}$ we have an increase in cost $d_F(\tau, c) - d_F(\tau, c^\ast) \leq d_F(c, c^\ast)$. Let the overall increase in cost be denoted by $\alpha$, which can be bounded as follows:
    \begin{align*}
        \alpha < \lvert T^\prime \setminus B_{c} \rvert \cdot d_F(c, c^\ast) - \lvert T^\prime \cap B_{c} \rvert \cdot \frac{d_F(c,c^\ast)}{2}.
    \end{align*}
    
    By the fact that $\lvert T^\prime \setminus B_{c} \rvert < \frac{1}{2} \lvert T^\prime \cap B_{c} \rvert$ for our choice of $\epsilon$, we conclude that $\alpha < 0$, which is a contradiction because $c^\ast$ is an optimal $(1,\ell)$-median for $T^\prime$. Therefore, case 2.2 cannot occur. Rescaling $\epsilon$ by $\frac{1}{6}$ proves the claim.
\end{proof}

We analyse the worst-case running-time of \cref{alg:1l_median_1_candidates} and the number of candidates it returns.

\begin{theorem}
    \label{theo:1l_median_1_running_time}
    \cref{alg:1l_median_1_candidates} has running-time and returns number of candidates $2^{O\left(\frac{(-\ln(\delta))^2 \cdot \beta}{\epsilon^2} + \log(m) \right)}$.
\end{theorem}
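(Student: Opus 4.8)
The plan is to follow the running-time analysis of \cref{alg:1l_median_3_candidates} in \cref{theo:1l_median_3_running_time} almost verbatim, since \cref{alg:1l_median_1_candidates} has the same overall structure and differs from it only in the radii and cell widths of the grids it constructs; throughout, $k$, $\ell$ and $d$ are treated as constants. First I would bound the outer loop: the sample $S$ has size $n_S \in O\!\left(\frac{\beta(-\ln\delta)}{\epsilon}\right)$ (the extra factor $\ell$ and the additive $\ln(4(2\ell-4))$ being constants), and drawing it costs as much; the for-loop ranges over all $S^\prime\subseteq S$ with $\lvert S^\prime\rvert = n_S/(2\beta)$, so it runs
\[
\binom{n_S}{n_S/(2\beta)} \in 2^{O\!\left(\frac{n_S}{2\beta}\log n_S\right)} \subseteq 2^{O\!\left(\frac{(-\ln\delta)^2\beta}{\epsilon^2}\right)}
\]
times, and $\lvert S^\prime\rvert = n_S/(2\beta) \in O\!\left(\frac{-\ln\delta}{\epsilon}\right)$.

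Next I would bound the work done in one iteration. The call to \cref{alg:1l_median_6} costs $O(m\log m(-\ln^2\delta)+m^3\log m(-\ln\delta))$ by \cref{coro:1l_median_34}, and computing $\Delta=\cost{S^\prime}{c}$ via the algorithm of \citet{alt_godau} costs $O(\lvert S^\prime\rvert\, m\log m)$. For the grids, each $\mathbb{G}\!\left(B\!\left(v^s_i,\tfrac{4\ell}{\epsilon^\prime}\Delta_u\right),\tfrac{2\epsilon^\prime}{n\sqrt d}\Delta_l\right)$ covers a ball of diameter $\Theta\!\left(\tfrac{\ell}{(\epsilon^\prime)^2}\Delta\right)$ with cells of width $\Theta\!\left(\tfrac{\epsilon^\prime\delta}{\sqrt d\, n_S}\Delta\right)$, after substituting $\Delta_u=\tfrac1{\epsilon^\prime}\Delta$ and $\Delta_l=\tfrac{2\delta n}{4 n_S}\cdot\tfrac{\Delta}{34}$; hence it contains
\[
\left(\frac{8\ell/(\epsilon^\prime)^2}{\epsilon^\prime\delta/(34\sqrt d\, n_S)}\right)^{\!d} = \left(\frac{272\,\ell\sqrt d\, n_S}{(\epsilon^\prime)^3\delta}\right)^{\!d} \in O\!\left(\frac{\beta^d(-\ln\delta)^d}{\epsilon^{4d}\delta^d}\right)
\]
points, the crucial point being that the unknown scale $\Delta$ cancels. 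There are at most $m$ grids per curve $s$ and $\lvert S^\prime\rvert$ curves, so $\lvert P\rvert \in O\!\left(\poly{m,\beta,\delta^{-1},\epsilon^{-1}}\right)$, and enumerating all polygonal curves with $2\ell-2$ vertices from $P$ produces $O(\lvert P\rvert^{2\ell-2}) \in O\!\left(\poly{m,\beta,\delta^{-1},\epsilon^{-1}}\right)$ candidates and takes that much time since $\ell$ is constant. Hence one iteration contributes $\poly{m,\beta,\delta^{-1},\epsilon^{-1}}$ to both the running time and the number of candidates generated.

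Multiplying the number of iterations by the per-iteration cost then gives running time and candidate-set size $2^{O\!\left(\frac{(-\ln\delta)^2\beta}{\epsilon^2}\right)}\cdot\poly{m,\beta,\delta^{-1},\epsilon^{-1}} = 2^{O\!\left(\frac{(-\ln\delta)^2\beta}{\epsilon^2}+\log m\right)}$, which is the claim. I do not expect a genuine obstacle here: the only points that need care are the cancellation of $\Delta$ in the grid-size estimate and confirming that every remaining factor is genuinely polynomial in $m,\beta,\delta^{-1},\epsilon^{-1}$ once $\ell$ and $d$ are fixed --- exactly as in the proof of \cref{theo:1l_median_3_running_time}.
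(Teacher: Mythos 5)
Your proposal is correct and follows essentially the same route as the paper's proof: bound the number of for-loop iterations by $\binom{n_S}{n_S/(2\beta)} \in 2^{O\left(\frac{(-\ln\delta)^2\beta}{\epsilon^2}\right)}$, observe that the per-iteration work (the call to \cref{alg:1l_median_6}, the cost evaluation, the grid construction where $\Delta$ cancels, and the enumeration of $(2\ell-2)$-vertex curves) is polynomial in $m,\beta,\delta^{-1},\epsilon^{-1}$, and multiply. In fact your grid-size bound $O\!\left(\frac{\beta^d(-\ln\delta)^d}{\epsilon^{4d}\delta^d}\right)$ is slightly more faithful to the radius $\frac{4\ell}{\epsilon^\prime}\Delta_u$ actually used in \cref{alg:1l_median_1_candidates} than the expression appearing in the paper's proof (which reuses the radius from \cref{alg:1l_median_3_candidates}), but this changes nothing asymptotically since both are polynomial.
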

\begin{proof}
    The sample $S$ has size $O\left(\frac{-\ln(\delta) \cdot \beta}{\epsilon}\right)$ and sampling it takes time $O\left(\frac{-\ln(\delta) \cdot \beta}{\epsilon}\right)$. Let $n_S = \lvert S \rvert$. The for-loop runs \[\binom{n_S}{\frac{n_S}{2 \beta}} \in 2^{O\left(\frac{n_S}{2\beta} \log n_S\right)} \subset 2^{O\left(\frac{(-\ln(\delta))^2 \cdot \beta}{\epsilon^2}\right)}\] times. In each iteration, we run \cref{alg:1l_median_6}, taking time $O(m^2 \log(m) (-\ln^2 \delta) + m^3 \log m)$ (\conferre \cref{coro:1l_median_34}), we compute the cost of the returned curve with respect to $S^\prime$, taking time $O\left(\frac{-\ln(\delta)}{\epsilon} \cdot m \log(m)\right)$, and per curve in $S^\prime$ we build up to $m$ grids of size \[\left(\frac{\frac{(1+\epsilon)\Delta}{\epsilon}}{\frac{2\epsilon 2 \delta n \Delta}{n\sqrt{d} 4 \lvert S \rvert}}\right)^d = \left(\frac{\sqrt{d} \lvert S \rvert (1+\epsilon)}{\epsilon^2 \delta}\right)^d \in O\left(\frac{\beta^d(-\ln \delta)^d}{\epsilon^{3d}\delta^d}\right)\] each. \cref{alg:1l_median_1_candidates} then enumerates all combinations of $2\ell-2$ points from up to $\lvert S^\prime \rvert \cdot m$ grids, resulting in \[O\left(\frac{m^{2\ell-2} \beta^{2\ell d-2d+2\ell-2} (-\ln \delta)^{2\ell d-2d+2\ell-2}}{\epsilon^{6\ell d-6d+2\ell-2}\delta^{2\ell d-2d}}\right)\] candidates per iteration of the for-loop. Thus, \cref{alg:1l_median_1_candidates} computes $O\left(\poly{m,\beta,\delta^{-1},\epsilon^{-1}}\right)$ candidates per iteration of the for-loop and enumeration also takes time $O\left(\poly{m,\beta,\delta^{-1},\epsilon^{-1}}\right)$ per iteration of the for-loop.
    
    All in all, we have running-time and number of candidates $2^{O\left(\frac{(-\ln(\delta))^2 \cdot \beta}{\epsilon^2} + \log(m)\right)}$.
\end{proof}

\section{\texorpdfstring{$(1+\epsilon)$-Approximation}{Approximation-Scheme} for \texorpdfstring{$(k,\ell)$-}{(k,l)-}Median}
\label{sec:k-median}

We generalize the algorithm of \citet{DBLP:journals/talg/AckermannBS10} in the following way: instead of drawing a uniform sample and running a problem-specific algorithm on this sample in the candidate phase, we only run a problem-specific ``plugin''-algorithm in the candidate phase, thus dropping the framework around the sampling property. We think that the problem-specific algorithms used in \citep{DBLP:journals/talg/AckermannBS10} do not fulfill the role of a plugin, since parts of the problem-specific operations, e.g. the uniform sampling, remain in the main algorithm. Here, we separate the problem-specific operations from the main algorithm: any algorithm can serve as plugin, if it is able to return candidates for a cluster that takes a constant fraction of the input, where the fraction is an input-parameter of the algorithm and some approximation factor is guaranteed (w.h.p.). The calls to the candidate-finder plugin do not even need to be independent (stochastically), allowing adaptive algorithms.

Now, let $\mathcal{X} = (X,\rho)$ be an arbitrary space, where $X$ is any non-empty (ground-)set and $\rho \colon X \times X \rightarrow \mathbb{R}_{\geq 0}$ is a distance function (not necessarily a metric). We introduce a generalized definition of $k$-median clustering. Let the medians be restricted to come from a predefined subset $Y \subseteq X$. 

\begin{definition}[generalized $k$-median]
    \label{def:generalized_k_median}
    The generalized $k$-median clustering problem is defined as follows, where $k \in \mathbb{N}$ is a fixed (constant) parameter of the problem: given a finite and non-empty set $Z \subseteq X$, compute a set $C$ of $k$ elements from $Y$, such that $\cost{Z}{C} = \sum\limits_{z \in Z} \min\limits_{c \in C} \rho(z,c)$ is minimal.
\end{definition}

The following algorithm, \cref{alg:kl_median}, can approximate every $k$-median problem compatible with \cref{def:generalized_k_median}, when provided with such a problem-specific plugin-algorithm for computing candidates. In particular, it can approximate the $(k,\ell)$-median problem for polygonal curves under the Fr\'echet distance, when provided with \cref{alg:1l_median_3_candidates} or \cref{alg:1l_median_1_candidates}. Then, we have $X = \mathbb{X}^d$, $Y \subseteq \mathbb{X}^d_{\ell} \subseteq \mathbb{X}^d = X$ and $Z \subseteq \mathbb{X}^d_m \subseteq \mathbb{X}^d = X$. Note that the algorithm computes a bicriteria approximation, that is, the solution is approximated in terms of the cost \emph{and} the number of vertices of the center curves, i.e., the centers come from $\mathbb{X}^d_{2\ell-2}$.

\cref{alg:kl_median} has several parameters. The first parameter $C$ is the set of centers found yet and $\kappa$ is the number of centers yet to be found. The following parameters concern only the plugin-algorithm used within the algorithm: $\beta$ determines the size (in terms of a fraction of the input) of the smallest cluster for which an approximate median can be computed, $\delta$ determines the probability of failure of the plugin-algorithm and $\epsilon$ determines the approximation factor of the plugin-algorithm.

\cref{alg:kl_median} works as follows: If it has already computed some centers (and there are still centers to compute) it does \emph{pruning}: some clusters might be too small for the plugin-algorithm to compute approximate medians for them. \cref{alg:kl_median} then calls itself recursively with only half of the input: the elements with larger distances to the centers yet found. This way the small clusters will eventually take a larger fraction of the input and can be found in the \emph{candidate} phase. In this phase \cref{alg:kl_median} calls its plugin and for each candidate that the plugin returned, it calls itself recursively: adding the candidate at hand to the set of centers yet found and decrementing $\kappa$ by one. Eventually, all combinations of computed candidates are evaluated against the original input and the centers that together evaluated best are returned. 

\begin{algorithm}[H]
\caption{Recursive Approximation-Scheme for $k$-Median Clustering\label{alg:kl_median}}
    \begin{algorithmic}[1]
        \Procedure{$k$-Median}{$T, C, \kappa, \beta, \delta, \epsilon$}
            \If{$\kappa = 0$}
                \State \Return $C$
            \EndIf
            \Comment{\textbf{Pruning Phase}}
            \If{$C \neq \emptyset$}
                \State $P \gets $ set of $\left\lfloor\frac{\lvert T \rvert}{2}\right\rfloor$ elements $\tau \in T$, such that $\min\limits_{c \in C} \rho(\tau, c) \leq \min\limits_{c \in C} \rho(\sigma, c)$ for each $\sigma \in T \setminus P$
                \State $C^\prime \gets$ $k$-\textsc{Median}($T \setminus P, C, \kappa, \beta, \delta, \epsilon$)
            \Else
                \State $C^\prime \gets \emptyset$
            \EndIf
            \Comment{\textbf{Candidate Phase}}
            \State $K \gets 1$-\textsc{Median}-\textsc{Candidates}$(T,\beta,\delta/k,\epsilon)$
            \For{$c \in K$}
                \State $C_c \gets k$-\textsc{Median}$(T, C \cup \{c\}, \kappa-1, \beta, \delta, \epsilon)$
            \EndFor
            \State $\mathcal{C} \gets \{ C^\prime \} \cup \bigcup\limits_{c \in K} \{ C_c \}$
            \State \Return $\argmin\limits_{C \in \mathcal{C}} \cost{T}{C}$
        \EndProcedure
    \end{algorithmic}
\end{algorithm}

The quality of approximation and worst-case running-time of \cref{alg:kl_median} is stated in the following two theorems, which we prove further below. The proofs are adaptations of corresponding proofs in~\cite{DBLP:journals/talg/AckermannBS10}. We provide them for the sake of completeness.

\begin{theorem}
    \label{theo:kl_median_guarantee}
    Let $T = \{\tau_1, \dots, \tau_n\} \subseteq X$, $\alpha \in [1, \infty)$ and $1$-\textsc{Median}-\textsc{Candidates} be an algorithm that, given three parameters $\beta \in [1, \infty)$, $\delta, \epsilon \in (0,1)$ and a set $T \subseteq X$, returns with probability at least $1-\delta$ an $(\alpha + \epsilon)$-approximate $1$-median for any $T^\prime \subseteq T$, if $\lvert T^\prime \rvert \geq \frac{1}{\beta} \lvert T \rvert$. 
    
    \cref{alg:kl_median} called with parameters $(T,\emptyset,k,\beta,\delta, \epsilon)$, where $\beta \in (2k, \infty)$ and $\delta, \epsilon \in (0,1)$, returns with probability at least $1-\delta$ a set $C = \{c_1, \dots, c_k\}$ with $\cost{T}{C} \leq (1+\frac{4k^2}{\beta-2k})(\alpha + \epsilon) \cost{T}{C^\ast}$, where $C^\ast$ is an optimal set of $k$ medians for $T$.
\end{theorem}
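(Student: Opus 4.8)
The plan is to follow the recursion-tree analysis of Ackermann, Blömer, and Sohler, adapted to the weaker plugin guarantee and to the bicriteria setting. First I would fix an optimal solution $C^\ast = \{c_1^\ast, \dots, c_k^\ast\}$ and the induced optimal partition $T = T_1^\ast \cup \dots \cup T_k^\ast$, where $T_j^\ast$ is the set of input curves assigned to $c_j^\ast$ (breaking ties arbitrarily). The recursion produces a tree of depth at most $k$; along any root-to-leaf path, the invariant to maintain is that the ``current'' input $T$ passed to a recursive call still contains a large fraction of some not-yet-covered optimal cluster. I would argue by induction on $\kappa$ (the number of centers still to be found) that, conditioned on all relevant plugin calls succeeding, the algorithm finds a good approximate median for each optimal cluster whose current size is at least $\frac{1}{\beta}$ of the current input, and that the pruning branch handles the clusters that are currently too small.

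The key quantitative step is the pruning bound. In the pruning phase we discard the half $P$ of the current input $T$ that is \emph{closest} to the centers $C$ found so far, and recurse on $T \setminus P$. I would show, exactly as in \citet{DBLP:journals/talg/AckermannBS10}, that $\cost{T}{C \cup C_{\mathrm{rest}}} \le \cost{T \setminus P}{C \cup C_{\mathrm{rest}}} + \cost{P}{C}$ and then bound $\cost{P}{C}$ by roughly $\cost{T}{C}$ via the fact that $P$ consists of the elements with the \emph{smaller} distances to $C$ (so $\cost{P}{C} \le \frac{1}{2}\cost{T}{C} \cdot \frac{|P|}{|T\setminus P|}\cdot(\dots)$ — the precise constant comes out of the averaging argument and is what ultimately produces the factor $1 + \frac{4k^2}{\beta - 2k}$). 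Crucially, each pruning step halves the input but leaves the absolute sizes of the not-yet-found clusters unchanged, so after at most $O(\log(\beta))$ prunings any surviving cluster of absolute size at least $\frac{1}{2k}\cdot(\text{original-relevant-size})$ occupies a $\ge \frac{1}{\beta}$ fraction and becomes findable; bounding the total number of pruning steps along any path by the recursion depth and the condition $\beta > 2k$ is where the hypothesis on $\beta$ is used. Since there are at most $k$ branch points and each calls the plugin once with failure parameter $\delta/k$, a union bound over the at most $k$ relevant plugin invocations along the ``good'' path (or more carefully, over the polynomially many nodes of the tree, absorbing the count into the constant) gives overall success probability $\ge 1-\delta$.

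Finally I would assemble the cost: conditioned on success, for each optimal cluster $T_j^\ast$ there is a leaf of the recursion where it was ``captured'' and the plugin returned a center $c_j$ with $\cost{T_j^\ast}{c_j} \le (\alpha+\epsilon)\cost{T_j^\ast}{c_j^\ast}$ (up to the pruning losses accumulated on the path to that leaf, each contributing a multiplicative $(1 + O(1/\beta))$ factor, $k$ of them multiplying to $1 + \frac{4k^2}{\beta-2k}$ after the standard estimate $(1+x)^{k} \le 1 + 2kx$ for small $x$). Summing over $j$ and using that the algorithm's final step returns the best of all candidate combinations $\mathcal{C}$ (hence no worse than this particular good combination) yields $\cost{T}{C} \le (1 + \frac{4k^2}{\beta-2k})(\alpha+\epsilon)\cost{T}{C^\ast}$. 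The main obstacle is the bookkeeping of the pruning-induced error across the recursion path: one must show that discarding halves never destroys a cluster that has not yet been served and that the cumulative multiplicative blow-up is exactly the claimed $1 + \frac{4k^2}{\beta - 2k}$ rather than something larger; this is the technical heart and is where I would mirror the original ABS argument most closely.
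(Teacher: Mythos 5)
Your high-level plan (mirror the recursion-tree analysis of Ackermann et al., exploit the maximality of the recursive input on which an unserved cluster occupies a $\geq \frac{1}{\beta}$ fraction, union bound over $k$ plugin calls with failure parameter $\delta/k$) is the same as the paper's, but the two quantitative steps you sketch are exactly the technical heart and, as stated, they would fail. First, pruning \emph{does} remove elements of not-yet-served optimal clusters: your claims that pruning ``leaves the absolute sizes of the not-yet-found clusters unchanged'' and that one must show it ``never destroys a cluster that has not yet been served'' are not what is needed (and are false in general). What has to be proved is a bound on the cost of precisely those wrongly assigned elements. The paper does this as follows: since $T_i$ is the maximum-cardinality recursive input on which the unserved clusters still occupy a $\geq \frac{1}{\beta}$ fraction, every pruned half $P_{i,j}$ removed before reaching $T_{i+1}$ contains at most a $\frac{2k}{\beta}$-fraction of unserved-cluster elements and hence mostly served ones; and because pruning removes the elements \emph{closest} to the current centers, the average distance of the unserved elements removed in phase $j$ is at most the average distance of the served elements removed in phase $j+1$ (respectively, of those remaining at the end). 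Charging each phase against the next and summing the disjoint pieces yields $\cost{T^\ast_{i+1,k}\cap P_i}{\{c_1,\dots,c_i\}} < \frac{4k}{\beta-2k}\,\cost{T^\ast_{1,i}\cap T_i}{\{c_1,\dots,c_i\}}$. Your proposed estimate ``bound $\cost{P}{C}$ by roughly $\cost{T}{C}$'' is far too lossy: it neither isolates the wrongly assigned elements nor produces an error of order $\frac{k^2}{\beta}$; it would only give a constant-factor blowup.

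Second, the final assembly cannot go through multiplicative per-level losses $(1+O(1/\beta))$ combined with $(1+x)^k\le 1+2kx$: that inequality needs $kx$ small, whereas the theorem only assumes $\beta>2k$, so $x=\frac{4k}{\beta-2k}$ may be arbitrarily large, and even where it applies it does not give the claimed constant. The paper's accounting is additive, not multiplicative: the wrongly assigned cost accumulated between obtaining $c_i$ and $c_{i+1}$ is charged at rate $\frac{4k}{\beta-2k}$ against the already-approximated cost of the served clusters, and summing over $i\in\{1,\dots,k-1\}$ (each cluster term occurring at most $k$ times) gives the additive overhead $\frac{4k^2}{\beta-2k}(\alpha+\epsilon)\cost{T}{C^\ast}$, hence exactly the factor $\left(1+\frac{4k^2}{\beta-2k}\right)(\alpha+\epsilon)$. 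You also leave untreated the case in which an unserved cluster never reaches a $\frac{1}{\beta}$ fraction of any recursive input; the paper handles it by noting that the recursion then bottoms out at a singleton, which can serve as the missing center at zero extra cost while the wrongly assigned cost is bounded as above.
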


\begin{theorem}
    \label{theo:kl_median_running_time}
    Let $T_1(n, \beta, \delta, \epsilon)$ denote the worst-case running-time of $1$-\textsc{Median}-\textsc{Candidates} for an arbitrary input-set $T$ with $\lvert T \rvert = n$ and let $C(n, \beta, \delta, \epsilon)$ denote the maximum number of candidates it returns. Also, let $T_d$ denote the worst-case running-time needed to compute $d$ for an input element and a candidate. 
    
    If $T_1$ and $C$ are non-decreasing in $n$, \cref{alg:kl_median} has running-time $O(C(n,\beta,\delta,\epsilon)^{k+2} \cdot n \cdot T_d + C(n,\beta,\delta,\epsilon)^{k+1} \cdot T_1(n, \beta, \delta, \epsilon))$.
\end{theorem}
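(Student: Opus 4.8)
The plan is to set up a recurrence for the running-time $R(n,\kappa)$ of \textsc{$k$-Median} on an input of size $n$ with $\kappa$ centers still to be found, and then unfold it. In any call, the work that is \emph{not} spent in recursive subcalls consists of: (i) the base-case checks, which are $O(1)$ plus at most $O(n)$ to compare $\kappa$ with $|T|$; (ii) in the pruning phase, selecting the $\lfloor |T|/2\rfloor$ closest elements, which costs $O(n)$ distance evaluations to the at most $k$ current centers, i.e.\ $O(k\cdot n\cdot T_d)=O(n\cdot T_d)$ since $k$ is constant; (iii) one call to \textsc{$1$-Median-Candidates}$(T,\beta,\delta/k,\epsilon)$, costing $T_1(n,\beta,\delta/k,\epsilon)\le T_1(n,\beta,\delta,\epsilon)$ (we may absorb the $\delta/k$ into the bound since $k$ is constant, or simply carry it); (iv) the final evaluation of $|\mathcal{C}|\le C(n,\beta,\delta,\epsilon)+1$ candidate-sets, each a set of at most $k$ curves, against $T$, costing $O(C(n,\beta,\delta,\epsilon)\cdot n\cdot T_d)$. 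So the \emph{local} cost of a single node of the recursion tree is $O\bigl(C(n,\beta,\delta,\epsilon)\cdot n\cdot T_d + T_1(n,\beta,\delta,\epsilon)\bigr)$, using that $T_1$ and $C$ are non-decreasing in $n$ so the values at the root dominate those at every descendant.

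Next I would bound the size of the recursion tree. Each invocation spawns at most $|K|\le C(n,\beta,\delta,\epsilon)$ candidate-branch subcalls (each with $\kappa$ decremented by one) plus at most one pruning subcall (with $\kappa$ unchanged but $|T|$ halved). The key structural observation is that $\kappa$ starts at $k$ and is strictly decreased along every candidate edge, while pruning edges leave $\kappa$ fixed but at least halve the input; since halving can happen at most $\log n$ times consecutively before either a base case ($\kappa\ge|T|$) is hit or $|T|$ becomes trivially small, every root-to-leaf path contains at most $k$ candidate edges and hence the tree has depth $O(k+\log n)$. Counting nodes: fix the sequence of candidate choices (there are at most $C(n,\beta,\delta,\epsilon)^{k}$ such sequences, one factor of $C$ per level of $\kappa$-decrement), and between consecutive candidate choices a chain of pruning calls contributes only a multiplicative $O(\log n)$, which is dominated. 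More carefully, following the Ackermann--Blömer--Sohler analysis, the number of nodes at which a candidate phase is executed is $O(C(n,\beta,\delta,\epsilon)^{k})$ and the number of nodes overall is $O(C(n,\beta,\delta,\epsilon)^{k}\cdot\log n)$; multiplying by the local costs from the previous paragraph and folding the extra $\log n$ and lower-order $C$-powers into one more factor of $C$ (valid since $C\ge 1$) gives total running-time
\[
O\bigl(C(n,\beta,\delta,\epsilon)^{k+2}\cdot n\cdot T_d + C(n,\beta,\delta,\epsilon)^{k+1}\cdot T_1(n,\beta,\delta,\epsilon)\bigr),
\]
as claimed: the $k+2$ exponent absorbs $k$ candidate levels, one factor for the $|\mathcal C|$ candidate evaluations at each node, and one more for the pruning/tree-overhead, while the $T_1$ term picks up only $k+1$ factors because \textsc{$1$-Median-Candidates} is invoked once per node rather than once per candidate.

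I expect the main obstacle to be the bookkeeping of the recursion-tree size — specifically, arguing cleanly that interleaving pruning calls (which do not decrease $\kappa$) does not blow up the node count beyond the stated powers of $C$. The clean way to handle this is to charge each pruning chain to the candidate node that terminates it and to use $|T|\le n$ together with monotonicity of $T_1,C$ in $n$ to uniformly bound every node's local cost by its value at the root; the halving guarantees each such chain has length $O(\log n)$, and since $\log n$ is dominated by any positive power of $C(n,\beta,\delta,\epsilon)$ once $C\ge 2$ (and the $C=1$ case is degenerate and checked separately), it is absorbed into the exponent increment. The rest is the routine distance-count accounting sketched above.
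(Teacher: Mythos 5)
Your setup matches the paper's: the same recurrence structure, the same per-node cost $O\bigl(C(n,\beta,\delta,\epsilon)\cdot n\cdot T_d + T_1(n,\beta,\delta,\epsilon)\bigr)$, and monotonicity of $T_1$ and $C$ in $n$. The gap is exactly where you anticipated it, in the tree bookkeeping, and your patch does not close it. First, the node count is wrong: \emph{every} node along a pruning chain (not only the node terminating it) executes a candidate phase and spawns up to $C(n,\beta,\delta,\epsilon)$ candidate children, so the nodes correspond to interleavings of up to $k$ candidate edges with up to $\log_2 n$ pruning edges; in the worst case this is $\Theta\bigl(\binom{k+\log_2 n}{k}\,C(n,\beta,\delta,\epsilon)^{k}\bigr)=\Theta\bigl(C(n,\beta,\delta,\epsilon)^{k}\log^{k} n\bigr)$ nodes, not $O\bigl(C(n,\beta,\delta,\epsilon)^{k}\log n\bigr)$ --- each of the $k$ gaps between consecutive candidate levels contributes a factor $\Theta(\log n)$, not one factor overall. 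Second, absorbing the resulting $\log$ factors into ``one more power of $C$'' is invalid: $C(n,\beta,\delta,\epsilon)$ need not grow with $n$, and in this paper's instantiations it does not (\cref{theo:1l_median_3_running_time,theo:1l_median_1_running_time} give a candidate count of the form $2^{O((-\ln\delta)^2\beta/\epsilon^2+\log m)}$, independent of $n$), so $\log n\le C(n,\beta,\delta,\epsilon)^{O(1)}$ fails for large $n$. Consequently a ``number of nodes times maximal local cost'' argument, with every node's cost bounded uniformly by its value at the root, cannot yield the stated bound, which contains no $\log n$ factor at all: the $T_1$ term of the local cost does not shrink along pruning chains, while the number of nodes grows polylogarithmically in $n$.

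The paper's proof avoids the logarithms by a different mechanism: an induction on $(n,\kappa)$ with an explicit bound that is \emph{linear in $n$}, namely $T(n,\kappa,\beta,\delta,\epsilon)\le c\,4^{\kappa}\,C(n,\beta,\delta,\epsilon)^{\kappa+1}\,n\,f(n,\beta,\delta,\epsilon)$ where $f(n,\beta,\delta,\epsilon)=\tfrac{1}{cn}T_1(n,\beta,\delta,\epsilon)+T_d\,C(n,\beta,\delta,\epsilon)$. Linearity in $n$ makes the single pruning subcall (input size $n/2$) contribute at most half of the target bound, the slack factor $4^{\kappa}$ makes the $C(n,\beta,\delta,\epsilon)$ candidate subcalls (with $\kappa-1$) contribute at most a quarter, and $\tfrac14+\tfrac12+\tfrac{1}{4^{\kappa}C(n,\beta,\delta,\epsilon)^{\kappa+1}}\le 1$ closes the induction; since $k$ is constant, $4^{k}$ disappears into the $O(\cdot)$, and $\kappa=k$ gives exactly $O\bigl(C(n,\beta,\delta,\epsilon)^{k+2}\,n\,T_d+C(n,\beta,\delta,\epsilon)^{k+1}\,T_1(n,\beta,\delta,\epsilon)\bigr)$. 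So the missing idea is this geometric charging along pruning chains via a potential that decays with the input size, rather than any refinement of the node count; to salvage your route you would have to let the per-node cost decrease along pruning chains instead of bounding it by its root value, which is precisely what the paper's inductive invariant encodes.
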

Now we state our main results, which follow from \cref{theo:1l_median_3_guarantee,theo:1l_median_3_running_time}, respectively \cref{theo:1l_median_1_guarantee,theo:1l_median_1_running_time}, and \cref{theo:kl_median_guarantee,theo:kl_median_running_time}.

\begin{corollary}
    \label{coro:3_approx}
    Given two parameters $\delta, \epsilon \in (0,1)$ and a set $T \subset \mathbb{X}^d_m$ of polygonal curves, \cref{alg:kl_median} endowed with \cref{alg:1l_median_3_candidates} as $1$-\textsc{Median}-\textsc{Candidates} and run with parameters $(T,\emptyset,k,\frac{20k^2}{\epsilon}+2k,\delta, \epsilon/5)$ returns with probability at least $1-\delta$ a set $C \subset \mathbb{X}^d_{2\ell-2}$ that is a $(3+\epsilon)$-approximate solution to the $(k,\ell)$-median for $T$. \cref{alg:kl_median} then has running-time $n \cdot 2^{O\left(\frac{(-\ln(\delta))^2}{\epsilon^3} + \log(m) \right)}$.
\end{corollary}

\begin{corollary}
    \label{coro:1_approx}
    Given two parameters $\delta \in (0,1), \epsilon \in (0, 0.158] $ and a set $T \subset \mathbb{X}^d_m$ of polygonal curves, \cref{alg:kl_median} endowed with \cref{alg:1l_median_1_candidates} as $1$-\textsc{Median}-\textsc{Candidates} and run with parameters $(T,\emptyset,k,\frac{12k^2}{\epsilon}+2k,\delta, \epsilon/3)$ returns with probability at least $1-\delta$ a set $C \subset \mathbb{X}^d_{2\ell-2}$ that is a $(1+\epsilon)$-approximate solution to the $(k,\ell)$-median for $T$. \cref{alg:kl_median} then has running-time $n \cdot 2^{O\left(\frac{(-\ln(\delta))^2}{\epsilon^3} + \log(m)\right)}$.
\end{corollary}

The following proof is an adaption of \cite[Theorem 2.2 - Theorem 2.5]{DBLP:journals/talg/AckermannBS10}. 

\begin{proof}[Proof of Theorem~\ref{theo:kl_median_guarantee}]
    For $k=1$, the claim trivially holds. We now distinguish two cases. In the first case the principle of the proof is presented in all its detail. In the second case we only show how to generalize the first case to $k > 2$.\\
    
    \textbf{Case 1:} $k=2$

    Let $C^\ast = \{c^\ast_1, c^\ast_2 \}$ be an optimal set of $k$ medians for $T$ with clusters $T^\ast_1$ and $T^\ast_2$, respectively, that form a partition of $T$. For the sake of simplicity, assume that $n$ is a power of $2$ and w.l.o.g. assume that $\lvert T^\ast_1 \rvert \geq \frac{1}{2} \lvert T \rvert > \frac{1}{\beta} \lvert T \rvert$. Let $C_1$ be the set of candidates returned by $1$-\textsc{Median}-\textsc{Candidates} in the initial call. With probability at least $1-\delta/k$, there is a $c_1 \in C_1$  with $\cost{T^\ast_1}{c_1} \leq (\alpha+\epsilon) \cost{T^\ast_1}{c^\ast_1}$. We distinguish two cases:
    
    \paragraph{Case 1.1:} There exists a recursive call with parameters $(T^\prime,\{c_1\},1,\beta,\delta,\epsilon)$ and $\lvert T^\ast_2 \cap T^\prime \rvert \geq \frac{1}{\beta} \lvert T^\prime \rvert$.
    
    First, we assume that $T^\prime$ is the maximum cardinality input with $\lvert T^\ast_2 \cap T^\prime \rvert \geq \frac{1}{\beta} \lvert T^\prime \rvert$, occurring in a recursive call of the algorithm. Let $C_2$ be the set of candidates returned by $1$-\textsc{Median}-\textsc{Candidates} in this call. With probability at least $1-\delta/k$, there is a $c_2 \in C_2$  with $\cost{T^\ast_2 \cap T^\prime}{c_2} \leq (\alpha+\epsilon) \cost{T^\ast_2 \cap T^\prime}{\widetilde{c}_2}$, where $\widetilde{c}_2$ is an optimal median for $T^\ast_2 \cap T^\prime$. 
    
    Let $P$ be the set of elements of $T$ removed in the $m \in \mathbb{N}$, $m \leq \log_2(n)$, pruning phases between obtaining $c_1$ and $c_2$. Without loss of generality we assume that $P \neq \emptyset$. For $i \in \{1, \dots, m\}$, let $P_i$ be the elements removed in the $i$\textsuperscript{th} (in the order of the recursive calls occurring) pruning phase. Note that the $P_i$ are pairwise disjoint, we have that $P = \cup_{i=1}^t P_i$ and $\lvert P_i \rvert = \frac{n}{2^{i}}$. Since $T = T^\ast_1 \uplus (T^\ast_2 \cap T^\prime) \uplus (T^\ast_2 \cap P)$, we have
    \begin{align}
        \cost{T}{\{c_1, c_2\}} \leq \cost{T^\ast_1}{c_1} + \cost{T^\ast_2 \cap T^\prime}{c_2} + \cost{T^\ast_2 \cap P}{c_1} \label{eq:cluster2cost} \tag{I}.
    \end{align}
    Our aim is now to prove that the number of elements wrongly assigned to $c_1$, i.e., $T^\ast_2 \cap P$, is small and further, that their cost is a fraction of the cost of the elements correctly assigned to $c_1$, i.e., $T^\ast_1$.
    
    We define $R_0 = T$ and for $i \in \{1, \dots, m\}$ we define $R_i = R_{i-1} \setminus P_i$. The $R_i$ are the elements remaining after the $i$\textsuperscript{th} pruning phase. Note that by definition $\lvert R_i \rvert = \frac{n}{2^i} = \lvert P_i \rvert$. Since $R_{m} = T^\prime $ is the maximum cardinality input, with $\lvert T^\ast_2 \cap T^\prime \rvert \geq \frac{1}{\beta} \lvert T^\prime \rvert$, we have that $\lvert T^\ast_2 \cap R_i \rvert < \frac{1}{\beta} \lvert R_i \rvert$ for all $i \in \{1, \dots, m-1\}$. Also, for each $i \in \{1, \dots, m\}$ we have $P_i \subseteq R_{i-1}$, therefore 
    \begin{align}
        \lvert T^\ast_2 \cap P_i \rvert \leq \lvert T^\ast_2 \cap R_{i-1} \rvert < \frac{1}{\beta} \lvert R_{i-1} \rvert = \frac{2}{\beta} \frac{n}{2^i} \label{eq:wrongassignedupper} \tag{II}
    \end{align}
    and as immediate consequence
    \begin{align}
        \lvert T^\ast_1 \cap P_i \rvert = \lvert P_i \rvert - \lvert T^\ast_2 \cap P_i \rvert > \lvert P_i \rvert - \frac{1}{\beta} \lvert R_{i-1} \rvert = \left(1-\frac{2}{\beta}\right) \frac{n}{2^i}. \label{eq:correctassignedlower} \tag{III}
    \end{align}
    This tells us that mainly the elements of $T^\ast_1$ are removed in the pruning phase and only very few elements of $T^\ast_2$. By definition, we have for all $i \in \{1, \dots, m-1\}$, $\sigma \in P_i$ and $\tau \in P_{i+1}$ that $\rho(\sigma, c_1) \leq \rho(\tau, c_1)$, hence \[\frac{1}{\lvert T^\ast_2 \cap P_i \rvert} \cost{T^\ast_2 \cap P_i}{c_1} \leq \frac{1}{\lvert T^\ast_1 \cap P_{i+1} \rvert} \cost{T^\ast_1 \cap P_{i+1}}{c_1}.\] Combining this inequality with \cref{eq:wrongassignedupper,eq:correctassignedlower} we obtain for $i \in \{1, \dots, m-1\}$:
    \begin{align*}
        & \frac{\beta 2^i}{2n} \cost{T^\ast_2 \cap P_i}{c_1} < \frac{2^{i+1}}{(1-2/\beta)n} \cost{T^\ast_1 \cap P_{i+1}}{c_1} \\
        \Leftrightarrow & \cost{T^\ast_2 \cap P_i}{c_1} < \frac{2^{i+1} 2n}{(1-2/\beta)n\beta 2^i} \cost{T^\ast_1 \cap P_{i+1}}{c_1} = \frac{4}{(\beta-2)} \cost{T^\ast_1 \cap P_{i+1}}{c_1}. \label{eq:costwrongassigned} \tag{IV}
    \end{align*}
    We still need such a bound for $i=m$. Since $\lvert R_m \rvert = \lvert P_m \rvert$ and also $R_m \subseteq R_{m-1}$ we can use \cref{eq:wrongassignedupper} to obtain:
    \begin{align}
        \lvert T^\ast_1 \cap R_m \rvert = \lvert R_m \rvert - \lvert T^\ast_2 \cap R_m \rvert \geq \lvert R_m \rvert - \lvert T^\ast_2 \cap R_{m-1} \rvert > \left(1-\frac{2}{\beta}\right)\frac{n}{2^m}. \label{eq:remaininglower} \tag{V}
    \end{align}
    Also, we have for all $\sigma \in P_m$ and $\tau \in R_m$ that $\rho(\sigma, c_1) \leq \rho(\tau, c_1)$ by definition, thus \[\frac{1}{\lvert T^\ast_2 \cap P_m \rvert} \cost{T^\ast_2 \cap P_m}{c_1} \leq \frac{1}{\lvert T^\ast_1 \cap R_{m} \rvert} \cost{T^\ast_1 \cap R_m}{c_1}.\] We combine this inequality with \cref{eq:wrongassignedupper} and \cref{eq:remaininglower} and obtain:
    \begin{align*}
        & \frac{\beta 2^m}{2n} \cost{T^\ast_2 \cap P_m}{c_1} < \frac{2^m2n}{(1-2/\beta)n\beta2^m} \cost{T^\ast_1 \cap R_m}{c_1} \\
        \Leftrightarrow & \cost{T^\ast_2 \cap P_m}{c_1} < \frac{2}{(\beta-2)} \cost{T^\ast_1 \cap R_m}{c_1}. \label{eq:remainingcostwrongassigned} \tag{VI}
    \end{align*}
    We are now ready to bound the cost of the elements of $T^\ast_2$ wrongly assigned to $c_1$. Combining \cref{eq:costwrongassigned} and \cref{eq:remainingcostwrongassigned} yields:
    \begin{align*}
        \cost{T^\ast_2 \cap P}{c_1} & ={} \sum_{i=1}^m \cost{T^\ast_2 \cap P_i}{c_1} < \frac{4}{\beta-2} \sum_{i=1}^{m-1} \cost{T^\ast_1 \cap P_{i+1}}{c_1} + \frac{2}{\beta-2} \cost{T^\ast_1 \cap R_m}{c_1} \\
        & < \frac{4}{\beta-2} \cost{T^\ast_1}{c_1}.
    \end{align*}
    Here, the last inequality holds, because $P_2, \dots, P_m$ and $R_m$ are pairwise disjoint.
    Also, we have 
    \begin{align*}
        \cost{T^\ast_2 \cap T^\prime}{c_2} \leq (\alpha+\epsilon) \cost{T^\ast_2 \cap T^\prime}{\widetilde{c_2}} \leq (\alpha+\epsilon) \cost{T^\ast_2 \cap T^\prime}{c^\ast_2} \leq (\alpha+\epsilon) \cost{T^\ast_2}{c^\ast_2}.
    \end{align*}
    Finally, using \cref{eq:cluster2cost} and a union bound, with probability at least $1-\delta$ the following holds:
    \begin{align*}
        \cost{T}{\{c_1,c_2\}} & < (\alpha+\epsilon) \cost{T^\ast_1}{c^\ast_1} + (\alpha+\epsilon) \cost{T^\ast_2}{c^\ast_2} + \frac{4}{\beta-2} (\alpha+\epsilon) \cost{T^\ast_1}{c^\ast_1} \\
        & < \left(1+\frac{4}{\beta-2}\right) (\alpha + \epsilon) \cost{T}{C^\ast} = \left(1+\frac{4k}{k\beta-2k}\right) (\alpha + \epsilon) \cost{T}{C^\ast} \\
        & \leq{} \left(1+\frac{4k^2}{\beta-2k}\right) (\alpha + \epsilon) \cost{T}{C^\ast}.
    \end{align*}
    
    \textbf{Case 1.2:} For all recursive calls with parameters $(T^\prime,\{c_1\},1,\beta,\delta,\epsilon)$ it holds that $\lvert T^\ast_2 \cap T^\prime \rvert < \frac{1}{\beta} \lvert T^\prime \rvert$.
    
    After $\log_2(n)$ pruning phases we end up with a singleton $\{\sigma\} = T^\prime$ as input set. Since $\lvert T^\ast_2 \cap T^\prime \rvert < \frac{1}{\beta} \lvert T^\prime \rvert$, it must be that $0 = \lvert T^\ast_2 \cap T^\prime \rvert < \frac{1}{\beta} \lvert T^\prime \rvert = \frac{1}{\beta} < 1$ and thus $\sigma \in T^\ast_1$.
    
    Let $C_2$ be the set of candidates returned by $1$-\textsc{Median}-\textsc{Candidates} in this call. With probability at least $1-\delta/k$ there is a $c_2 \in C_2$ with $\cost{\{\sigma\}}{c_2} \leq (\alpha + \epsilon) \cost{\{\sigma\}}{\widetilde{c}_2} \leq (\alpha + \epsilon) \cost{\{\sigma\}}{c^\ast_1}$, where $\widetilde{c}_2$ is an optimal median for $\{ \sigma \}$. Since $\cost{T^\ast_2 \cap P}{c_1}$ is bounded as in Case 1.1, by a union bound we have with probability at least $1- \delta$:
    \begin{align*}
        \cost{T}{\{c_1, c_2\}} & \leq{} \cost{T^\ast_1 \setminus \{ \sigma \}}{c_1} + \cost{T^\ast_2 \cap P}{c_1} + \cost{\{\sigma\}}{c_2} \\
        & \leq (\alpha + \epsilon) \cost{T^\ast_1}{c^\ast_1} + \cost{T^\ast_2 \cap P}{c_1} \\
        & \leq \left(1+\frac{4}{\beta-2}\right) (\alpha + \epsilon) \cost{T}{C^\ast} \\
        & \leq \left(1+\frac{4k^2}{\beta-2k}\right) (\alpha + \epsilon) \cost{T}{C^\ast}.
    \end{align*}
    
    \textbf{Case 2:} $k > 2$
    
    We only prove the generalization of Case 1.1 to $k > 2$, the remainder of the proof is analogous to the Case 1. For the sake of brevity, for $i \in \mathbb{N}$, we define $[i] = \{1, \dots, i\}$. Let $C^\ast = \{c^\ast_1, \dots, c^\ast_k \}$ be an optimal set of $k$ medians for $T$ with clusters $T^\ast_1, \dots, T^\ast_k$, respectively, that form a partition of $T$. For the sake of simplicity, assume that $n$ is a power of $2$ and w.l.o.g. assume $\lvert T^\ast_1 \rvert \geq \dots \geq \lvert T^\ast_k \rvert$. For $i \in [k]$ and $j \in [k] \setminus [i]$ we define $T^\ast_{i,j} = \uplus_{t=i}^j T^\ast_t$. 
    
    Let $\mathcal{T}_0 = T$ and let $(\mathcal{T}_j = \mathcal{T}_{j-1} \setminus \mathcal{P}_j)_{j=1}^m$ be the sequence of input sets in the recursive calls of the $m \in \mathbb{N}$, $m \leq \log_2(n)$, pruning phases, where $\mathcal{P}_j$ is the set of elements removed in the $j$\textsuperscript{th} (in the order of the recursive calls occurring) pruning phase. Let $\mathcal{T} = \{\mathcal{T}_0\} \cup \{ \mathcal{T}_j \mid j \in [m] \}$. For $i \in [k]$, let $T_i$ be the maximum cardinality set in $\mathcal{T}$, with $\lvert T^\ast_i \cap T_i \rvert \geq \frac{1}{\beta} \lvert T_i \rvert$. Note that by assumption and since $\beta > 2k$, $T_1 = T$ must hold and also $T_j \subset T_i$ for $j \in [k] \setminus [i]$. 
    
    Using a union bound, with probability at least $1-\delta$, for each $i \in [k]$ the call of $1$-\textsc{Median}-\textsc{Candidates} with input $T_i$ yields a candidate $c_i$ with 
    \begin{align*}
        \cost{T^\ast_i \cap T_i}{c_i} \leq (\alpha+\epsilon) \cost{T^\ast_i \cap T_i}{\widetilde{c}_i} \leq (\alpha+\epsilon) \cost{T^\ast_i \cap T_i}{c^\ast_i} \leq (\alpha+\epsilon) \cost{T^\ast_i}{c^\ast_i}, \label{eq:candidatekapprox} \tag{I}
    \end{align*}
    where $\widetilde{c}_i$ is an optimal $1$-median for $T^\ast_i \cap T_i$. Let $C = \{c_1, \dots, c_k\}$ be the set of these candidates and for $i \in [k-1]$, let $P_i = T_{i} \setminus T_{i+1}$ denote the set of elements of $T$ removed by the pruning phases between obtaining $c_{i}$ and $c_{i+1}$. Note that the $P_i$ are pairwise disjoint.
    
    By definition, the sets \[T^\ast_1 \cap T_1, \dots, T^\ast_k \cap T_k, T^\ast_{2,k} \cap P_1, \dots, T^\ast_{k,k} \cap P_{k-1}\] form a partition of $T$, therefore
    \begin{align*}
        \cost{T}{\{ c_1, \dots, c_k \}} & \leq{} \sum_{i=1}^k \cost{T^\ast_i \cap T_i}{c_i} + \sum_{i=1}^{k-1} \cost{T^\ast_{i+1,k} \cap P_{i}}{\{ c_1, \dots, c_{i} \}} \\
        & \leq{} (\alpha + \epsilon) \sum_{i=1}^k \cost{T^\ast_i}{c^\ast_i} + \sum_{i=1}^{k-1} \cost{T^\ast_{i+1,k} \cap P_{i}}{\{ c_1, \dots, c_{i} \}}. \label{eq:costkcluster} \tag{II}
    \end{align*}
    Now, it only remains to bound the cost of the wrongly assigned elements of $T^\ast_{i+1,k}$. For $i \in [k]$, let $n_i = \lvert T_i \rvert$ and w.l.o.g. assume that $P_i \neq \emptyset$ for each $i \in [k-1]$. Each $P_i$ is the disjoint union $\uplus_{j=1}^{m_i} P_{i,j}$ of $m_i \in \mathbb{N}$ sets of elements of $T$ removed in the interim pruning phases and it holds that $\lvert P_{i,j} \rvert = \frac{n_{i}}{2^j}$. We now prove for each $i \in [k-1]$ and $j \in [m_i]$ that $P_i$ contains a large number of elements from $T^\ast_{1,i}$ and only a few elements from $T^\ast_{i+1,k}$. 
    
    For $i \in [k-1]$, we define $R_{i,0} = T_i$ and for $j \in [m_i]$ we define $R_{i,j} = R_{i,j-1} \setminus P_{i,j}$. By definition, $\lvert R_{i,j} \rvert = \frac{n_i}{2^j} = \lvert P_{i,j} \rvert$, $R_{i,j_1} \supset R_{i,j_2}$ for each $j_1 \in [m_i]$ and $j_2 \in [m_i] \setminus [j_1]$, also $R_{i,m_i} = T_{i+1}$. Thus, $\lvert T^\ast_t \cap R_{i,j} \rvert < \frac{1}{\beta} \lvert R_{i,j} \rvert$ for all $i \in [k-1], j \in [m_i]$ and $t \in [k] \setminus [i]$. As immediate consequence we obtain $\lvert T^\ast_{i+1,k} \cap R_{i,j} \rvert \leq \frac{k}{\beta} \lvert R_{i,j} \rvert$. Since $P_{i,j} \subseteq R_{i,j-1}$ for all $i \in [k-1]$ and $j \in [m_i]$, we have
    \begin{align*}
        \lvert T_{i+1,k} \cap P_{i,j} \rvert \leq \lvert T_{i+1,k} \cap R_{i,j-1} \rvert \leq \frac{k}{\beta} \lvert R_{i,j-1} \rvert = \frac{2k}{\beta} \frac{n_i}{2^j}, \label{eq:wronglyassignedupperk} \tag{III}
    \end{align*}
    which immediately yields
    \begin{align*}
        \lvert T_{1,i} \cap P_{i,j} \rvert = \lvert P_{i,j} \rvert - \lvert T_{i+1,k} \cap P_{i,j} \rvert \geq \left(1-\frac{2k}{\beta}\right) \frac{n_i}{2^j}. \label{eq:correctlyassignedlowerk} \tag{IV}
    \end{align*}
    Now, by definition we know that for all $i \in [k-1]$, $j \in [m_i] \setminus \{m_i\}$, $\sigma \in P_{i,j}$ and $\tau \in P_{i,j+1}$ that $\min\limits_{c \in \{c_1, \dots, c_{i}\}}\rho(\sigma, c) \leq \min\limits_{c \in \{c_1, \dots, c_{i}\}}\rho(\tau, c)$. Thus,
    \begin{align*}
        \frac{\cost{T^\ast_{i+1,k} \cap P_{i,j}}{\{c_1, \dots, c_i\}}}{\lvert T^\ast_{i+1,k} \cap P_{i,j} \rvert} \leq \frac{\cost{T^\ast_{1,i} \cap P_{i,j+1}}{\{c_1, \dots, c_i\}}}{\lvert T^\ast_{1,i} \cap P_{i,j+1} \rvert}.
    \end{align*}
    Combining this inequality with \cref{eq:wronglyassignedupperk,eq:correctlyassignedlowerk} yields for $i \in [k-1]$ and $j \in [m_i] \setminus \{m_i\}$:
    \begin{align*}
        & \frac{\beta 2^j}{2kn_i} \cost{T^\ast_{i+1,k} \cap P_{i,j}}{\{c_1, \dots, c_i\}}  \leq \frac{2^{j+1}}{(1-\frac{2k}{\beta})n_i} \cost{T^\ast_{1,i} \cap P_{i,j+1}}{\{c_1, \dots, c_i\}} \\
        \Leftrightarrow & \cost{T^\ast_{i+1,k} \cap P_{i,j}}{\{c_1, \dots, c_i\}} \leq \frac{4k}{\beta-2k} \cost{T^\ast_{1,i} \cap P_{i,j+1}}{\{c_1, \dots, c_i\}} \label{eq:costwronglyassignedk} \tag{V}
    \end{align*}
    For each $i \in [k-1]$ we still need an upper bound on $\cost{T^\ast_{i+1,k} \cap P_{i,m_i}}{\{c_1, \dots, c_i\}}$. Since $\lvert R_{i,m_i} \rvert = \lvert P_{i,m_i} \rvert$ and also $R_{i,m_i} \subseteq R_{i,m_i-1}$ we can use \cref{eq:wronglyassignedupperk} to obtain 
    \begin{align*}
        \lvert T^\ast_{1,i} \cap R_{i,m_i} \rvert = \lvert R_{i,m_i} \lvert - \lvert T^\ast_{i+1,k} \cap R_{i,m_i} \rvert \geq \lvert R_{i,m_i} \lvert - \lvert T^\ast_{i+1,k} \cap R_{i,m_i-1} \rvert > \left(1-\frac{2k}{\beta}\right)\frac{n_i}{2^{m_i}}. \label{eq:remaininglowerk} \tag{VI}
    \end{align*}
    By definition we also know that for all $i \in [k-1]$, $\sigma \in P_{i,m_i}$ and $\tau \in R_{i,m_i}$ that $\min\limits_{c \in \{c_1, \dots, c_{i}\}}\rho(\sigma, c) \leq \min\limits_{c \in \{c_1, \dots, c_{i}\}}\rho(\tau, c)$. Thus, \[\frac{\cost{T^\ast_{i+1,k} \cap P_{i,m_i}}{\{c_1, \dots, c_i \}}}{\lvert T^\ast_{i+1,k} \cap P_{i,m_i} \rvert} \leq \frac{\cost{T^\ast_{1,i} \cap R_{i,m_i}}{\{c_1, \dots, c_i \}}}{\lvert T^\ast_{1,i} \cap R_{i,m_i} \rvert}.\] Combining this inequality with \cref{eq:wronglyassignedupperk,eq:remaininglowerk} yields:
    \begin{align*}
        & \frac{\beta 2^{m_i}}{2kn_i} \cost{T^\ast_{i+1,k} \cap P_{i,m_i}}{\{c_1, \dots, c_i \}} < \frac{2^{m_i}}{(1-\frac{2k}{\beta})n_i} \cost{T^\ast_{1,i} \cap R_{i,m_i}}{\{c_1, \dots, c_i \}} \\
        \Leftrightarrow & \cost{T^\ast_{i+1,k} \cap P_{i,m_i}}{\{c_1, \dots, c_i \}} < \frac{2k}{\beta-2k} \cost{T^\ast_{1,i} \cap R_{i,m_i}}{\{c_1, \dots, c_i \}}. \label{eq:remainingcostwronglyassignedk} \tag{VII}
    \end{align*}
    We can now give the following bound, combining \cref{eq:costwronglyassignedk,eq:remainingcostwronglyassignedk}, for each $i \in [k-1]$:
    \begin{align*}
        \cost{T^\ast_{i+1,k} \cap P_i}{\{c_1, \dots, c_i\}} & ={} \sum_{j=1}^{m_i} \cost{T^\ast_{i+1,k} \cap P_{i,j}}{\{c_1, \dots, c_i\}} \\
        & < \sum_{j=1}^{m_i-1} \frac{4k}{\beta-2k} \cost{T^\ast_{1,i} \cap P_{i,j+1}}{\{c_1, \dots, c_i\}} \\
        & \ \ \ + \frac{2k}{\beta-2k} \cost{T^\ast_{1,i} \cap R_{i,m_i}}{\{c_1, \dots, c_i \}} \\
        & < \frac{4k}{\beta-2k} \cost{T^\ast_{1,i} \cap T_i}{\{c_1, \dots, c_i \}}. \label{eq:costwronglyupper} \tag{VIII}
    \end{align*}
    Here, the last inequality holds, because $P_{i,2}, \dots, P_{i,m_i}$ and $R_{i,m_i}$ are pairwise disjoint subsets of $T_i$.
    
    Now, we plug this bound into \cref{eq:costkcluster}. Note that $T^\ast_j \cap T_i \subseteq T^\ast_j \cap T_j$ for each $i \in [k]$ and $j \in [i]$ by definition. We obtain:
    \begin{align*}
        \cost{T}{\{ c_1, \dots, c_k \}} & \leq{} (\alpha + \epsilon) \sum_{i=1}^k \cost{T^\ast_i}{c^\ast_i} + \sum_{i=1}^{k-1} \cost{T^\ast_{i+1,k} \cap P_{i}}{\{ c_1, \dots, c_{i} \}} \\
        & < (\alpha + \epsilon) \sum_{i=1}^k \cost{T^\ast_i}{c^\ast_i} + \frac{4k}{\beta-2k} \sum_{i=1}^{k-1} \cost{T^\ast_{1,i} \cap T_i}{\{c_1, \dots, c_i \}} \\
        & \leq{} (\alpha + \epsilon) \sum_{i=1}^k \cost{T^\ast_i}{c^\ast_i} + \frac{4k}{\beta-2k} \sum_{i=1}^{k-1} \sum_{t=1}^i \cost{T^\ast_{t} \cap T_i}{c_t} \\
        & \leq{} (\alpha + \epsilon) \sum_{i=1}^k \cost{T^\ast_i}{c^\ast_i} + \frac{4k}{\beta-2k} \sum_{i=1}^{k-1} \sum_{t=1}^i \cost{T^\ast_{t} \cap T_t}{c_t}\\
        & \leq (\alpha + \epsilon) \sum_{i=1}^k \cost{T^\ast_i}{c^\ast_i} + \frac{4k^2}{\beta-2k} \sum_{i=1}^{k-1} \cost{T^\ast_{i} \cap T_i}{c_i} \\
        & \leq{} \left(1+\frac{4k^2}{\beta-2k}\right) (\alpha + \epsilon) \sum_{i=1}^k \cost{T^\ast_i}{c^\ast_i} = \left(1+\frac{4k^2}{\beta-2k}\right) (\alpha + \epsilon) \cost{T}{C^\ast}.
    \end{align*}
    The last inequality follows from \cref{eq:candidatekapprox}.
\end{proof}

The following analysis of the worst-case running-time of \cref{alg:1l_median_1_candidates} is a slight adaption of \cite[Theorem 2.8]{DBLP:journals/talg/AckermannBS10}, which is also provided for the sake of completeness.

\begin{proof}[Proof of Theorem~\ref{theo:kl_median_running_time}]
    For the sake of simplicity, we assume that $n$ is a power of $2$. 
    
    If $\kappa = 0$, \cref{alg:kl_median} has running-time $c_1 \in O(1)$ and if $\kappa \geq n$, \cref{alg:kl_median} has running-time $c_2 \cdot n \in O(n)$.
    
    Let $T(n, \kappa, \beta, \delta, \epsilon)$ denote the worst-case running-time of \cref{alg:kl_median} for input set $T$ with $\lvert T \rvert = n$. If $n > \kappa \geq 1$, \cref{alg:kl_median} has running-time at most $c_3 \cdot (n \cdot T_d + n) \in O(n \cdot T_d)$ to obtain $P$, $T(n/2, \kappa, \beta, \delta, \epsilon)$ for the recursive call in the pruning phase, $T_1(n, \beta, \delta, \epsilon)$ to obtain the candidates, $C(n, \beta, \delta, \epsilon) \cdot T(n, \kappa - 1, \beta, \delta, \epsilon)$ for the recursive calls in the candidate phase, one for each candidate, and $c_4 \cdot n \cdot T_d \cdot C(n, \beta, \delta, \epsilon) \in O(n \cdot T_d \cdot C(n, \beta, \delta, \epsilon))$ to eventually evaluate the candidate sets. Let $c = \max\{c_1, c_2, c_3, c_4\}$. We obtain the following recurrence relation:
    \begin{align*}
        T(n, \kappa, \beta, \delta, \epsilon) \leq
        \begin{cases}
            c & \text{if } \kappa = 0 \\
            c n & \text{if } \kappa \geq n \\
            C(n, \beta, \delta, \epsilon) \cdot T(n, \kappa -1, \beta, \delta, \epsilon) + T(n/2, \kappa, \beta, \delta, \epsilon) \\
            + T_1(n, \beta, \delta, \epsilon) + c n \cdot T_d \cdot C(n, \beta, \delta, \epsilon)) & \text{else}
        \end{cases}.
    \end{align*}
    
    Let $f(n, \beta, \delta, \epsilon) = \frac{1}{cn} \cdot T_1(n, \beta, \delta, \epsilon) +  T_d \cdot C(n, \beta, \delta, \epsilon)$.
    
    We prove that $T(n, \kappa, \beta, \delta, \epsilon) \leq c \cdot 4^{\kappa} \cdot C(n,\beta,\delta,\epsilon)^{\kappa+1} \cdot n \cdot f(n, \beta, \delta, \epsilon)$, by induction on $n, \kappa$.
    
    For $\kappa = 0$ we have $T(n,\kappa, \beta, \delta, \epsilon) \leq c \leq cn \leq  c \cdot 4^0 \cdot C(n,\beta,\delta,\epsilon) \cdot n \cdot f(n, \beta, \delta, \epsilon)$.
    
    For $\kappa \geq n$ we have $T(n,\kappa, \beta, \delta, \epsilon) \leq cn \leq c \cdot 4^{\kappa} \cdot C(n,\beta,\delta,\epsilon)^{\kappa+1} \cdot n \cdot f(n, \beta, \delta, \epsilon)$. 
    
    Now, let $n > \kappa \geq 1$ and assume the claim holds for $T(n^\prime,\kappa^\prime, \beta, \delta, \epsilon)$, for each $\kappa^\prime \in \{0, \dots, \kappa-1\}$ and $n^\prime \in \{1, \dots, n-1\}$. We have:
    \begin{align*}
        T(n, \kappa, \beta, \delta, \epsilon) & \leq{} C(n, \beta, \delta, \epsilon) \cdot T(n, \kappa -1, \beta, \delta, \epsilon) + T(n/2, \kappa, \beta, \delta, \epsilon) \\
        & \ \ \ + T_1(n, \beta, \delta, \epsilon) + cn \cdot T_d \cdot C(n, \beta, \delta, \epsilon) \\
        & \leq{} C(n, \beta, \delta, \epsilon) \cdot c \cdot 4^{\kappa-1} \cdot C(n,\beta,\delta,\epsilon)^{\kappa} \cdot n \cdot f(n, \beta, \delta, \epsilon) \\
        & \ \ \ + c \cdot 4^{\kappa} \cdot C(n/2,\beta,\delta,\epsilon)^{\kappa+1} \cdot \frac{n}{2} \cdot f(n/2, \beta, \delta, \epsilon) \\
        & \ \ \ + cn \cdot f(n, \beta, \delta, \epsilon) \\
        & \leq{} \left( \frac{1}{4} + \frac{1}{2} + \frac{1}{4^\kappa C(n,\beta,\delta,\epsilon)^{\kappa+1}}\right) c \cdot 4^{\kappa} \cdot C(n,\beta,\delta,\epsilon)^{\kappa+1} \cdot n \cdot f(n, \beta, \delta, \epsilon) \\
        & \leq{} c \cdot 4^{\kappa} \cdot C(n,\beta,\delta,\epsilon)^{\kappa+1} \cdot n \cdot f(n, \beta, \delta, \epsilon).
    \end{align*}
    
    The last inequality holds, because $\frac{1}{4^\kappa C(n,\beta,\delta,\epsilon)^{\kappa+1}} \leq \frac{1}{4}$, and the claim follows by induction.
\end{proof}

\section{Conclusion}

We have developed bi-criteria approximation algorithms for $(k,\ell)$-median clustering of polygonal curves under the Fr\'echet distance. While it showed to be relatively easy to obtain a good approximation where the centers have up to $2\ell$ vertices in reasonable time, a way to obtain good approximate centers with up to $\ell$ vertices in reasonable time is not in sight. This is due to the continuous Fr\'echet distance: the vertices of a median need not be anywhere near a vertex of an input-curve, resulting in a huge search-space. If we cover the whole search-space by, say grids, the worst-case running-time of the resulting algorithms become dependent on the arc-lengths of the input-curves edges, which is not acceptable. We note that $g$-coverability of the continuous Fréchet distance would imply the existence of sublinear size $\epsilon$-coresets for $(k,\ell)$-center clustering of polygonal curves under the Fr\'echet distance. It is an interesting open question, if the $g$-coverability holds for the continuous Fr\'echet distance.
In contrast to the doubling dimension, which was shown to be infinite even for curves of bounded complexity~\cite{DBLP:conf/soda/DriemelKS16}, the VC-dimension of metric balls under the continuous Fr\'echet distance is bounded in terms of the complexities $\ell$ and $m$ of the curves~\cite{Driemel19}. Whether this bound can be combined with the framework by Feldman and Langberg~\cite{Feldman2011} to achieve faster approximations for the $(k,\ell)$-median problem under the continuous Fr\'echet distance is an interesting open problem. The general relationship between the VC-dimension of range spaces derived from metric spaces and their doubling properties is a topic of ongoing research, see for example \citet{DBLP:conf/focs/HuangJLW18}.

\bibliography{bibliography}

\end{document}